\newtheorem{theorem}{Theorem}[section]
\newtheorem{corollary}{Corollary}[section]
\newtheorem{lemma}[theorem]{Lemma}
\newtheorem{remark}{Remark}[section]
\newtheorem{example}{Example}[section]
\theoremstyle{definition}
\newtheorem{definition}{Definition}[section]
\title{Locally repairable convertible codes with optimal access costs}
\date{} % Leave empty to omit a date
\author{Xiangliang Kong
\thanks{Xiangliang Kong is with the Department of Electrical Engineering-Systems, Tel Aviv University, Tel Aviv-Yafo 6997801, Israel ({rongxlkong@gmail.com}). This work was supported by the European Research
Council (ERC) under Grant 852953.}}
\begin{document}

\maketitle
\begin{abstract}
Modern large-scale distributed storage systems use erasure codes to protect against node failures with low storage overhead. In practice, the failure rate and other factors of storage devices in the system may vary significantly over time, and leads to changes of the ideal code parameters. To maintain the storage efficiency, this requires the system to adjust parameters of the currently used codes. The changing process of code parameters on encoded data is called code conversion. %Maturana and Rashmi initiated the systematic study of code conversion for storage codes, and showed that one can achieve code conversions with significantly less access cost than the default approach of re-encoding.

As an important class of storage codes, locally repairable codes (LRCs) can repair any codeword symbol using a small number of other symbols. This feature makes LRCs highly efficient for addressing single node failures in the storage systems. In this paper, we investigate the code conversions for locally repairable codes in the merge regime. We establish a lower bound on the access cost of code conversion for general LRCs and propose a general construction of LRCs that can perform code conversions with access cost matching this bound. This construction provides a family of LRCs together with optimal conversion process over the field of size linear in the code length. 
\end{abstract}

\section{Introduction}

Large-scale cluster storage systems employ erasure codes to protect against node failures with low storage overhead \cite{Huang12,SAPDVCB13,Apache}. In such scenarios, a set of $k$ data symbols
is encoded as a codeword of length $n$ using an $[n,k]$ code and the $n$ codeword symbols are distributed across $n$ different storage nodes. The ratio $\frac{k}{n}$, referred to as the code rate, is chosen according to the failure rate of storage devices. In practice, the failure rate of storage devices in large-scale storage systems vary significantly over time. However, the rate of the erasure code employed in the system usually remains fixed. Consequently, as time goes by, storage systems utilizing erasure codes with static rates will either be overly resource-consuming, overly risky, or a mix of the two. To address this problem, \cite{KRG19} shows that changing the code rate over time in response to the variations of failure rates of storage devices yields substantial savings in storage space and operating costs. In this approach, one need to convert already-encoded data from an $[n^I,k^I]$ code to its encoding under another code with different parameters $[n^F,k^F]$. This process is known as \emph{code conversion} \cite{MR22a}, the code with parameters $[n^I,k^I]$ is called the \emph{initial code} and the code with parameters $[n^F,k^F]$ is called the \emph{final code}.

The default approach for code conversion is to re-encode the data from the initial code according to the finial code. However, such an approach requires accessing a large number of symbols, decoding the entire data, transferring the data over the network and then re-encoding it. This causes substantial I/O load and adversely affects the efficiency of the system \cite{KMSYRG20}. In the pursuit of designing codes that allow for efficient code conversions, Maturana and Rashmi introduced \emph{convertible codes} \cite{MR22a}. Their work, along with subsequent studies\cite{MMR20,MR23a,MR23b}, focused on conversions of MDS codes and MDS array codes, and obtained several fundamental results in the theoretical study of code conversions for storage codes.

As another important storage codes, locally repairable codes (LRCs) can repair any codeword symbol using only a small number of other symbols. More formally, an $(n,k,r)$ LRC is a code that produces an $n$-symbol codeword from $k$ information symbols, such that for any symbol within the codeword, one can use at most $r$ other symbols to recover it. This feature makes LRCs notably efficient for large-scale storage systems, particularly in scenarios involving single node failures. Due to their applications in distributed and cloud storage systems, LRCs have become one of the rapidly developing topics in coding theory over the past decade. The study of LRCs encompasses a broad range of topics, including constructions (e.g., \cite{HCL07,TB14,BT2017,LMX19,JMX20,CFMsT22}), bounds (e.g., \cite{GHSY12,TBF16,ABHMT18,MG19,WZL19,Roth22}), and diverse variations (e.g., \cite{PKLK12,WZ14,GXY19,CMST20,CMST20b,KWG21,XC22}).

In this paper, we focus on code conversion problem for LRCs in the merge regime. The merge regime corresponds to conversions where multiple codewords are merged into a single codeword. Specifically, we establish a lower bound on the number of accessed symbols during the conversion process for general LRCs in the merge regime, and we propose a general construction of LRCs that can perform code conversions with access cost matching this bound. It's noteworthy that during the preparation of this paper, Maturana and Rashmi \cite{MR23b} studied conversion problem for LRCs with information locality. Based on piggybacking frameworks, they proposed a construction technique for designing LRCs that can perform code conversion at a lower cost than the default approach. In contrast to their work, we focus on LRCs with all-symbol locality. Moreover, our construction is from the algebraic coding perspective, and it explicitly provides a family of LRCs together with optimal conversion process over the field of size linear in the code length.

\subsection{The Setup and Related Results}

Before formally defining and discussing the code conversion problem for LRCs, we begin with the basic definition and bounds for LRCs.

For positive integers $n$ and $1\leq a<b\leq n$, we denote $[n]$ as the set of integers $\{1,2,\ldots,n\}$ and $[a,b]$ as $\{a,a+1,\ldots,b\}$. Let $q$ be a prime power, we denote $\mathbb{F}_q$ as the finite field of $q$ elements. We say that $\mathcal{C}$ is an $(n,k)_q$ code if $\mathcal{C}\subseteq\mathbb{F}_q^n$ and $|\mathcal{C}|=q^k$. For a vector $\mathbf{v}\in \mathbb{F}_q^n$ and a subset $S\subseteq [n]$, we denote $\mathbf{v}|_{S}$ as the vector obtained by removing the coordinates outside $S$. Let $\mathcal{C}$ be an $(n,k)_q$ code and $S$ be a subset of $[n]$, the restriction of $\mathcal{C}$ to $S$ is defined as $\mathcal{C}|_{S}=\{\mathbf{c}|_{S}: \mathbf{c}\in \mathcal{C}\}$.

\begin{definition}(Locally repairable codes, \cite{GHSY12,PD14})\label{LRCs}
Let $r\leq k\leq n$ be positive integers. We say that a code $\mathcal{C}$ with parameters $(n,k)$ over alphabet $\Sigma$ has locality $r$ if for every codeword $\mathbf{c}\in \mathcal{C}$ and every $i\in [n]$, there exists a subset of coordinates $I_i\subset[n]\setminus \{i\}$ with $|I_i|\leq r$ such that $\mathbf{c}(i)$ can be recovered by symbols from $\mathbf{c}|_{I_i}$. The subset $I_i$ is called a recovering set for $i$ and the $\mathcal{C}$ is called an \emph{$(n,k,r)$-LRC}.
\end{definition}

As a natural generalization of the classical Singleton bound, \cite{GHSY12} and \cite{PD14} proved a bound on the minimum distance of an $(n,k,r)$-LRC. Here, we refer to the version from \cite{TB14}.

\begin{theorem}\cite{TB14}\label{Singleton_bound}
Let $\mathcal{C}$ be an $(n,k,r)$-LRC of cardinality $|\Sigma|^k$ over alphabet $\Sigma$, then:
\begin{itemize}
    \item The rate of $\mathcal{C}$ satisfies $\frac{k}{n}\leq \frac{r}{r+1}$.
    \item The minimum distance of $\mathcal{C}$ satisfies $d\leq n-k-\lceil\frac{k}{r}\rceil+2$.
\end{itemize}
Moreover, an $(n,k,r)$-LRC that achieves the bound on the distance with equality is called an optimal LRC.
\end{theorem}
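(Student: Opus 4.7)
The plan is to follow the classical greedy technique of Gopalan et al.\ \cite{GHSY12}, Papailiopoulos--Dimakis \cite{PD14}, and Tamo--Barg \cite{TB14}: iteratively build a coordinate set $I \subseteq [n]$ by absorbing local recovering groups, exploiting the fact that each such group of at most $r+1$ coordinates contributes at most $r$ to the dimension of $\mathcal{C}|_I$. Both the rate bound and the distance bound will follow from tracking the growth rates of $|I|$ and $\dim \mathcal{C}|_I$ during this process.

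For the distance bound, I would build $I$ starting from $I_0 = \emptyset$ as follows. At each step $t \geq 1$, so long as $\dim \mathcal{C}|_{I_{t-1}} < k$, I choose a coordinate $i_t \in [n] \setminus I_{t-1}$ whose symbol is not already determined by the restriction to $I_{t-1}$, take a recovering set $\mathcal{R}(i_t)$ for $i_t$ of size at most $r$, and set $I_t = I_{t-1} \cup \{i_t\} \cup \mathcal{R}(i_t)$. Since the at most $r+1$ coordinates of $\{i_t\} \cup \mathcal{R}(i_t)$ satisfy a nontrivial linear relation coming from the local recovery property, the dimension of $\mathcal{C}|_I$ increases by at most $r$ at each step while $|I|$ increases by at most $r+1$. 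Stopping just before the dimension reaches $k$, after $\lceil k/r\rceil - 1$ full iterations (with the last iteration tuned so that the rank lands exactly at $k-1$), yields a set $I$ with $\dim \mathcal{C}|_I \leq k-1$ and $|I| \geq k + \lceil k/r\rceil - 2$. The projection onto $I$ is then not injective, so there is a nonzero codeword $\mathbf{c} \in \mathcal{C}$ supported in $[n]\setminus I$, whose Hamming weight is at most $n - |I| \leq n - k - \lceil k/r\rceil + 2$, which gives the desired distance bound.

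The rate bound follows from an analogous argument: extending the greedy procedure until $I = [n]$ covers the $n$ coordinates by a collection of at most $\lceil n/(r+1)\rceil$ local groups, each contributing at most $r$ to the rank, so that $k = \dim \mathcal{C} \leq \frac{r}{r+1} n$, i.e., $k/n \leq r/(r+1)$.

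The main obstacle will be the bookkeeping in the greedy step when $\mathcal{R}(i_t)$ overlaps with $I_{t-1}$: in that case the size increase is strictly less than $r+1$, and one must verify that the dimension increase is correspondingly smaller, so that the invariant $|I_t| - \dim \mathcal{C}|_{I_t} \geq t$ is preserved throughout. A careful case analysis---splitting on whether $\mathcal{R}(i_t) \cap I_{t-1}$ is empty, a proper subset of $\mathcal{R}(i_t)$, or all of $\mathcal{R}(i_t)$---together with a separate treatment of the last iteration (depending on the residue of $k$ modulo $r$) will be needed to recover the exact $\lceil k/r\rceil$ term rather than a weaker floor or a rounding loss.
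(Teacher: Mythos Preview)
The paper does not prove this theorem at all: Theorem~\ref{Singleton_bound} is quoted from \cite{TB14} (and originally \cite{GHSY12,PD14}) as a background result, with no proof supplied. So there is nothing in the paper to compare your proposal against.

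That said, your outline for the distance bound is the standard greedy argument and is correct. One small issue in your rate-bound paragraph: the claim that the greedy cover uses \emph{at most} $\lceil n/(r+1)\rceil$ groups is the wrong direction---a cover by sets of size at most $r+1$ uses \emph{at least} that many, which does not by itself bound $k$ from above. The clean way is to carry the same invariant $|I_t|-\dim\mathcal{C}|_{I_t}\ge t$ all the way to $I_m=[n]$, yielding $n-k\ge m\ge \lceil n/(r+1)\rceil$, hence $k/n\le r/(r+1)$.
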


Next, we present the formal framework introduced in \cite{MR22a} for studying code conversions.

Let the \emph{initial code} $\mathcal{C}^{I}$ be an $(n^{I},k^{I})_q$ code and the \emph{final code} $\mathcal{C}^{F}$ be an $(n^{F},k^{F})_q$ code. Consider a message vector of length $M=lcm(k^{I},k^{F})$ over $\mathbb{F}_q$, denoted by $\mathbf{m}\in \mathbb{F}_q^{M}$. Clearly, to encode $\mathbf{m}$, we need $\lambda^{I}=M/k^{I}$ codewords in $\mathcal{C}^{I}$ and $\lambda^{F}=M/k^{F}$ codewords in $\mathcal{C}^{F}$. To specify the information encoded in the codewords involved in conversion process, we define partitions $\mathcal{P}^I$ and $\mathcal{P}^{F}$ of $[M]$ as: $\mathcal{P}^{I}=\{P_1^{I},\ldots,P_{\lambda_{I}}^{I}\}$ consists of $\lambda_{I}$ disjoint subsets, each of size $k^I$ and $\mathcal{P}^{F}=\{P_1^{F},\ldots,P_{\lambda_{F}}^{F}\}$ consists of $\lambda_{F}$ disjoint subsets, each of size $k^F$. Then, for every $i\in [\lambda^I]$, denote $\mathbf{c}_{i}=\mathcal{C}^{I}(\mathbf{m}|_{P_i^I})$ as the \emph{initial codeword} in $\mathcal{C}^{I}$ that encodes the message in $\mathbf{m}$ indexed by $P_i^{I}$ and for every $j\in [\lambda^F]$, denote $\mathbf{d}_{j}=\mathcal{C}^{F}(\mathbf{m}|_{P_j^F})$ as the \emph{final codeword} in $\mathcal{C}^{F}$ that encodes the message in $\mathbf{m}$ indexed by $P_j^{F}$. Then, the conversion procedure from $\mathcal{C}^{I}$ to $\mathcal{C}^{F}$ is defined as follows.

\begin{definition}(Code Conversion, \cite{MR22a})\label{def1}
A \emph{conversion} from the initial code $\mathcal{C}^{I}$ to the final code $\mathcal{C}^{F}$ with initial partition $\mathcal{P}^I$ and the final partition $\mathcal{P}^{F}$ is a procedure, denoted by $T_{\mathcal{C}^{I}\rightarrow \mathcal{C}^{F}}$, that for any $\mathbf{m}\in \mathbb{F}_{q}^{M}$, takes the set of initial codewords $\{\mathbf{c}_{i}: 1\leq i\leq \lambda^{I}\}$ that encodes $\mathbf{m}$ as inputs, and output the set of final codewords $\{\mathbf{d}_{i}: 1\leq i\leq \lambda^{F}\}$ that encodes $\mathbf{m}$.
\end{definition}

\begin{definition}(Convertible Code, \cite{MR22a})\label{def2}
An $(n^{I},k^{I};n^{F},k^{F})$ convertible code over $\mathbb{F}_q$ is defined by:
\begin{itemize}
    \item [1.] A pair of codes $(\mathcal{C}^{I},\mathcal{C}^{F})$ where $\mathcal{C}^{I}$ is an $(n^I,k^I)_q$ code and $\mathcal{C}^{F}$ is an $(n^F,k^F)_q$ code.
    \item [2.] A pair of partitions $\mathcal{P}^I$ and $\mathcal{P}^{F}$ of $[M]$ ($M=lcm(k^{I},k^{F})$) such that each subset in $\mathcal{P}^I$ is of size $k^I$ and each subset in $\mathcal{P}^F$ is of size $k^F$.
    \item [3.] A conversion procedure $T_{\mathcal{C}^{I}\rightarrow \mathcal{C}^{F}}$ that on input $\{\mathbf{c}_{i}: 1\leq i\leq \lambda^{I}\}$ outputs $\{\mathbf{d}_{i}: 1\leq i\leq \lambda^{F}\}$.%\footnote{The message vector $\mathbf{m}$ will usually be omitted if there is no confusion.}
\end{itemize}
\end{definition}

When both $\mathcal{C}^{I}$ and $\mathcal{C}^{F}$ are MDS codes, the corresponding convertible code $(\mathcal{C}^{I},\mathcal{C}^{F})$ is called an MDS convertible code. When $\mathcal{C}^{I}$ is an LRC with parameters $(n^I,k^I,r^I)$ and $\mathcal{C}^{F}$ is an LRC with parameters $(n^F,k^F,r^F)$, the corresponding convertible code $(\mathcal{C}^{I},\mathcal{C}^{F})$ is called an locally repairable convertible code (LRCC) with parameters $(n^{I},k^{I},r^{I};n^{F},k^{F},r^{F})$.

Next, we define the access cost of code conversion, which measures the number of symbols that are affected by the conversion procedure.

\begin{definition}(Access Cost, \cite{MR22a})\label{def3}
The \emph{read access cost} and \emph{write access cost} are defined as the total number of symbols read 
and written during the conversion procedure, respectively. The \emph{access cost} of a conversion
procedure is the sum of its read and write access costs. The access cost of a convertible code is the access cost of its conversion procedure.
\end{definition}

To specify the transfer of codeword symbols during the conversion, we classify them into the following three categories:
\begin{itemize}
    \item [(1)] \emph{remaining symbols}, which refers to symbols in the initial codewords that remain as is in the final codewords;
    \item [(2)] \emph{accessed symbols}, which refers to symbols of the initial codewords that are accessed (read);
    \item [(3)] \emph{new symbols}, which refers to symbols in the final codewords which are not remaining from initial codewords.
\end{itemize}
%According to this classification, to minimize the access cost, the intuition is to keep the unchanged symbols as many as possible.

As two fundamental cases of code conversions, \emph{Merge regime} and \emph{split regime} were introduced in \cite{MR22a} and \cite{MR22b}, separately. In the merge regime, $k^F=\zeta k^I$ for some integer $\zeta\geq 2$. As consequence, the length of the message vector $\mathbf{m}$ is $M=lcm(k^{I},k^{F})=k^F$. Thus, the number of initial codewords is $\lambda^I=\zeta$ and the number of final codewords is $\lambda^F=1$. In the split regime, $k^I=\zeta k^F$. Thus, the number of initial codewords is $\lambda^I=1$ and the number of final codewords is $\lambda^F=\zeta$. As shown in \cite{MR22a}, in the merge regime, all pairs of initial and final partitions $(\mathcal{P}^{I},\mathcal{P}^{F})$ are equivalent. Therefore, in the following context, we always assume that $\mathcal{P}^{I}=\{[k^I],[k^I+1,2k^I],\ldots,[(\zeta-1)k^I+1,\zeta k^I]\}$ and $\mathcal{P}^{F}=\{[\zeta k^I]\}$ for all $(n^I,k^I;n^F,k^F=\zeta k^I)$ convertible codes.
%In addition, due to their maximum error-correcting capability and their applications in practical storage systems, we also limit our attention to conversions of optimal linear locally repairable codes, i.e., we assume that both $\mathcal{C}^{I}$ and $\mathcal{C}^{F}$ are optimal linear locally repairable codes (LRCs).

Maturana and Rashmi \cite{MR22a} proved the following lower bound on the access cost of linear MDS convertible codes in the merge regime. They also provided two constructions and showed that this lower bound is optimal. Their construction for general parameters requires the field size $q$ to be at least $\Omega(2^{\zeta\cdot (n^I)^3})$. Conditioning on the existence of certain superregular Hankel arraires, they further reduce this requirement to $q\geq \max\{n^I,n^F\}$ when $n^F-k^F\leq \lfloor\frac{n^I-k^I}{\zeta}\rfloor$.
\begin{theorem}\label{thm1}
For all linear $(n^I,k;n^F,\zeta k)$ MDS convertible codes, the read access cost of conversion is at least $\zeta\cdot \min\{k,n^F-\zeta k\}$ and the write access cost is at least $n^F-\zeta k$. Further, if $n^I-k<n^F-\zeta k$, the read access cost is at least $\zeta k$.
\end{theorem}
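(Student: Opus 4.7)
The plan is to translate every claim into a statement about linear dependencies among columns of the generator matrices and then exploit the MDS structure. Fix a $\zeta k\times n^F$ generator $G^F$ of $\mathcal{C}^F$ and $k\times n^I$ generators $G^I_i$ for the $i$-th initial codeword. A remaining final position $j$ forces the column $g_j^F\in\mathbb{F}_q^{\zeta k}$ to be supported in a single block $P_i^I$ with restriction equal to a column of $G^I_i$; all other columns are new. Linearity implies the symbols read from $\mathbf{c}_i$ span a subspace $R_i\subseteq\mathbb{F}_q^{P_i^I}$ of dimension at most $r_i$, and every new symbol is computable if and only if each new column lies in $\bigoplus_i R_i$; equivalently, $R_i\supseteq U_i:=\mathrm{span}\{g_{j,i}:j\text{ new}\}$, giving $r_i\ge\dim U_i$.

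For the write bound, the remaining columns lie in $\bigoplus_i\mathbb{F}_q^{P_i^I}$, and no block's $k$-dimensional native subspace can contain more than $k$ of them; otherwise $k+1$ columns of $G^F$ are dependent and can be completed to a dependent $\zeta k$-subset, contradicting MDS. Hence at most $\zeta k$ symbols remain, giving at least $n^F-\zeta k$ new ones. For the basic read bound with $r^F:=n^F-\zeta k$, I would select a $\zeta k$-subset of columns of $G^F$ and project to $\mathbb{F}_q^{P_i^I}$. If $r^F\le k$, taking all $r^F$ new columns, all $k$ remaining columns from every block $j\ne i$, and $k-r^F$ remaining columns from block $i$ gives a $\zeta k$-subset whose MDS-independence forces the projection to $\mathbb{F}_q^k$ to be surjective; since the $k-r^F$ block-$i$ remainders contribute only $k-r^F$ dimensions, the $r^F$ new projections span an $r^F$-dimensional complement, so $\dim U_i\ge r^F$. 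A symmetric selection with no block-$i$ remainders and $k$ new columns handles $r^F\ge k$ and yields $\dim U_i=k$. Summing over blocks gives $r\ge\zeta\min(k,r^F)$.

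The main difficulty is the additional bound under $n^I-k<r^F$, where I need more structural information on $U_i$ than its dimension. I would first strengthen the previous argument: dropping all $r^F$ remaining columns of block $i$ forces $U_i$ to intersect trivially every $(k-r^F)$-dimensional coordinate subspace in the basis $\{v_\ell\}_{\ell\in R_i^{\mathrm{rem}}}$ of columns of $G^I_i$ at the $k$ remaining positions, so $U_i$ is an MDS $[k,r^F,k-r^F+1]$ code in this basis. Next, for any read set $S\subseteq[n^I]$ of size $r_i=a+b$ with $b=|S\cap R_i^{\mathrm{rem}}|$ and $a=|S\setminus R_i^{\mathrm{rem}}|\le n^I-k<r^F$, MDS of $\mathcal{C}^I$ identifies $R_i$ as the set of $x\in\mathbb{F}_q^k$ whose restriction to $R_i^{\mathrm{rem}}\setminus S$ lies in the column span of a $(k-b)\times a$ submatrix $N^{\mathrm{sub}}$ of the change-of-basis matrix from non-remaining to remaining columns, and this column span has dimension exactly $a$ by MDS of $\mathcal{C}^I$. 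A case split on $b$ versus $k-r^F$ closes the argument: when $b\le k-r^F$, the minimum distance of $U_i$ keeps the projection of $U_i$ onto $R_i^{\mathrm{rem}}\setminus S$ injective of dimension $r^F$, while the target column span has dimension $a<r^F$, a contradiction; when $b\ge k-r^F+1$, the same projection surjects onto $\mathbb{F}_q^{k-b}$, forcing the column span to fill $\mathbb{F}_q^{k-b}$, so $a\ge k-b$ and $r_i\ge k$. Summing over blocks yields $r\ge\zeta k$.
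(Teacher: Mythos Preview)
Your approach is genuinely different from the paper's. The paper reproves this bound as a special case of its general LRC framework: it passes to the subcode $\tilde{\mathcal{C}}\subseteq\mathcal{C}^F$ in which all message blocks except block~$i$ are zero, and then uses only that $|\tilde{\mathcal{C}}|_S|=q^k$ whenever $|S|\ge n^F-d+1=\zeta k$; the bounds on $|A_i|$ drop out by choosing $S$ to avoid $T(U_i\setminus A_i)$ (or a size-$(d-1)$ subset of it) and counting how many symbols of $\mathbf{c}_i$ determine $\mathbf{d}|_S$. No generator-matrix structure is invoked. Your route via block projections of columns of $G^F$ is more explicit and gives a nice structural picture ($U_i$ as an MDS subspace), but it is also more fragile.

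The concrete gap is that both of your read-cost arguments silently assume the conversion is \emph{write-optimal}: exactly $r^F$ new symbols and exactly $k$ remaining from each block. In the basic bound you ``take all $r^F$ new columns'' and ``all $k$ remaining columns from every block $j\ne i$''; in the additional bound you need $|R_i^{\mathrm{rem}}|=k$ both to have a ``remaining-basis'' in which $U_i$ is MDS and to get $a\le n^I-k$. A generic $(\mathcal{C}^I,\mathcal{C}^F,T)$ may have $n_i<k$ remaining from block~$i$, and the theorem must cover this. The basic bound is easily repaired: the $n^F-n_i$ columns not remaining from block~$i$ all lie in $W=\{v:P_i(v)\in U_i\}$, so taking any $\min(n^F-n_i,\zeta k)$ of them (independent by MDS) yields $\dim U_i\ge k$ when $n_i\le r^F$ and $\dim U_i\ge r^F$ when $n_i>r^F$. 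For the additional bound, the first of these cases already gives $r_i\ge k$, but in the range $r^F<n_i<k$ your MDS argument for $U_i$ does not apply and your inequality $a\le n^I-k$ fails (there are $n^I-n_i>n^I-k$ non-remaining initial positions). The paper's distance argument avoids this case split entirely: if $|U_i\setminus A_i|>r^F$ one gets $|U_i\cup A_i|\ge k+r^F>n^I$, a contradiction; hence $|U_i\setminus A_i|\le r^F$, and removing $T(U_i\setminus A_i)$ leaves $\ge\zeta k$ positions of $\mathbf{d}$ determined by $\mathbf{c}_i|_{A_i}$, forcing $|A_i|\ge k$.
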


%We call the convertible code defined in Definition \ref{def2} an \emph{$(n^{I},k^{I},r^{I};n^{F},k^{F},r^{F})$ optimal locally repairable convertible codes} (optimal LRCCs), if $\mathcal{C}^{I}$ is an $(n^{I},k^{I},r^{I})$-optimal LRC and $\mathcal{C}^{F}$ is an $(n^{F},k^{F},r^{F})$-optimal LRC.

% When $(r+1)|n$, we need the following lemma to modify the Singleton-type bound for LRCs.

% \begin{lemma}\label{Singleton_bound2}\cite{GXY19}
% Let $n$, $k$, $d$, $r$ be positive integers with $(r+1)|n$ and let $\mathcal{C}$ be a linear $(n,k,r)$-LRC with minimum distance $d$. If the Singleton-type bound in (\ref{Singleton_bound}) is achieved, then
% \begin{equation}\label{eq_Singletonbound}
%     n-k=\frac{n}{r+1}+d-2-\lfloor\frac{d-2}{r+1}\rfloor.
% \end{equation}
% \end{lemma}

% As shown in \cite{GXY19}, when $d-2\equiv r \mod{r+1}$, (\ref{eq_Singletonbound}) implies that $r|k$. Note that there is no such $n$, $k$ and $d$ that achieve the equality of the Singleton-type bound in (\ref{Singleton_bound}) under the condition that $(r+1)|n$, $r|k$ and $d-2\equiv r \mod{r+1}$. Therefore, when $(r+1)|n$, $r|k$ and $d-2\equiv r \mod{r+1}$, one must have 
% \begin{equation}\label{eq_Singletonbound2}
%     d\leq n-k-\lceil\frac{k}{r}\rceil+1
% \end{equation}
% and we can also obtain (\ref{eq_Singletonbound}) when the equality of the above inequality holds. In the following context, we refer the OLRCs as the linear $(n,k,r)$-LRCs with minimum distance $d$ such that $(r+1)|n$ and $d$ satisfying equality (\ref{eq_Singletonbound}).

\subsection{Our Contribution}

In this paper, we study the code conversion problem of locally repairable codes in the merge regime. Due to their good error correcting capabilities and strong connections with MDS codes, we limit our focus on optimal LRCs, that is, LRCs achieving the bound on the distance in Theorem \ref{Singleton_bound}. In the following context, the term LRCs will always refer to optimal LRCs unless otherwise specified.

Below, we summarize the main contributions of this paper.
\begin{itemize}
    \item First, by establishing proper maps from polynomials of degree at most $k-1$ to polynomials of degree at most $\zeta k-1$, we proprose a construction of MDS convertible codes with optimal access costs for general parameters using GRS codes. As a consequence, we obtain a family of $(n^I,k;n^F,\zeta k)$ MDS convertible codes with optimal access costs over finite field of size linear in $n^I$. %As far as we know, this is the first family of MDS convertible codes with optimal access cost for general parameters over field of size linear in $n^I$.
    \item Second, based on the construction of optimal LRCs through ``good polynomials'' in \cite{TB14}, we extend the construction of MDS convertible codes and propose a construction of LRCCs. Similarly, a family of $(n^I,k,r;n^F,\zeta k,r)$ LRCCs over finite field of size linear in $n^I$ is obtained.
    \item Third, we prove a general lower bound on the access cost of an $(n^I,k;n^F,\zeta k)$ convertible code with the condition that the finial code is an LRC. As a consequence, this shows that the LRCCs obtained in Section \ref{sec_LRCC} admit optimal access costs.
\end{itemize}

\subsection{Notations}

Throughout the paper, we use bold lowercase letters to represent vectors and bold uppercase letters to represent matrices. We use $\mathbf{I}_r$ to denote the identity matrix of order $r$ and the subscript ``$r$'' is usually omitted if it's clear from the context. We use $(a_{i,j})_{1\leq i\leq m \atop 1\leq j\leq n}$ to denote the $m\times n$ matrix with $a_{i,j}$ as its $(i,j)$-th entry and we denote $\mathbf{0}_{m\times n}$ as the $m\times n$ matrix with all entries equal to $0$. 
%Let $\mathbf{M}$ be an $m\times n$ matrix over $\mathbb{F}_q$ and for $1\leq j\leq m$, denote $\mathbf{m}_j$ as the $j$-th column of $\mathbf{M}$. Define $$\text{Colsp}(\mathbf{M})=\text{Span}\{\mathbf{m}_1,\ldots,\mathbf{m}_N\}$$ as the column space of $\mathbf{M}$. 
Given a vector $\mathbf{v}$ of length $n$ over $\mathbb{F}_q$, we use $\text{diag}(\mathbf{v})$ to denote the diagonal matrix with $\mathbf{v}(i)$ as its $i$-th diagonal element. Let $s,t$ be positive integers and $A=\{a_1,\ldots, a_s\}\subseteq \mathbb{F}_q$, we denote
$$\mathbf{A}^{(t)}=(a_j^{i-1})_{1\leq i\leq t \atop 1\leq j\leq s}$$
as the $t\times s$ Vandermonde matrix with $(i,j)$-th entry $a_{j}^{i-1}$ and call $\mathbf{A}^{(t)}$ the $t\times s$ Vandermonde matrix generated by $A$. %Sometimes we omit the notation $(t)$ at top right corner when the size of $\mathbf{A}^{(t)}$ is clear.
Moreover, for $\alpha\in \mathbb{F}_q$, we use $\alpha A$ to denote the subset $\{\alpha a_1,\ldots, \alpha a_s\}\subseteq \mathbb{F}_q$.

For integer $k\geq 1$, let $g_1,\ldots,g_k$ be $\mathbb{F}_q$-linearly independent polynomials in $\mathbb{F}_q[x]$ and denote $\text{Span}_{\mathbb{F}_q}\{g_1,\ldots,g_k\}$ as the $k$-dim subspace of $\mathbb{F}_q[x]$ spanned by $g_1,\ldots,g_k$. Specially, when $g_i=x^{i-1}$, we denote $\mathbb{F}_q^{<k}[x]$ as $\text{Span}_{\mathbb{F}_q}\{g_1,\ldots,g_k\}$ for simplicity. Let $f\in \text{Span}_{\mathbb{F}_q}\{g_1,\ldots,g_k\}$ and assume that $f=\sum_{i=1}^{k}f_ig_i$, where $f_i\in \mathbb{F}_q$. We denote $\mathbf{v}_f=(f_1,\ldots,f_k)$ and call $\mathbf{v}_f$ the coefficient vector of $f$ w.r.t. basis $\{g_i\}_{i=1}^{k}$. When the basis $\{g_i\}_{i=1}^{k}=\{x^{i-1}\}_{i=1}^{k}$, $\mathbf{v}_f$ is called the coefficient vector of $f$ for brevity. Moreover, for a subset $A\subseteq \mathbb{F}_q$, we use $h_{A}(x)=\prod_{a\in A}(x-a)$ to denote the annihilator polynomial of $A$.

\subsection{Origanization of the Paper}

The rest of the paper is structured as follows. In Section \ref{sec_MDSC}, we provide a general construction of MDS convertible code with optimal access cost in the merge regime. In Section \ref{sec_LRCC}, we extended the construction of MDS convertible codes and provide a construction of LRCCs with optimal access costs in the merge regime. Then, in Section \ref{sec_lb}, we prove a general lower bound on the access cost of the conversion process for LRCCs in the merge regime. Finally, we conclude the paper by highlighting some open questions in Section \ref{sec_con}.

\section{Constructions of MDS convertible codes}\label{sec_MDSC}

In this section, we provide a general construction of MDS convertible codes with optimal access costs in the merge regime. We first provide an example to illustrate the underlying idea of our general construction. Then, in Section \ref{sec_MDSC2}, we present our construction of MDS convertible codes. Though this construction,  in Section \ref{sec_MDSC3}, we obtain a family of $(n^I,k;n^F,\zeta k)$ MDS convertible codes with optimal access costs over finite field of size linear in $n^I$. %Compared to the result in \cite{MR22a}, in this family of MDS convertible codes, the requirement of the size of the underlying field is less strict.

Recall that in the merge regime, we have $k^F=\zeta k^I$ for some integer $\zeta\geq 2$. Write $l^I=n^I-k^I$ and $l^F=n^F-\zeta k^I$. Note that when $l^I< l^F$ or $k^I\leq l^F$, the lower bound in Theorem \ref{thm1} can be obtained by the default approach. Thus, throughout this section, we assume that $l^F\leq \min\{l^I,k^I\}$.

\subsection{An illustrative example}\label{sec_MDSC1}

We start with the following example.

\begin{example}\label{ex1}
In this example, let $\mathbb{F}_q=\mathbb{F}_{19}$, we construct a $(6,4)_q$ MDS code $\mathcal{C}^I$, a $(10,8)_q$ MDS code $\mathcal{C}^F$ and a conversion procedure $T$ from $\mathcal{C}^I$ to $\mathcal{C}^F$ with write access cost $2$ and read access cost $4$. Together, they form a $(6,4;10,8)$ MDS convertible codes with optimal access cost. Note that $2$ is generator of $\mathbb{F}_{19}^{*}$. We list all the values of $2^x$ in Table I, which might help with calculations in this example.
\begin{table}[h!]
\centering
\caption{Correspondence of $2^x$ in $\mathbb{F}_{19}^{*}$}
\begin{tabular}{|c|c|c|c|c|c|c|c|c|c|}
\hline
$x$ & 0 & 1 & 2 & 3 & 4 & 5 & 6 & 7 & 8 \\
\hline
$2^x$ & 1 & 2 & 4 & 8 & 16 & 13 & 7 & 14 & 9 \\
\hline
$x$ & 9 & 10 & 11 & 12 & 13 & 14 & 15 & 16 & 17 \\
\hline
$2^x$ & 18 & 17 & 15 & 11 & 3 & 6 & 12 & 5 & 10 \\
\hline
\end{tabular}
\end{table}

Let $A=\{a_1,a_2,a_3,a_4\}=\{1,8,7,18\}$, $B=\{b_1,b_2,b_3,b_4\}=\{2,16,14,17\}$ and $C=\{c_1,c_2\}=\{4,9\}$. Then, it holds that $a_i=2^{3(i-1)}$, $b_i=2a_i$ and $c_i=4a_i$. We define the initial code $\mathcal{C}^I$ as the $[6,4]$ RS code with evaluation points $A\cup C$, i.e., 
$$\mathcal{C}^I=\{(f(\alpha), \alpha\in A\cup C):~f\in \mathbb{F}_q^{<4}[x]\}.$$

Next, to define the final code $\mathcal{C}^F$, we need some preliminaries. Let $\bm{\theta}=(11,3,3,6)$ and
$$\mathbf{M}=
\mathbf{A}^{(4)}\cdot \text{diag}(\bm{\theta})\cdot(\mathbf{B}^{(4)})^{-1}=
\left(\begin{array}{cccc}
0 & 16 & 14 & 7\\
1 & 17 & 17 & 17\\
16 & 16 & 10 & 7\\
1 & 6 & 17 & 15
\end{array}\right).
$$
Then, one can verify that the following holds:
\begin{itemize}
    \item [1.] For every $i\in [4]$, $\mathbf{M}\cdot (1,b_i,b_i^2,b_i^3)^{T}=\bm{\theta}(i) (1,a_i,a_i^2,a_i^3)^{T}$.
    \item [2.] $\mathbf{M}\cdot(1,c_1,c_1^2,c_1^3)^{T}=(14,4,4,3)^{T}=13 (1,c_1,c_1^2,c_1^3)^{T}+(1,c_2,c_2^2,c_2^3)^{T}$.
% $$\mathbf{M}\cdot\left(\begin{array}{cccc}
% 1\\
% c_1\\
% (c_1)^2\\
% (c_1)^3
% \end{array}\right)=\left(\begin{array}{cccc}
% 14\\
% 4\\
% 4\\
% 3
% \end{array}\right)=13\left(\begin{array}{cccc}
% 1\\
% c_1\\
% (c_1)^2\\
% (c_1)^3
% \end{array}\right)+\left(\begin{array}{cccc}
% 1\\
% c_2\\
% (c_2)^2\\
% (c_2)^3
% \end{array}\right).$$
    \item [3.] $\mathbf{M}\cdot(1,c_2,c_2^2,c_2^3)^{T}=(16,16,12,17)^{T}=18 (1,c_1,c_1^2,c_1^3)^{T}-2 (1,c_2,c_2^2,c_2^3)^{T}$.
% $$\mathbf{M}\cdot\left(\begin{array}{cccc}
% 1\\
% c_2\\
% (c_2)^2\\
% (c_2)^3
% \end{array}\right)=\left(\begin{array}{cccc}
% 16\\
% 16\\
% 12\\
% 17
% \end{array}\right)=-\left(\begin{array}{cccc}
% 1\\
% c_1\\
% (c_1)^2\\
% (c_1)^3
% \end{array}\right)-2\left(\begin{array}{cccc}
% 1\\
% c_2\\
% (c_2)^2\\
% (c_2)^3
% \end{array}\right).$$
\end{itemize}
For each $f\in \mathbb{F}_q^{<4}[x]$, we denote $M(f)$ as the polynomial in $\mathbb{F}_q^{<4}[x]$ with coefficient vector $\mathbf{v}_f\cdot \mathbf{M}$. Then, we have
\begin{equation*}
    \begin{cases}
    M(f)(b_i)=\mathbf{v}_f\cdot \mathbf{M}\cdot (1,b_i,b_i^2,b_i^3)^{T}=\bm{\theta}(i)f(a_i),~1\leq i\leq 4;\\
    M(f)(c_1)=\mathbf{v}_f\cdot \mathbf{M}\cdot (1,c_1,c_1^2,c_1^3)^{T}=13f(c_1)+f(c_2);\\
    M(f)(c_2)=\mathbf{v}_f\cdot \mathbf{M}\cdot (1,c_2,c_2^2,c_2^3)^{T}=18f(c_1)-2f(c_2).
    \end{cases}
\end{equation*}

Recall that $h_A$ and $h_B$ are the annihilator polynomials of $A$ and $B$, respectively. Let $T$ be the map from $\mathbb{F}_q^{<4}[x]\times \mathbb{F}_q^{<4}[x]$ to $\mathbb{F}_q^{<8}[x]$ such that
$$T:~(f_1,f_2)\longmapsto h_B\cdot f_1+h_A\cdot M(f_2)$$
and let $\mathbf{u}=(u_\alpha, \alpha\in A\cup B\cup C)$ be a vector in $\mathbb{F}_{q}^{10}$ with coordinates indexed by elements in $A\cup B\cup C$ and 
\begin{equation*}
    \begin{cases}
    u_{a_i}=h_B(a_i)^{-1},~1\leq i\leq 4;\\
    u_{b_i}=\bm{\theta}(i)^{-1}h_A(b_i)^{-1},~1\leq i\leq 4;\\
    u_{c_i}=1,~1\leq i\leq 2.\\
    \end{cases}
\end{equation*}
Now, we define the final code $\mathcal{C}^{F}$ as
$$\mathcal{C}^F=\{(u_\alpha T(f_1,f_2)(\alpha), \alpha\in A\cup B\cup C):~f_1,f_2\in \mathbb{F}_q^{<4}[x]\}.$$
Since $T(f_1,f_2)\in \mathbb{F}_q^{<8}[x]$, $\mathcal{C}^F$ is a subcode of the $[10,8]$ GRS code defined by $A\cup B\cup C$ and vector $\mathbf{u}$. Note that
\begin{equation*}
    u_\alpha T(f_1,f_2)(\alpha)=\begin{cases}
    f_1(\alpha),~\text{when}~\alpha\in A;\\
    f_2(\alpha),~\text{when}~\alpha\in B.
    \end{cases}
\end{equation*}
Thus, we can use $(u_\alpha T(f_1,f_2)(\alpha),\alpha\in A)$ and $(u_\alpha T(f_1,f_2)(\alpha),\alpha\in B)$ to reconstruct $f_1$ and $f_2$, respectively. This implies that $\dim(\mathcal{C}^F)=8$ and $\mathcal{C}^F$ is a $[10,8]$ GRS code. Moreover, note that
\begin{equation*}
    \begin{cases}
    u_{c_1} T(f_1,f_2)(c_1)=h_B(c_1)f_1(c_1)+h_A(c_1)(13f_2(c_1)+f_2(c_2)),\\
    u_{c_2} T(f_1,f_2)(c_2)=h_B(c_2)f_1(c_2)-h_A(c_2)(f_2(c_1)+2f_2(c_2)).
    \end{cases}
\end{equation*}
Therefore, for any two codewords $\mathbf{c}_1=(f_1(\alpha), \alpha\in A\cup C)$ and $\mathbf{c}_2=(f_2(\alpha), \alpha\in A\cup C)$ in $\mathcal{C}^I$, the map $T$ induces a conversion procedure from $(\mathbf{c}_1,\mathbf{c}_2)$ to $\mathbf{d}=(u_\alpha T(f_1,f_2)(\alpha),\alpha\in A\cup B\cup C)\in \mathcal{C}^F$ such that 
\begin{itemize}
    \item [1)]  the $4$ symbols in $\mathbf{c}_1|_{A}$ remain in $\mathbf{d}$ as $\mathbf{d}|_{A}$, and the $4$ symbols in $\mathbf{c}_2|_{A}$ remain in $\mathbf{d}$ as $\mathbf{d}|_{B}$\footnote{Let $\mathcal{C}$ be an $[n,k]$ RS code with evaluation points $A$. For any $\mathbf{c}=(f(\alpha),\alpha\in A)\in \mathcal{C}$ and $A'\subseteq A$, we denote $\mathbf{c}|_{A'}$ as the restriction of $\mathbf{c}$ to coordinates corresponding to $A'$, i.e., $\mathbf{c}|_{A'}=(f(\alpha),\alpha\in A')$.};
    \item [2)] the $2$ symbols in $\mathbf{d}|_{C}$ are linear combinations of symbols in $\mathbf{c}_1|_C$ and $\mathbf{c}_2|_C$.
\end{itemize}
Therefore, the conversion procedure $T$ has write access cost $2$ and read access cost $4$.
%, which attain the bounds in Theorem \ref{thm1}.
\end{example}

%\textcolor{red}{Rewrite this part. Change the following context to a remark and move Construction I here.}

\begin{remark}\label{rmk2-1}
Example \ref{ex1} provides an overview of the underlying idea behind our general construction. That is, by selecting proper evaluation points $A=\{a_1,\ldots,a_k\}$, $B=\{b_1,\ldots,b_k\}$ and $C=\{c_1,\ldots,c_l\}$, we can obtain a surjective map $T:\mathbb{F}_q^{<k}[x]\times \mathbb{F}_q^{<k}[x] \rightarrow \mathbb{F}_q^{<2k}[x]$ such that for every $(f_1,f_2)\in\mathbb{F}_q^{<k}[x]\times \mathbb{F}_q^{<k}[x]$ the following holds:
\begin{itemize}
    \item [1.] $T(f_1,f_2)(a_i)=\alpha_if_1(a_i)$ and $T(f_1,f_2)(b_i)=\beta_if_2(a_i)$, where for each $1\leq i\leq k$, $\alpha_i$ and $\beta_i$ are non-zero constants determined by $A$, $B$ and $C$;
    \item [2.] $T(f_1,f_2)(c_i)=\sum_{j=1}^{l}(\gamma_{i,j}^{(1)}f_1(c_j)+\gamma_{i,j}^{(2)}f_2(c_j))$, where for each $(i,j)\in [l]\times [l]$, $\gamma_{i,j}^{(1)}$ and $\gamma_{i,j}^{(2)}$ are constants determined by $A$, $B$ and $C$.
\end{itemize}
Then, based on this, we can define $\mathcal{C}^{I}$ as the $[k+l,k]$ RS code with evaluation points $A\cup C$ and $\mathcal{C}^F$ as
$$\mathcal{C}^F=\{(u_\alpha T(f_1,f_2)(\alpha),\alpha\in A\cup B\cup C):~f_1,f_2\in \mathbb{F}_q^{<k}[x]\},$$ 
where $u_{a_i}=\alpha_i^{-1}$, $u_{b_i}=\beta_i^{-1}$ and $u_{c_i}=1$. Since $T$ is surjective, $\mathcal{C}^F$ is actually the $[2k+l,k]$ GRS code defined by $A\cup B\cup C$ and vector $(u_{\alpha},\alpha\in A\cup B\cup C)$. 

By the one-to-one correspondence between polynomials and codewords in GRS codes, $T$ induces a conversion procedure from $\mathcal{C}^{I}$ to $\mathcal{C}^{F}$, i.e., codewords $\mathbf{c}_1=(f_1(\alpha),\alpha\in A\cup C)$ and $\mathbf{c}_2=(f_2(\alpha),\alpha\in A\cup C)$ are mapped to $\mathbf{d}=(u_\alpha T(f_1,f_2)(\alpha),\alpha\in A\cup B\cup C)$. Similar to Example \ref{ex1}, one can easily verify that $\mathbf{d}|_{A}=\mathbf{c}_1|_{A}$, $\mathbf{d}|_{B}=\mathbf{c}_2|_{A}$ and symbols in $\mathbf{d}|_{C}$ are linear combinations of symbols in $\mathbf{c}_1|_{C}$ and $\mathbf{c}_2|_{C}$. This will imply that $(\mathcal{C}^{I},\mathcal{C}^{F})$ is an MDS convertible codes with optimal access cost.
\end{remark}

\begin{remark}\label{rmk2-2}
In Example \ref{ex1}, the map $T$ is defined as 
$$T:~(f_1,f_2)\longmapsto h_B\cdot f_1+h_A\cdot M(f_2),$$
where $M(f_2)$ is the polynomial with coefficient vector $\mathbf{v}_{f_2}\cdot \mathbf{M}$. Note that $\mathbf{M}$ is a $k\times k$ matrix satisfying 
\begin{itemize}
    \item [1.] $\mathbf{M}\cdot (1,b_i,\ldots,b_i^{k-1})^{T}=\theta_i(1,a_i,\ldots,a_i^{k-1})^{T}$, where for each $1\leq i\leq k$, $\theta_i$ is non-zero constant determined by $A$, $B$ and $C$;
    \item [2.] $\mathbf{M}\cdot(1,c_i,\ldots,c_i^{k-1})^{T}\in \text{Span}_{\mathbb{F}_q}\{(1,c_i,\ldots,c_i^{k-1})^{T}:~1\leq i\leq l\}.$
\end{itemize}
One can easily check that these properties of $\mathbf{M}$ ensure the map $T$ defined above satisfies the requirements in Remark \ref{rmk2-1}. Thus, the crucial part of the construction is to find proper subsets $A$, $B$ and $C$ such that there exists a $k\times k$ matrix $\mathbf{M}$ satisfying the above requirements.
\end{remark}

\subsection{Constructions of MDS convertible codes with optimal access costs}\label{sec_MDSC2}

First, we assume that $l^F=l^I=l<k^I$ and present a construction of MDS convertible codes for this case. Then, we will modify it for the case when $l^F<l^I$.

\textbf{Construction I}: Let $A_i=\{a_{i,1},\ldots, a_{i,k}\}$, $1\leq i\leq \zeta$ and $C=\{c_1,\ldots, c_l\}$ be mutually disjoint subsets of $\mathbb{F}_q$ such that for every $2\leq i\leq \zeta$, there is a $k\times k$ matrix $\mathbf{M}_i$ satisfying:
\begin{description}%[leftmargin=2cm]
    \item [Condition 1] For every $1\leq j\leq k$, $$\mathbf{M}_i\cdot \mathbf{a}_{i,j}=\theta_{i,j} \mathbf{a}_{1,j},$$
    where $\mathbf{a}_{i,j}=(1,a_{i,j},\ldots,a_{i,j}^{k-1})^{T}$ and $\theta_{i,j}=\frac{h_{A_i\cup C\setminus\{a_{i,j}\}}(a_{i,j})}{h_{A_1\cup C\setminus\{a_{1,j}\}}(a_{1,j})}$.
    \item [Condition 2] For every $1\leq j\leq l$, $\mathbf{M}_i\cdot(1,c_j,\ldots,c_j^{k-1})^{T}\in \text{Span}_{\mathbb{F}_q}\{(1,c_j,\ldots,c_j^{k-1})^{T}:~1\leq j\leq l\}$.
\end{description}
%For every $f(x)=\sum_{i=1}^{k}f_ix^{i-1}$, denote $\mathbf{v}_f=(f_1,\ldots,f_k)$ as the coefficient vector of $f(x)$. Moreover, for $1\leq i\leq \zeta$, let $\mathbf{M}_i$ be the $k\times k$ matrix corresponding to $M_i$. Denote $f\circ M_i$ as the polynomial in $\mathbb{F}_q^{<k}[x]$ with coefficient vector $\mathbf{v}_f\cdot \mathbf{M}_i$. 
%Recall that for $\mathbf{v}_0=(v_1,\ldots,v_n)\in (\mathbb{F}_q^*)^{n}$ and $B=\{b_1,\ldots,b_n\}\subseteq \mathbb{F}_q$ such that $b_1,\ldots,b_n$ are pairwise distinct, the \emph{generalized Reed-Solomon code} (GRS code) defined by $B$ and $\mathbf{v}_0$ is $$GRS_{n,k}(B,\mathbf{v}_0)=\{(v_1f(b_1),v_2f(b_2),\ldots,v_nf(b_n)): f(x)\in \mathbb{F}_q[x],~\deg(f)<k\}.$$
Let $\mathcal{C}^I$ be the $[k+l,k]$ RS code with evaluation points $A_1\cup C$, i.e.,
\begin{equation*}
    \mathcal{C}^I=\{(f(\alpha),~\alpha\in A_1\cup C): f\in \mathbb{F}_q^{<k}[x]\}.
\end{equation*}
Denote $A=\bigcup_{i=1}^{k}A_i$. Let $\mathbf{M}_1=\mathbf{I}$ and define map $T:(\mathbb{F}_q^{<k}[x])^{\zeta}\rightarrow \mathbb{F}_q^{<\zeta k}[x]$
%\footnote{Here, we use $(\mathbb{F}_q^{<k}[x])^{\zeta}$ to denote $\zeta$-ary Cartesian power of $\mathbb{F}_q^{<k}[x]$, i.e., $\underbrace{ \mathbb{F}_q^{<k}[x]\times \cdots \times \mathbb{F}_q^{<k}[x] }_{\zeta}$.} 
as
\begin{align}\label{cons1-1}
   T: (f_1,\ldots,f_{\zeta})&\longmapsto \sum_{i=1}^{\zeta}h_{A\setminus A_i}\cdot M_i(f_i),
\end{align}
where $M_i(f_i)\in \mathbb{F}_q^{<k}[x]$ is the polynomial with coefficient $\mathbf{v}_{f_i}\cdot \mathbf{M}_i$. Then, we define $\mathcal{C}^F$ as
\begin{align}\label{cons1-2}
  \mathcal{C}^F =\{(u_{\alpha}T(f_1,\ldots,f_{\zeta})(\alpha),~\alpha\in A\cup C): f_i\in \mathbb{F}_q^{<k}[x],~1\leq i\leq \zeta\},
\end{align}
where $u_{\alpha}=\theta_{i,j}^{-1}h_{A\setminus A_i}^{-1}(a_{i,j})$ when $\alpha=a_{i,j}\in A_i$ and $u_{\alpha}=1$ when $\alpha\in C$. 

\begin{theorem}\label{thm3-1}
For positive integers $\zeta\geq 2$, $k$ and $l<k$, the $(\mathcal{C}^I, \mathcal{C}^{F})$ given by Construction I is a $(k+l,k;\zeta k+l,\zeta k)$ MDS convertible code with optimal access cost.
\end{theorem}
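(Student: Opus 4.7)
The plan is to verify three items in turn: that $\mathcal{C}^I$ is a $[k+l,k]$ MDS code, that $\mathcal{C}^F$ is a $[\zeta k+l,\zeta k]$ MDS code, and that the induced conversion procedure has read cost $\zeta l$ and write cost $l$, matching the lower bound of Theorem \ref{thm1} under the standing assumption $l<k$. Item (i) is immediate since $\mathcal{C}^I$ is an RS code on $k+l$ distinct points $A_1\cup C$. Everything else reduces to reading off $T(f_1,\dots,f_\zeta)$ at the natural pieces of $A\cup C$.

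The heart of the argument is the evaluation at $a_{i,j}\in A_i$. The factor $h_{A\setminus A_{i'}}$ vanishes at $a_{i,j}$ for every $i'\ne i$, since $a_{i,j}\in A_i\subseteq A\setminus A_{i'}$ by the pairwise disjointness of $A_1,\dots,A_\zeta,C$. Hence only the $i$-th summand in $T$ survives, and Condition 1 yields
\[
T(f_1,\dots,f_\zeta)(a_{i,j}) \;=\; h_{A\setminus A_i}(a_{i,j})\cdot(\mathbf{v}_{f_i}\mathbf{M}_i)\cdot\mathbf{a}_{i,j} \;=\; h_{A\setminus A_i}(a_{i,j})\,\theta_{i,j}\,f_i(a_{1,j}).
\]
Both $h_{A\setminus A_i}(a_{i,j})$ and $\theta_{i,j}$ are nonzero by disjointness, so $T(f_1,\dots,f_\zeta)\equiv 0$ forces $f_i(a_{1,j})=0$ for all $j\in[k]$; since $\deg f_i<k=|A_1|$, each $f_i$ must vanish. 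Thus $T$ is an $\mathbb{F}_q$-linear injection between spaces of common dimension $\zeta k$, hence a bijection onto $\mathbb{F}_q^{<\zeta k}[x]$. Consequently $\mathcal{C}^F$ is the full evaluation code of $\mathbb{F}_q^{<\zeta k}[x]$ at the $\zeta k+l$ distinct points of $A\cup C$, scaled by the nonzero diagonal $(u_\alpha)$, i.e., a $[\zeta k+l,\zeta k]$ GRS code, which is MDS.

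For the access cost, the displayed formula combined with the choice $u_{a_{i,j}}=\theta_{i,j}^{-1}h_{A\setminus A_i}(a_{i,j})^{-1}$ gives $\mathbf{d}|_{A_i}=\mathbf{c}_i|_{A_1}$ for each $i\in[\zeta]$, so all $\zeta k$ coordinates indexed by $A$ are remaining symbols and incur no I/O. On $C$, Condition 2 forces $M_i(f_i)(c_j)$ to be an $\mathbb{F}_q$-linear combination of the $l$ values $f_i(c_1),\dots,f_i(c_l)$, so each entry of $\mathbf{d}|_C$ is an $\mathbb{F}_q$-linear combination of the $\zeta l$ symbols $\bigcup_i\mathbf{c}_i|_C$. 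Reading these $\zeta l$ symbols and writing the $l$ new symbols of $\mathbf{d}|_C$ realizes the conversion, giving read cost $\zeta l=\zeta\min\{k,l\}$ and write cost $l$, which matches Theorem \ref{thm1}.

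The hard part of the overall program is not in this theorem but in producing subsets $A_1,\dots,A_\zeta,C$ and matrices $\mathbf{M}_i$ simultaneously satisfying Conditions 1 and 2 over a field of linear size (deferred to Section \ref{sec_MDSC3}). Within this theorem itself, the only delicate point is confirming that $h_{A\setminus A_i}(a_{i,j})$ and $\theta_{i,j}$ are nonzero and that the evaluation at $A_i$ factors through $f_i(a_{1,j})$; this is precisely where the pairwise disjointness of $A_1,\dots,A_\zeta,C$ and the defining property of $\mathbf{M}_i$ are used in tandem.
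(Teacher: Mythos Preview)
Your proof is correct and follows essentially the same approach as the paper: evaluate $T(f_1,\dots,f_\zeta)$ at $a_{i,j}$ using the vanishing of $h_{A\setminus A_{i'}}$ and Condition~1 to obtain $\mathbf{d}|_{A_i}=\mathbf{c}_i|_{A_1}$, then use Condition~2 on $C$ to bound the read cost. The only cosmetic difference is that you establish $\dim(\mathcal{C}^F)=\zeta k$ via injectivity of $T$ (if $T(f_1,\dots,f_\zeta)\equiv 0$ then each $f_i$ vanishes on $A_1$), whereas the paper argues the dual direction ($\mathcal{C}^F|_A=\mathbb{F}_q^{\zeta k}$ because $\mathcal{C}^I|_{A_1}=\mathbb{F}_q^k$); both are the same dimension count.
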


\begin{proof}
%First, by $A_1\cap A_0=\emptyset$, $\mathcal{C}^{I}$ is a $[k+l,k]$ MDS code.
Let $\mathbf{c}_1,\ldots,\mathbf{c}_{\zeta}\in \mathcal{C}^{I}$ be the $\zeta$ initial codewords and denote $f_i(x)$ as the encoding polynomial of $\mathbf{c}_i$. Since for every $1\leq i\leq \zeta$, $\deg(M_i(f_i))\leq k-1$ and $\deg(h_{A\setminus A_i})=(\zeta-1)k$, it holds that $T(f_1,\ldots,f_{\zeta})\in \mathbb{F}_q^{<\zeta k}$. Thus, $\mathcal{C}^{F}$ is a subcode of the $[\zeta k+l,\zeta k]$ GRS code defined by evaluation points $A\cup C$ and vector $(u_\alpha, \alpha\in A\cup C)$. Moreover, the map $T$ induces a conversion procedure from $\mathcal{C}^{I}$ to $\mathcal{C}^{F}$, that is, $\mathbf{c}_1,\ldots,\mathbf{c}_{\zeta}$ are converted to the codeword 
$$\mathbf{d}=(u_{\alpha}T(f_1,\ldots,f_{\zeta})(\alpha),~\alpha\in A\cup C)\in \mathcal{C}^{F}.$$

%$$\mathcal{C}_0\triangleq\{(u_{i,j}f(a_{i,j}),~a_{i,j}\in A\cup A_0):~f\in \mathbb{F}_{q}^{<\zeta k}[x]\}.$$ By $A_i\cap A_j=\emptyset$ for every $0\leq i<j\leq \zeta$, $\mathcal{C}_0$ is a  MDS code.

Next, we show that the read and write access costs of the conversion procedure induced by $T$ are $\zeta l$ and $l$, respectively. By Theorem \ref{thm1}, this implies that $(\mathcal{C}^I,\mathcal{C}^{F})$ has optimal access cost. As a byproduct, we shall see that $\mathcal{C}^{F}$ has dimension $\zeta k$, i.e., the map $T$ is surjective. This yields the result.

%Given $\zeta$ codewords $\mathbf{c}_1,\ldots,\mathbf{c}_{\zeta}$ in $\mathcal{C}^{I}$, denote $f_1,\ldots,f_{\zeta}$ as their encoding polynomials in $\mathbb{F}_q^{<k}[x]$. 

For each $(i,j)\in [\zeta]\times [k]$ and $\alpha=a_{i,j}$, since $h_{A\setminus A_i}$ is the annihilator polynomial of $A\setminus A_i$, we have 
\begin{align*}
    u_{\alpha}T(f_1,\ldots,f_{\zeta})(\alpha)&=\theta_{i,j}^{-1}h_{A\setminus A_i}^{-1}(a_{i,j})\sum_{s=1}^{\zeta}h_{A\setminus A_s}(a_{i,j})(M_s(f_s))(a_{i,j})\\
    &=\theta_{i,j}^{-1}(M_i(f_i))(a_{i,j})\\
    &=f_i(a_{1,j}),
\end{align*}
where the last equality follows from $(M_i(f_i))(a_{i,j})=\mathbf{v}_{f_i}\cdot \mathbf{M}_i\cdot \mathbf{a}_{i,j}=\theta_{i,j}f_i(a_{1,j})$.
Therefore, we have $\mathbf{d}|_{A_i}=\mathbf{c}_i|_{A_1}$. In other words, there are at least $k$ symbols from each $\mathbf{c}_i$ remaining in $\mathbf{d}$. Thus, the write access cost is at most $l$. Moreover, note that $\mathcal{C}^{I}|_{A_1}=\mathbb{F}_q^{k}$. Thus, we have $\mathcal{C}^{F}|_{A}=\mathbb{F}_q^{\zeta k}$, which leads to $\dim(\mathcal{C}^{F})=\zeta k$.

On the other hand, for $1\leq j\leq l$, denote $\mathbf{c}_j=(1,c_j,\ldots,c_j^{k-1})^{T}$. By $\mathbf{M}_i\cdot \mathbf{c}_j\in \text{Span}_{\mathbb{F}_q}\{\mathbf{c}_1,\ldots,\mathbf{c}_l\}$, we can assume that $\mathbf{M}_i\cdot \mathbf{c}_j=\sum_{s=1}^{l}\eta_{s}^{(i,j)}\mathbf{c}_{s}$. Then, for $\alpha=c_j\in C$, we have
\begin{align}
    u_{\alpha}T(f_1,\ldots,f_{\zeta})(\alpha)
    &=\sum_{i=1}^{\zeta}h_{A\setminus A_i}(c_j)(M_i(f_i))(c_j)\nonumber\\
    &=\sum_{i=1}^{\zeta}h_{A\setminus A_i}(c_j)\left(\mathbf{v}_{f_i}\cdot\sum_{s=1}^{l}\eta_{s}^{(i,j)}\mathbf{c}_{s}\right)\nonumber\\
    &=\sum_{i=1}^{\zeta}\sum_{s=1}^{l}h_{A\setminus A_i}(c_j)\eta_{s}(i,j)f_{i}(c_{s}).
    \label{eq3-3-1}
\end{align}
% Thus, (\ref{eq3-3-1}) leads to
% $$\mathbf{d}|_{c_j}=\sum_{i=1}^{\zeta}\sum_{s=1}^{l}h_{A\setminus A_i}(c_{j})\eta_{s}(i,j)\mathbf{c}_{i}|_{c_s}.$$
This implies that the $l$ symbols in $\mathbf{d}|_{C}$ are linear combinations of the $\zeta l$ symbols in $\mathbf{c}_1|_{C},\ldots,\mathbf{c}_\zeta|_{C}$. Therefore, the read access cost is at most $\zeta l$. 
\end{proof}

Now, we present the construction for the general case when $l^F\leq l^I$. The idea is simple. Note that if $\mathcal{C}^{I}$ is a linear $[k+l^I,k]$ MDS code, then, $\mathcal{C}^{I}|_{[k+l^F]}$ is a $[k+l^F,k]$ MDS code. Therefore, we can use the same $\mathcal{C}^{F}$ obtained in Construction I as the final code and maintain the access cost.

\textbf{Construction II}: Let $A_1,\ldots,A_{\zeta}$, $C$ and $(u_{\alpha},\alpha\in\bigcup_{i\in [\zeta]}{A_i}\cup C)$ be identical to those defined in Construction I. Let $B=\{b_{1},\ldots, b_{l^I-l^F}\}$ be a subset of $\mathbb{F}_q$ of size $(l^I-l^F)$ such that $B\cap (A_1\cup C)=\emptyset$\footnote{When $l^I=l^F$, $B=\emptyset$}. Let $\mathcal{C}^I$ be the $[k+l^I,k]$ RS code with evaluation points $A_1\cup C\cup B$, i.e.,
\begin{equation*}
    \mathcal{C}^I=\{(f(\alpha),~\alpha\in A_1\cup C\cup B): f(x)\in \mathbb{F}_q^{<k}[x]\}.
\end{equation*}
Let the conversion procedure $T$ and the final code $\mathcal{C}^F$ be the same as those defined in (\ref{cons1-1}) and (\ref{cons1-2}).

By Theorem \ref{thm1} and Theorem \ref{thm3-1}, we have the following immediate result.

\begin{theorem}\label{thm2}
For positive integers $\zeta\geq 2$, $k$, $l^I$ and $l^F\leq \min\{k,l^I\}$, the $(\mathcal{C}^I, \mathcal{C}^{F})$ given by Construction II is a $(k+l^I,k;\zeta k+l^F,\zeta k)$ MDS convertible code with optimal access cost.
\end{theorem}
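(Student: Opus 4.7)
The plan is to derive Theorem~\ref{thm2} as a short extension of Theorem~\ref{thm3-1}, exploiting the fact that Construction~II leaves both the final code $\mathcal{C}^F$ and the conversion map $T$ exactly as they were in Construction~I; only the initial code is enlarged by the extra evaluation set $B$. Because the message still encodes a polynomial $f_i\in \mathbb{F}_q^{<k}[x]$ of the same degree, the algebra of the conversion is unchanged, and the work reduces to three short verifications.

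First, I would check that $\mathcal{C}^I$ is a $[k+l^I,k]$ MDS code: this is immediate from its definition as a Reed--Solomon code with evaluation set $A_1\cup C\cup B$ of size $k+l^I$, which is well-defined because $B$ is chosen disjoint from $A_1\cup C$. Second, I would observe that $\mathcal{C}^F$ and the map $T$ are literally the ones analyzed in the proof of Theorem~\ref{thm3-1}, so $\mathcal{C}^F$ remains a $[\zeta k+l^F,\zeta k]$ GRS code (in particular MDS) and surjectivity of $T$ is inherited verbatim. Third, and this is the only step with any content, I would rerun the conversion analysis: for each initial codeword $\mathbf{c}_i=(f_i(\alpha),\alpha\in A_1\cup C\cup B)$, the same calculation used in Theorem~\ref{thm3-1} shows that $\mathbf{d}|_{A_i}=\mathbf{c}_i|_{A_1}$ and that each coordinate of $\mathbf{d}|_{C}$ is an $\mathbb{F}_q$-linear combination of the coordinates of $\mathbf{c}_1|_C,\ldots,\mathbf{c}_\zeta|_C$ given by equation~\eqref{eq3-3-1}. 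Crucially, the symbols $\mathbf{c}_i|_{B}$ are neither read nor overwritten, so the write access cost is at most $l^F$ and the read access cost is at most $\zeta l^F$.

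Finally, I would invoke Theorem~\ref{thm1} with $n^I-k=l^I$ and $n^F-\zeta k=l^F$ to match these upper bounds from below. The hypothesis $l^F\le \min\{k,l^I\}$ ensures both that $\min\{k,n^F-\zeta k\}=l^F$, so the read lower bound is exactly $\zeta l^F$, and that the extra clause ``if $n^I-k<n^F-\zeta k$'' does not trigger. Combining matches both read and write costs with the lower bound, giving optimality.

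The only conceptual hurdle is the implicit worry that enlarging $\mathcal{C}^I$ could force the conversion to touch the new symbols or could change the dimension of $\mathcal{C}^F$; but since $T$ is defined purely in terms of the polynomial $f_i$, which is already determined by any $k$ evaluations, enlarging the evaluation set is transparent to the conversion. I therefore expect the proof to be essentially a one-paragraph reduction to Theorem~\ref{thm3-1} and Theorem~\ref{thm1}, with no genuinely new calculation required.
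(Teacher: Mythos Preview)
Your proposal is correct and follows exactly the paper's approach: the paper simply states that Theorem~\ref{thm2} is an immediate consequence of Theorem~\ref{thm1} and Theorem~\ref{thm3-1}, and you have spelled out precisely the routine verifications that make this reduction work. The key observation you identify---that enlarging the evaluation set of $\mathcal{C}^I$ by $B$ leaves $T$, $\mathcal{C}^F$, and all conversion costs unchanged---is exactly the content the paper leaves implicit.
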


\subsection{A family of MDS convertible codes with optimal access costs}\label{sec_MDSC3}

From Section \ref{sec_MDSC2}, the key to implementing Construction I and II lies in finding pairwise disjoint subsets $A_i$'s and $C$ from $\mathbb{F}_q$, such that there exist matrices $\mathbf{M}_i$'s satisfying conditions 1 and 2 in Construction I. In the following, we first give a sufficient condition for when $A_i$'s can ensure the existence of such $\mathbf{M}_i$'s. Then, we give an explicit construction of $A_1$, $\ldots$, $A_{\zeta}$ satisfying this sufficient condition. Together with proper subsets $B$ and $C$, we provide an implementation of Construction II, and obtain a family of MDS convertible codes with optimal access costs.

\begin{lemma}\label{lem2-1}
For positive integers $k\geq 2$ and $l<k$, let $A=\{a_{1},\ldots,a_{k}\}$ and $B=\{b_{1},\ldots,b_{k}\}$ be disjoint subsets of $\mathbb{F}_q$. Suppose that there is an invertible matrix $\mathbf{T}$ over $\mathbb{F}_q$ of order $l$ satisfying
\begin{equation}\label{eq3-1-1}
\mathbf{B}^{(l)}=\mathbf{T}\mathbf{A}^{(l)}.
\end{equation}
Then, for any $C=\{c_1,\ldots,c_l\}\subseteq \mathbb{F}_q$, there is a $k\times k$ matrix $\mathbf{M}$ such that:
\begin{itemize}
    \item [1.] For every $1\leq i\leq k$, $\mathbf{M}\cdot (1,b_i,\ldots,b_i^{k-1})^{T}=\theta_i (1,a_i,\ldots,a_i^{k-1})^{T}$,
    where $\theta_i=\frac{h_{B\cup C\setminus\{b_i\}}(b_i)}{h_{A\cup C\setminus\{a_i\}}(a_i)}$. 
    \item [2.] For every $1\leq i\leq l$, $\mathbf{M}\cdot(1,c_i,\ldots,c_i^{k-1})^{T}\in \text{Span}_{\mathbb{F}_q}\{(1,c_i,\ldots,c_i^{k-1})^{T}:~1\leq i\leq l\}$.
\end{itemize}
\end{lemma}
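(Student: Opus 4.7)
The plan is to define $\mathbf{M}$ by the natural formula dictated by condition~1 and then verify condition~2 via a polynomial-level calculation using the hypothesis $\mathbf{B}^{(l)}=\mathbf{T}\mathbf{A}^{(l)}$. The main obstacle will be to simplify the resulting sum cleanly enough that the hypothesis can be applied at exactly one point; once this is done, a standard Lagrange identity together with a degree count finishes the proof.

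First I would set $\mathbf{M}=\mathbf{A}^{(k)}\,\text{diag}(\bm{\theta})\,(\mathbf{B}^{(k)})^{-1}$ with $\bm{\theta}=(\theta_1,\ldots,\theta_k)$. Since the $b_i$ are distinct, $\mathbf{B}^{(k)}$ is an invertible Vandermonde matrix, and condition~1 follows immediately from $(\mathbf{B}^{(k)})^{-1}(1,b_i,\ldots,b_i^{k-1})^{T}=\mathbf{e}_i$. This is the same matrix appearing in Example~\ref{ex1}.

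Next, I would pass to the polynomial language of Remark~\ref{rmk2-2}. Let $\phi:\mathbb{F}_q^{<k}[x]\to\mathbb{F}_q^{<k}[x]$ be the $\mathbb{F}_q$-linear map with $\mathbf{v}_{\phi(f)}=\mathbf{v}_f\cdot\mathbf{M}$; condition~1 gives $\phi(f)(b_i)=\theta_i f(a_i)$ for every $i\in[k]$. Pairing condition~2 with arbitrary row vectors $\mathbf{v}_f$ shows it is equivalent to the statement that for each $j\in[l]$, $\phi(f)(c_j)$ is a fixed linear combination of $f(c_1),\ldots,f(c_l)$; equivalently, $\phi(V)\subseteq V$ where $V=\{f\in\mathbb{F}_q^{<k}[x]:f|_C\equiv 0\}$. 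Writing a generic $f\in V$ as $f=h_C\cdot g$ with $\deg g<k-l$, expanding $\phi(f)$ by Lagrange interpolation at $B$, substituting the identity $\theta_i=h_C(b_i)h_B'(b_i)/(h_C(a_i)h_A'(a_i))$ that follows directly from the definition of $\theta_i$, and using $h_C(b_i)/(c_j-b_i)=-h_{C\setminus\{c_j\}}(b_i)$, a routine simplification collapses the expression to
$$\phi(f)(c_j)\;=\;-h_B(c_j)\sum_{i=1}^{k}\frac{h_{C\setminus\{c_j\}}(b_i)\,g(a_i)}{h_A'(a_i)}.$$
Since $c_j\notin B$ implies $h_B(c_j)\neq 0$, I need each such sum to vanish for $j\in[l]$ and every $g$ of degree $<k-l$.

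The hypothesis is used here. The equality $\mathbf{B}^{(l)}=\mathbf{T}\mathbf{A}^{(l)}$ with invertible $\mathbf{T}$ is equivalent to saying that for every polynomial $q\in\mathbb{F}_q^{<l}[x]$ there exists $\tilde q\in\mathbb{F}_q^{<l}[x]$ with $q(b_i)=\tilde q(a_i)$ for all $i\in[k]$. Applied to $q=h_{C\setminus\{c_j\}}$, which has degree $l-1$, this produces $p_j\in\mathbb{F}_q^{<l}[x]$ with $h_{C\setminus\{c_j\}}(b_i)=p_j(a_i)$ for every $i\in[k]$. The inner sum then becomes $\sum_i P(a_i)/h_A'(a_i)$ with $P=p_j\cdot g$ of degree at most $(l-1)+(k-l-1)=k-2$. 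The standard Lagrange identity---obtained by reading off the coefficient of $x^{k-1}$ in $P(x)=\sum_i P(a_i)\,h_A(x)/((x-a_i)h_A'(a_i))$---gives $\sum_i P(a_i)/h_A'(a_i)=0$ because $\deg P\leq k-2$. This verifies condition~2 and completes the argument.
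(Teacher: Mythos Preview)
Your proposal is correct and follows essentially the same route as the paper: both define $\mathbf{M}=\mathbf{A}^{(k)}\operatorname{diag}(\bm{\theta})(\mathbf{B}^{(k)})^{-1}$, reduce condition~2 to showing $\mathbf{v}_{h_C\cdot g}\cdot\mathbf{M}\cdot\mathbf{c}_j=0$ (equivalently $\phi(h_Cg)(c_j)=0$) for all $g\in\mathbb{F}_q^{<k-l}[x]$, simplify via the same product identity to $-h_B(c_j)\sum_i h_{C\setminus\{c_j\}}(b_i)g(a_i)/h_A'(a_i)$, invoke the hypothesis to replace $h_{C\setminus\{c_j\}}(b_i)$ by $p_j(a_i)$ with $\deg p_j<l$, and finish with the Lagrange/Cramer identity $\sum_i P(a_i)/h_A'(a_i)=0$ for $\deg P\le k-2$. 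The only cosmetic differences are that you phrase the computation via the polynomial map $\phi$ and Lagrange interpolation at $B$, whereas the paper expands $\mathbf{c}_j$ directly in the basis $\{\mathbf{b}_i\}$; the resulting algebra is identical.
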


By Lemma \ref{lem2-1}, we have the following immediate corollary, which provides a sufficient condition for the existence of $\mathbf{M}_i$'s satisfying conditions 1 and 2 in Construction I.

\begin{corollary}\label{coro2-1}
Let $\zeta \geq 2$, $k$ and $l<k$ be positive integers. Let $A_i=\{a_{i,1},\ldots,a_{i,k}\}$, $1\leq i\leq \zeta$, and $C=\{c_1,\ldots, c_l\}$ be mutually disjoint subsets of $\mathbb{F}_q$. If for every $2\leq i\leq \zeta$, there is an invertible matrix $\mathbf{T}_i$ over $\mathbb{F}_q$ of order $l$ satisfying
\begin{equation*}
\mathbf{A}_i^{(l)}=\mathbf{T}_i\mathbf{A}_1^{(l)}.
\end{equation*}
Then, for every $2\leq i\leq \zeta$, there is a $k\times k$ matrix $\mathbf{M}_i$ satisfies condition 1 and 2 in Construction I.
\end{corollary}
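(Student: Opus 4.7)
The plan is to derive the corollary by applying Lemma~\ref{lem2-1} once for each index $i \in \{2, \ldots, \zeta\}$. For a fixed such $i$, I would instantiate the lemma with $A := A_1 = \{a_{1,1}, \ldots, a_{1,k}\}$, $B := A_i = \{a_{i,1}, \ldots, a_{i,k}\}$, and the subset $C = \{c_1,\ldots,c_l\}$ carried over unchanged. This choice reduces the $\zeta$-fold problem of Construction~I to the two-set setting that Lemma~\ref{lem2-1} already handles.

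First I would verify the hypotheses of Lemma~\ref{lem2-1}. Disjointness of $A_1$ and $A_i$ follows immediately from the assumed pairwise disjointness of $A_1, \ldots, A_\zeta, C$. The Vandermonde relation required by the lemma, namely $\mathbf{B}^{(l)} = \mathbf{T}\mathbf{A}^{(l)}$ for some invertible $\mathbf{T}$ of order $l$, becomes $\mathbf{A}_i^{(l)} = \mathbf{T}_i \mathbf{A}_1^{(l)}$, which is precisely the hypothesis of the corollary with $\mathbf{T} := \mathbf{T}_i$.

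Having discharged the hypotheses, I would invoke the lemma to obtain a $k \times k$ matrix $\mathbf{M}_i$. The first conclusion of the lemma reads
$$\mathbf{M}_i\cdot (1, a_{i,j}, \ldots, a_{i,j}^{k-1})^{T} \;=\; \theta_j \cdot (1, a_{1,j}, \ldots, a_{1,j}^{k-1})^{T}$$
for every $1 \leq j \leq k$, with $\theta_j = h_{A_i \cup C \setminus \{a_{i,j}\}}(a_{i,j}) / h_{A_1 \cup C \setminus \{a_{1,j}\}}(a_{1,j})$. After renaming $\theta_j$ as $\theta_{i,j}$, this is verbatim Condition~1 of Construction~I. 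The second conclusion of the lemma, that $\mathbf{M}_i\cdot (1, c_j, \ldots, c_j^{k-1})^{T}$ lies in $\text{Span}_{\mathbb{F}_q}\{(1, c_s, \ldots, c_s^{k-1})^{T} : 1 \leq s \leq l\}$ for every $1 \leq j \leq l$, is exactly Condition~2 of Construction~I.

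There is no substantive obstacle at this stage: the corollary is essentially a re-indexing of the lemma, and all the analytical content has been shifted into Lemma~\ref{lem2-1} itself, whose proof is where one actually has to construct $\mathbf{M}$ from the identity $\mathbf{B}^{(l)} = \mathbf{T}\mathbf{A}^{(l)}$ (presumably by patching together a block that acts correctly on the span of the Vandermonde columns of $C$ with a block derived from $\mathbf{T}$ that sends each $(1,b_j,\ldots,b_j^{k-1})^T$ to the scaled vector $\theta_j (1,a_j,\ldots,a_j^{k-1})^T$). Once that lemma is in hand, the corollary is immediate.
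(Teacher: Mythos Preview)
Your proposal is correct and matches the paper's approach exactly: the paper also states that Corollary~\ref{coro2-1} is an immediate consequence of Lemma~\ref{lem2-1}, obtained by setting $A:=A_1$, $B:=A_i$ for each $2\le i\le\zeta$, and no further argument is given.
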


Now, we present the proof of Lemma \ref{lem2-1}.

\begin{proof}[Proof of Lemma \ref{lem2-1}]
Define $\mathbf{M}=\mathbf{A}^{(k)}\cdot \text{diag}(\theta_1,\ldots,\theta_k)\cdot (\mathbf{B}^{(k)})^{-1}$. 
Let $\mathbf{a}_i=(1,a_i,\ldots,a_i^{k-1})^{T}$, $\mathbf{b}_i=(1,b_i,\ldots,b_i^{k-1})^{T}$, $1\leq i\leq k$ and $\mathbf{c}_i=(1,c_i,\ldots,c_i^{k-1})^{T}$, $1\leq i\leq l$. Then, for every $1\leq i\leq k$, $\mathbf{M}\cdot \mathbf{b}_i=\theta_i \mathbf{a}_i$. Thus, we only need to verify that $\mathbf{M}$ satisfies the second condition.

For every $1\leq j\leq l$, it holds that
\begin{equation*}\label{eq3-1-2}
    \mathbf{c}_j=\sum_{i=1}^{k}\frac{h_{B\setminus\{b_i\}}(c_j)}{h_{B\setminus\{b_i\}}(b_i)}\mathbf{b}_i.
\end{equation*}
Denote $\eta_{i,j}=\frac{h_{B\setminus\{b_i\}}(c_j)}{h_{B\setminus\{b_i\}}(b_i)}$.
This leads to $\mathbf{M}\cdot \mathbf{c}_j=\sum_{i=1}^{k}\eta_{i,j}\theta_i\mathbf{a}_i.$
%=\sum_{i=1}^{k}\eta_{i,j}M(\mathbf{b}_i)

For each $f(x)\in \mathbb{F}_q^{<k-l}[x]$, we denote $\mathbf{v}_{h_C\cdot f}$ as the coefficient vector of polynomial $h_C(x)f(x)$. Then,
\begin{align}
    \mathbf{v}_{h_C\cdot f}\cdot \mathbf{M}\cdot \mathbf{c}_j&=\sum_{i=1}^{k}\eta_{i,j}\theta_i(\mathbf{v}_{h_C\cdot f}\cdot \mathbf{a}_i)= \sum_{i=1}^{k}\eta_{i,j}\theta_i h_C(a_i)f(a_i) \nonumber\\
    &=\sum_{i=1}^{k}\frac{h_{B\setminus\{b_i\}}(c_j)}{h_{B\setminus\{b_i\}}(b_i)}
    \frac{h_{B\cup C\setminus\{b_i\}}(b_i)}{h_{A\cup C\setminus\{a_i\}}(a_i)} 
    h_C(a_i)f(a_i)\nonumber\\
    &=\sum_{i=1}^{k}\frac{h_{C}(b_i)}{h_{A\setminus\{a_i\}}(a_i)} h_{B\setminus\{b_i\}}(c_j)f(a_i). \label{eq3-1-3}
\end{align}
Note that $h_{C}(b_i)h_{B\setminus\{b_i\}}(c_j)=\prod_{c\in C}(b_i-c)\prod_{b\in B\setminus\{b_i\}}(c_j-b)=-h_{C\setminus\{c_j\}}(b_i)h_{B}(c_j)$. Thus, (\ref{eq3-1-3}) is equivalent to 
\begin{align}
    \mathbf{v}_{h_C\cdot f}\cdot \mathbf{M}\cdot \mathbf{c}_j
    &=-h_{B}(c_j)\left(\sum_{i=1}^{k}\frac{h_{C\setminus\{c_j\}}(b_i)f(a_i)}{h_{A\setminus\{a_i\}}(a_i)}\right). \label{eq3-1-4}
\end{align}
Denote $\mathbf{h}_{C\setminus\{c_j\}}$ as the coefficient vector of polynomial $h_{C\setminus\{c_j\}}(x)$. By (\ref{eq3-1-1}), we have 
$$\mathbf{h}_{C\setminus\{c_j\}}\cdot\mathbf{B}^{(l)}=\mathbf{h}_{C\setminus\{c_j\}}\cdot\mathbf{T}\mathbf{A}^{(l)}.$$
Denote $H(x)\in \mathbb{F}_q^{<l}[x]$ as the polynomial with coefficient vector $\mathbf{h}_{C\setminus\{c_j\}}\cdot\mathbf{T}$. Then, the above identity implies that $h_{C\setminus\{c_j\}}(b_i)=H(a_i)$ for every $1\leq i\leq k$. Thus, (\ref{eq3-1-4}) can be further simplified as
\begin{align}
    \mathbf{v}_{h_C\cdot f}\cdot \mathbf{M}\cdot \mathbf{c}_j
    &=-h_{B}(c_j)\left(\sum_{i=1}^{k}\frac{(H\cdot f)(a_i)}{h_{A\setminus\{a_i\}}(a_i)}\right). \label{eq3-1-5}
\end{align}

Meanwhile, by $\deg(f)\leq k-l-1$, we have $\deg(H\cdot f)\leq k-2$. Denote $\tilde{\mathbf{v}}$ as the coefficient vector of $H(x)f(x)$. Then, we have
$$\tilde{\mathbf{v}}\cdot \mathbf{A}^{(k-1)}=((H\cdot f)(a_1),\ldots,(H\cdot f)(a_k)).$$
On the other hand, by Cramer's rule, we know that
$$\mathbf{A}^{(k-1)}\cdot (h_{A\setminus\{a_1\}}(a_1)^{-1},\ldots,h_{A\setminus\{a_k\}}(a_k)^{-1})^{T}=\mathbf{0}.$$
Thus, by (\ref{eq3-1-5}) and
$$\sum_{i=1}^{k}\frac{(H\cdot f)(a_i)}{h_{A\setminus\{a_i\}}(a_i)}=\tilde{\mathbf{v}}\cdot \mathbf{A}^{(k-1)}\cdot (h_{A\setminus\{a_1\}}(a_1)^{-1},\ldots,h_{A\setminus\{a_k\}}(a_k)^{-1})^{T},$$
we have $\mathbf{v}_{h_C\cdot f}\cdot \mathbf{M}\cdot \mathbf{c}_j=0$ for every $f(x)\in\mathbb{F}_q^{<k-l}[x]$ and $1\leq j\leq l$. Note that $\{\mathbf{v}_{h_C\cdot f}: f(x)\in\mathbb{F}_q^{<k-l}[x]\}$ is a $(k-l)$-dim subspace of $\mathbb{F}_q^{k}$ and $\mathbf{v}_{h_C\cdot f}\cdot \mathbf{c}_j=0$ for every $f(x)\in\mathbb{F}_q^{<k-l}[x]$ and $1\leq j\leq l$. Therefore, by $dim(\text{Span}\{\mathbf{c}_1,\ldots,\mathbf{c}_l\})=l$, we have $\mathbf{M}\cdot \mathbf{c}_j\in \text{Span}\{\mathbf{c}_1,\ldots,\mathbf{c}_l\}$. 

This completes the proof.
\end{proof}

Next, using the multiplicative subgroups of $\mathbb{F}_q^*$, we give an explicit construction of $A_i$'s and $C$ that satisfy the requirements in Corollary \ref{coro2-1}.

Let $q\geq (\zeta+1)\cdot\max\{k,l^I\}+1$ be a prime power. Assume that $\max\{k,l^I\}|(q-1)$ and let $G$ be a multiplicative subgroup of $\mathbb{F}_q^{*}$ of order $\max\{k,l^I\}$\footnote{When $\max\{k,l^I\}\nmid(q-1)$, one can let $G$ be the smallest multiplicative subgroup of $\mathbb{F}_q^{*}$ of order at least $\max\{k,l^I\}$ and the condition that $q\geq (\zeta+1)\cdot\max\{k,l^I\}+1$ becomes $q\geq (\zeta+1)\cdot|G|+1$.}. Then, $|\mathbb{F}_q^{*}/G|\geq\zeta+1$. Denote $\alpha\in \mathbb{F}_q^*$ as the generator of the quotient group $\mathbb{F}_q^{*}/G$. Let $A_1=\{a_1,\ldots, a_{k}\}\subseteq G$. Then, we define
\begin{itemize}
  \item $A_i=\alpha^{i-1} A_1=\{\alpha^{i-1}a_1,\ldots, \alpha^{i-1}a_{k}\}$, $2\leq i\leq \zeta+1$;
  \item $C=\{\alpha^{\zeta}a_1,\ldots, \alpha^{\zeta}a_{k}\}\subseteq A_{\zeta+1}$ and $B=\{b_1,\ldots, b_{l^I-l^F}\}$ be any subset of $\alpha^{\zeta}G\setminus C$ of size $l^I-l^F$.
\end{itemize}
Note that $B,C\subseteq \alpha^{\zeta}G$ and $A_i\subseteq \alpha^{i-1} G$, $1\leq i\leq \zeta$. Thus, they are pairwise disjoint in $\mathbb{F}_q^*$. Meanwhile, for $2\leq i\leq \zeta$, 
\begin{align}\label{A_i}
    \mathbf{A}_{i}^{(l^F)}&=\left(\begin{array}{cccc}
       1  & 1 & \cdots & 1 \\
       \alpha^{i-1}a_1 & \alpha^{i-1}a_2 & \cdots & \alpha^{i-1}a_{k} \\
       %(\alpha^{i}a_1)^2 & (\alpha^{i}a_2)^2 & \cdots & (\alpha^{i}a_{k})^2 \\
       \vdots & \vdots &  & \vdots \\
       (\alpha^{i-1}a_1)^{l^F-1} & (\alpha^{i-1}a_2)^{l^F-1} & \cdots & (\alpha^{i-1}a_{k})^{l^F-1}
    \end{array}\right) \nonumber \\
    &=\text{diag}(1,\alpha^{i-1},\ldots, \alpha^{(i-1)(l^F-1)})\mathbf{A}_1^{(l^F)}.
\end{align}
Since $\alpha\neq 0$, the diagonal matrix $\text{diag}(1,\alpha^{i-1},\ldots, \alpha^{(i-1)(l^F-1)})$ is invertible. Thus, for $2\leq i\leq \zeta$, set $\mathbf{T}_i=\text{diag}(1,\alpha^{i-1},\ldots, \alpha^{(i-1)(l^F-1)})$ and we have $\mathbf{A}_i^{(l^F)}=\mathbf{T}_i\mathbf{A}_1^{(l^F)}$.

\begin{remark}
The sets $A$, $B$ and $C$ in Example \ref{ex1} can be viewed as an example of the above construction for $q=19$, $k=4$, $l^I=l^F=2$ and $\zeta=2$.
\end{remark}

Clearly, the sets $A_i$'s and $C$ constructed above satisfy the requirements in Corollary \ref{coro2-1}. Then, $A_i$'s, $B$ and $C$ satisfy the requirements in Construction II. This leads to the following immediate corollary.

\begin{corollary}\label{coro2-2}
For positive integers $\zeta\geq 2$, $k$, $l^I$ and $l^F\leq \min\{k,l^I\}$, let $q$ be a prime power such that $\max\{k,l^I\}\mid (q-1)$ and $q\geq (\zeta+1)\cdot\max\{k,l^I\}+1$. Then, there is an explicit construction of $(k+l^I,k;\zeta k+l^F,\zeta k)$ MDS convertible code with optimal access cost.
\end{corollary}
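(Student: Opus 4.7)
The plan is to verify that the sets $A_1,\ldots,A_\zeta$, $B$, $C$ constructed in the paragraph preceding the statement meet the hypotheses of Construction II, and then to read off the conclusion from Theorem \ref{thm2}.

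First I will check that the ambient structures exist under the stated field-size conditions. The divisibility $\max\{k,l^I\}\mid(q-1)$ guarantees a multiplicative subgroup $G\leq\mathbb{F}_q^*$ of order $\max\{k,l^I\}$, and the bound $q\geq(\zeta+1)\cdot\max\{k,l^I\}+1$ forces $|\mathbb{F}_q^*/G|\geq \zeta+1$, so a generator $\alpha$ of the quotient $\mathbb{F}_q^*/G$ can be chosen whose first $\zeta+1$ powers yield distinct cosets. I will then define $A_1\subseteq G$ with $|A_1|=k$, $A_i=\alpha^{i-1}A_1$ for $2\leq i\leq\zeta$, $C=\alpha^{\zeta}A_1$, and $B\subseteq \alpha^{\zeta}G\setminus C$ of size $l^I-l^F$ exactly as in the preceding paragraph; pairwise disjointness of these sets in $\mathbb{F}_q^*$ is inherited from the disjointness of the cosets $G,\alpha G,\ldots,\alpha^\zeta G$, and $|\alpha^\zeta G|=\max\{k,l^I\}\geq l^I$ ensures that $B$ of the prescribed size fits inside $\alpha^\zeta G\setminus C$.

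Next I will confirm that these sets satisfy the hypothesis of Corollary \ref{coro2-1}. The computation \eqref{A_i} shows that for each $2\leq i\leq\zeta$ one has $\mathbf{A}_i^{(l^F)}=\mathbf{T}_i\mathbf{A}_1^{(l^F)}$ with $\mathbf{T}_i=\text{diag}(1,\alpha^{i-1},\ldots,\alpha^{(i-1)(l^F-1)})$; since $\alpha\in\mathbb{F}_q^*$, every diagonal entry is nonzero and $\mathbf{T}_i$ is invertible. Corollary \ref{coro2-1} therefore produces, for each $2\leq i\leq\zeta$, a $k\times k$ matrix $\mathbf{M}_i$ fulfilling Conditions 1 and 2 of Construction I. With these ingredients in place, Construction II outputs a pair $(\mathcal{C}^I,\mathcal{C}^F)$, and Theorem \ref{thm2} immediately yields the desired $(k+l^I,k;\zeta k+l^F,\zeta k)$ MDS convertible code with optimal access cost.

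There is no genuinely hard step here: the substantive algebra was completed in Lemma \ref{lem2-1} and the proofs of Theorems \ref{thm3-1} and \ref{thm2}, and the present corollary merely packages the explicit coset-based construction that supplies valid input to them. The one quantitative point worth flagging is the linear-size claim implicit in the word ``explicit'' in the statement: since $n^I=k+l^I\leq 2\max\{k,l^I\}$, the required lower bound $(\zeta+1)\cdot\max\{k,l^I\}+1$ is at most $(\zeta+1)n^I+1$, so the minimum admissible $q$ is linear in $n^I$ for any fixed $\zeta$, matching the claim advertised in the introduction.
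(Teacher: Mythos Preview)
Your proposal is correct and follows essentially the same route as the paper: verify that the coset-based sets $A_i$, $C$, $B$ satisfy the hypothesis of Corollary~\ref{coro2-1} via the diagonal transition matrices $\mathbf{T}_i$ in \eqref{A_i}, then invoke Construction~II and Theorem~\ref{thm2}. One small point to tidy: for Construction~II the set $C$ must have size $l^F$, so you should take $C=\{\alpha^{\zeta}a_1,\ldots,\alpha^{\zeta}a_{l^F}\}$ rather than all of $\alpha^{\zeta}A_1$ (the paper's own display appears to contain a typo here); with that correction your count $|\alpha^{\zeta}G\setminus C|\geq l^I-l^F$ goes through cleanly.
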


\begin{remark}
The requirement of the field size in Corollary \ref{coro2-2} is $q\geq (\zeta+1)\cdot\max\{k,n^I-k\}+1$, which is less strict than those in \cite{MR22a}.
\end{remark}

\section{Constructions of Locally repairable convertible codes}\label{sec_LRCC}

In this section, we focus on constructions of locally repairable convertible codes (LRCCs) in merge regime. First, we recall the construction of LRCs introduced in \cite{TB14} using ``good'' polynomials. Then, we give a example to illustrate the underlying idea of the construction. In Section \ref{sec_LRCC2}, based on the construction of LRCs using ``good'' polynomials, we present our construction of LRCCs. Though this construction, for $n^I=(k+l^I)(r+1)$ and $n^F=(k+l^F)(r+1)$, we obtain a family of $(n^I,kr,r;n^F,\zeta k r,r)$ LRCCs with write access cost $l^F(r+1)$ and read access cost $\zeta l^Fr$ over finite field of size linear in $n^I$. The access costs of the constructed LRCCs are then shown to be optimal in Section \ref{sec_lb} by Corollary \ref{coro1}. 

% Let $(\mathcal{C}^I,\mathcal{C}^F)$ be an $(n^I,kr,r;n^F,\zeta kr,r)$ optimal LRCC.
% Note that 
% \begin{align*}
% \frac{n^F}{r+1}&=\zeta k+l^F\leq(\zeta+1)k,\\
%     n^F-n^I &=((\zeta-1)k+l^F-l^I)(r+1)\\
%     &< (\zeta-1)kr+\zeta k.
% \end{align*}
% Therefore, by Corollary \ref{coro1}, the write access cost of the conversion procedure of $(\mathcal{C}^I,\mathcal{C}^F)$ is at least 
% $$n^F-\zeta k(r+1)=l^F(r+1)$$
% and the read access cost is at least\footnote{When $l^F=k$, the two bounds on the read access cost in Corollary \ref{coro1} are both $\zeta kr$.}
% $$\zeta(\frac{rn^F}{r+1}-\zeta kr)=\zeta l^Fr.$$

\subsection{Optimal LRCs through ``good" polynomials}

To start with, we first recall the general construction of optimal LRCs in \cite{TB14}.

Let $n,k,r$ be positive integers such that $(r+1)|n$ and let $\mathcal{A}=\{A_1,\ldots, A_{\frac{n}{r+1}}\}$ be a family of $n/(r+1)$ mutually disjoint $r+1$-subsets of $\mathbb{F}_q$. A polynomial $g(x)\in \mathbb{F}_q[x]$ of degree $r+1$ is called a \emph{good} polynomial (w.r.t. $\mathcal{A}$) if for each $1\leq i\leq n/(r+1)$, $g$ is constant on $A_i$ and we use $g(A_i)$ to denote this constant. 

Let $\mathbf{m}=(m_1,\ldots,m_k)\in \mathbb{F}_q^{kr}$ be a message vector. Given a good polynomial $g$, define the encoding polynomial
\begin{equation}\label{cons_LRC}
  f_\mathbf{m}(x)=\sum_{i=1}^{r}\sum_{j=0}^{k-1}m_{jr+i}g(x)^jx^{i-1}.
\end{equation}
Then, as shown in \cite{TB14}, $\mathcal{C}_0=\{(f_{\mathbf{m}}(a),a\in A):\mathbf{m}\in\mathbb{F}_q^{kr}\}$
is an optimal $(n,kr,r)$-LRC w.r.t. the Singleton-type bound in Theorem \ref{Singleton_bound}. Let $\mathbf{v}=(v_1,\ldots,v_n)\in (\mathbb{F}_q^{*})^{n}$. Like GRS codes to RS codes, one can easily obtain that the following code
$$\mathcal{C}=\{(v_if_{\mathbf{m}}(a_i),i\in [n]):\mathbf{m}\in\mathbb{F}_q^{kr}\}$$
is also an $(n,kr,r)$-optimal LRC. For simplicity, we call $\mathcal{C}$ the LRC defined by $\mathcal{A}$, $g(x)$ and $\mathbf{v}$.
%, denoted by $LRC(\mathcal{A},g(x),\mathbf{v})$. 
Specially, when $\mathbf{v}$ is the all one vector, we call $\mathcal{C}$ the LRC defined by $\mathcal{A}$ and $g(x)$. 

%Note that the encoding polynomial $f_{\mathbf{m}}$ has degree at most $k+\frac{k}{r}-2$. Thus, the code $\mathcal{C}$ is a subcode of the $[n,k+\frac{k}{r}-1]$ GRS code defined by evaluation points $A$ and vector $\mathbf{v}$. Moreover, 

Let $\tilde{\mathcal{C}}$ be the $[n,kr]$ GRS code defined by evaluation points $\bigcup_{i=1}^{n/(r+1)}A_i$ and vector $\mathbf{v}$. There is a one-to-one correspondence between codewords of $\tilde{\mathcal{C}}$ and polynomials of degree less than $kr$. Similarly, there is also a one-to-one correspondence between codewords of $\mathcal{C}$ and polynomials in $\mathbb{F}_q[x]$ of form (\ref{cons_LRC}). From this perspective, $\mathcal{C}$ can be viewed as the image of $\tilde{\mathcal{C}}$ under the base transformation that maps $\{x^{i}\}_{0\leq i\leq kr-1}$ to $\{x^{i}g^{j}\}_{0\leq i\leq r-1\atop 0\leq j\leq k-1}$. Based on this connection between the LRCs obtained from good polynomials and the GRS codes, in the following, we generalize our constructions of MDS convertible codes in Section \ref{sec_MDSC} and obtain constructions of optimal LRCCs.

\subsection{An illustrative example}

Similar to Section \ref{sec_MDSC1}, we begin with the following example.

\begin{example}\label{ex2}
In this example, we construct a $(9,4,2)$-LRC $\mathcal{C}^I$ and a $(15,8,2)$-LRC $\mathcal{C}^F$ over $\mathbb{F}_q=\mathbb{F}_{19}$, and a conversion procedure $T$ from $\mathcal{C}^I$ to $\mathcal{C}^F$ with write access cost $3$ and read access cost $4$. Together, they form an $(9,4,2;15,8,2)$ optimal LRCC with optimal access cost.

Let $A_1=\{a_{11},a_{12},a_{13}\}=\{1,7,11\}$, $A_2=\{a_{21},a_{22},a_{23}\}=\{8,18,12\}$, $B_1=\{b_{11},b_{12},b_{13}\}=\{2,14,3\}$, $B_2=\{b_{21},b_{22},b_{23}\}=\{16,17,5\}$ and $C=\{c_1,c_2,c_3\}=\{4,9,6\}$. Let $A=A_1\cup A_2$ and $B=B_1\cup B_2$.
%Let $h_A(x)$ and $h_B(x)$ be the annihilator polynomials of $A$ and $B$, respectively. 
By Table I, it holds that $a_{1j}=2^{6(j-1)}$ ($1\leq j\leq 3$), $A_2=2^3 A_1$, $B_1=2 A_1$, $B_2=2^4 A_1$ and $C=2^2 A_1$. 

Let $g(x)=x^3$. Then, $g$ is constant on each $A_i$, $B_i$ and $C$. Specifically, we have $g(A_1)=1$, $g(A_2)=18$, $g(B_1)=8$, $g(B_2)=11$ and $g(C)=7$. Denote $G_1=\{g(A_1),g(A_2)\}=\{1,18\}$ and $G_2=\{g(B_1),g(B_2)\}=\{8,11\}$.

For $f\in \text{Span}_{\mathbb{F}_q}\{1,x,g(x),xg(x)\}$, denote $\mathbf{v}_f$ as its coefficient vector w.r.t. basis $\{1,x,g(x),xg(x)\}$. Define the initial code $\mathcal{C}^I$ as the $(9,4,2)$-LRC defined by $\{A_1,A_2,C\}$ and $g$, i.e., 
$$\mathcal{C}^I=\{(f(\alpha), \alpha\in A\cup C):~f\in \text{Span}_{\mathbb{F}_q}\{1,x,g(x),xg(x)\}\}.$$

Next, to define the final code $\mathcal{C}^F$, we need some preliminaries. Let $\theta_1=5$, $\theta_2=15$ and $\bm{\theta}=(5,5,15,15)$. Define
\begin{align*} 
\mathbf{M}
&=\left(\begin{array}{cccc}
1 & 1 & 1 & 1\\
a_{11} & a_{12} & a_{21} & a_{22}\\
a_{11}^{3} & a_{12}^{3} & a_{21}^{3} & a_{22}^{3}\\
a_{11}^{4} & a_{12}^{4} & a_{21}^{4} & a_{22}^{4}
\end{array}\right)\cdot 
\text{diag}(\bm{\theta})\cdot
% \left(\begin{array}{cccc}
% 5 &  &  & \\
%  & 5 &  & \\
%  &  & 15 & \\
%  &  &  & 15
% \end{array}\right)
\left(\begin{array}{cccc}
1 & 1 & 1 & 1\\
b_{11} & b_{12} & b_{21} & b_{22}\\
b_{11}^{3} & b_{12}^{3} & b_{21}^{3} & b_{22}^{3}\\
b_{11}^{4} & b_{12}^{4} & b_{21}^{4} & b_{22}^{4}
\end{array}\right)^{-1}\\
&=\left(\begin{array}{cccc}
10 & 0 & 16 & 0\\
0 & 5 & 0 & 8\\
14 & 0 & 6 & 0\\
0 & 7 & 0 & 3
\end{array}\right).
\end{align*}
Then, one can verify that
\begin{itemize}
  \item [1.] $\mathbf{M}\cdot(1,b_{ij},b_{ij}^3,b_{ij}^4)^{T}=\theta_i(1,a_{ij},a_{ij}^3,a_{ij}^4)^{T}$, $(i,j)\in [2]\times [3]$;
  \item [2.] $\mathbf{M}\cdot(1,c_1,c_1^3,c_1^4)^{T}=(8,16,18,17)^{T}=15 (1,c_1,c_1^3,c_1^4)^{T}-7(1,c_2,c_2^3,c_2^4)^{T}$;
  \item [2.] $\mathbf{M}\cdot(1,c_2,c_2^3,c_2^4)^{T}=(8,17,18,5)^{T}=11 (1,c_1,c_1^3,c_1^4)^{T}-3(1,c_2,c_2^3,c_2^4)^{T}$.
\end{itemize}
% $$\mathbf{M}\cdot\left(\begin{array}{cccc}
% 1\\
% c_1\\
% c_1^3\\
% c_1^4
% \end{array}\right)=\left(\begin{array}{cccc}
% 8\\
% 16\\
% 18\\
% 17
% \end{array}\right)=15\left(\begin{array}{cccc}
% 1\\
% c_1\\
% c_1^3\\
% c_1^4
% \end{array}\right)+12\left(\begin{array}{cccc}
% 1\\
% c_2\\
% c_2^3\\
% c_2^4
% \end{array}\right)$$
% and 
% $$\mathbf{M}\cdot\left(\begin{array}{cccc}
% 1\\
% c_2\\
% c_2^3\\
% c_2^4
% \end{array}\right)=\left(\begin{array}{cccc}
% 8\\
% 17\\
% 18\\
% 5
% \end{array}\right)=11\left(\begin{array}{cccc}
% 1\\
% c_1\\
% c_1^3\\
% c_1^4
% \end{array}\right)-3\left(\begin{array}{cccc}
% 1\\
% c_2\\
% c_2^3\\
% c_2^4
% \end{array}\right).$$
For $f\in \text{Span}_{\mathbb{F}_q}\{1,x,g(x),xg(x)\}$, denote $M(f)$ as the polynomial in $\text{Span}_{\mathbb{F}_q}\{1,x,g(x),xg(x)\}$ with coefficient vector $\mathbf{v}_f\cdot \mathbf{M}$ w.r.t. basis $\{1,x,g(x),xg(x)\}$. Then,
\begin{equation*}
    \begin{cases}
    M(f)(b_{i,j})=\mathbf{v}_f\cdot \mathbf{M}\cdot (1,b_{ij},b_{ij}^3,b_{ij}^4)^{T}=\theta_if(a_{ij}),~(i,j)\in [2]\times [3];\\
    M(f)(c_1)=\mathbf{v}_f\cdot \mathbf{M}\cdot (1,c_1,c_1^3,c_1^4)^{T}=15f(c_1)-7f(c_2);\\
    M(f)(c_2)=\mathbf{v}_f\cdot \mathbf{M}\cdot (1,c_2,c_2^3,c_2^4)^{T}=11f(c_1)-3f(c_2).
    \end{cases}
\end{equation*}

For $f_1,f_2\in \text{Span}_{\mathbb{F}_q}\{1,x,g(x),xg(x)\}$, define $T$ as the following map over $\mathbb{F}_q[x]$
$$T:~(f_1,f_2)\longmapsto (h_{G_2}\circ g)\cdot f_1+(h_{G_1}\circ g)\cdot M(f_2).$$
Note that $h_{G_1}\circ g=(g(x)-g(A_1))(g(x)-g(A_2))$ and $h_{G_2}\circ g=(g(x)-g(B_1))(g(x)-g(B_2))$. Thus, we have
\begin{equation}\label{ex2-eq1}
    T(f_1,f_2)\in\text{Span}_{\mathbb{F}_q}\{x^ig^{j}\}_{0\leq i\leq 1\atop 0\leq j\leq 3}.
\end{equation} 
Let $\mathbf{u}=(u_\alpha, \alpha\in A\cup B\cup C)$ be a vector in $\mathbb{F}_{q}^{15}$ with coordinates indexed by elements in $A\cup B\cup C$ and 
\begin{equation*}
    \begin{cases}
    u_{a_{ij}}=(h_{G_2}\circ g)(a_{ij})^{-1},~(i,j)\in [2]\times[3];\\
    u_{b_{ij}}=\theta_i^{-1}(h_{G_1}\circ g)(b_{ij})^{-1},~(i,j)\in [2]\times[3];\\
    u_{c_i}=1,~1\leq i\leq 3.\\
    \end{cases}
\end{equation*}

Now, we define the final code $\mathcal{C}^{F}$ as
$$\mathcal{C}^F=\{(u_\alpha T(f_1,f_2)(\alpha), \alpha\in A\cup B\cup C):~f_1,f_2\in \text{Span}_{\mathbb{F}_q}\{1,x,g(x),xg(x)\}\}.$$
By (\ref{ex2-eq1}), $\mathcal{C}^F$ is a subcode of the optimal $(15,8,2)$-LRC defined by $\{A_1,A_2,B_1,B_2,C\}$ and $\mathbf{u}$. Note that
\begin{equation*}
    u_\alpha T(f_1,f_2)(\alpha)=\begin{cases}
    f_1(a_{i,j}),~\text{when}~\alpha=a_{i,j};\\
    f_2(a_{i,j}),~\text{when}~\alpha=b_{i,j}.
    \end{cases} 
\end{equation*}
Thus, we have $(u_\alpha T(f_1,f_2)(\alpha),\alpha\in A)=(f_1(\alpha),\alpha\in A)$ and $(u_\alpha T(f_1,f_2)(\alpha),\alpha\in B)=(f_2(\alpha),\alpha\in A)$. This means that we can use $(u_\alpha T(f_1,f_2)(\alpha),\alpha\in A)$ and $(u_\alpha T(f_1,f_2)(\alpha),\alpha\in B)$ to reconstruct $f_1$ and $f_2$, respectively. Thus, $\dim(\mathcal{C}^F)=8$. Moreover, we have
\begin{equation*}%\label{ex2-eq1}
    \begin{cases}
    u_{c_1} T(f_1,f_2)(c_1)=4f_1(c_1)+10\cdot (15f_2(c_1)-7f_2(c_2)),\\
    u_{c_2} T(f_1,f_2)(c_2)=4f_1(c_2)+10\cdot (11f_2(c_1)-3f_2(c_2))
    \end{cases}
\end{equation*}
and by locality, $u_{c_3} T(f_1,f_2)(c_3)$ is a linear combination of $u_{c_1} T(f_1,f_2)(c_1)$ and $u_{c_2} T(f_1,f_2)(c_2)$. Therefore, for any two codewords $\mathbf{c}_1=(f_1(\alpha), \alpha\in A\cup C)$ and $\mathbf{c}_2=(f_2(\alpha), \alpha\in A\cup C)$ in $\mathcal{C}^I$, the map $T$ induces a conversion procedure from $(\mathbf{c}_1,\mathbf{c}_2)$ to $\mathbf{d}=(u_\alpha T(f_1,f_2)(\alpha),\alpha\in A\cup B\cup C)\in \mathcal{C}^F$ such that 
\begin{itemize}
    \item [1)]  the $6$ symbols in $\mathbf{c}_1|_{A}$ remain in $\mathbf{d}$ as $\mathbf{d}|_{A}$, and the $6$ symbols in $\mathbf{c}_2|_{A}$ remain in $\mathbf{d}$ as $\mathbf{d}|_{B}$;
    \item [2)] the $3$ symbols in $\mathbf{d}|_{C}$ are linear combinations of symbols in $\mathbf{c}_1|_{\{c_1,c_2\}}$ and $\mathbf{c}_2|_{\{c_1,c_2\}}$.
\end{itemize}
Therefore, the conversion procedure $T$ has write access cost $3$ and read access cost $4$.
\end{example}

\begin{remark}\label{rmk3-1}
Example \ref{ex2} provides an overview of the underlying idea of our general construction. 
That is, by selecting proper evaluation points
$A_i=\{a_{i,1},\ldots,a_{i,r+1}\}$, $B_i=\{b_{i,1},\ldots,b_{i,r+1}\}$, $1\leq i\leq k$ and $C_{i}=\{c_{i,1},\ldots,c_{i,r+1}\}$, $1\leq i\leq l$, we can obtain a polynomial $g$ and a surjective map $T:\text{Span}_{\mathbb{F}_q}\{x^sg^t\}_{0\leq s\leq r-1\atop 0\leq t\leq k-1}\times \text{Span}_{\mathbb{F}_q}\{x^sg^t\}_{0\leq s\leq r-1\atop 0\leq t\leq k-1} \rightarrow \text{Span}_{\mathbb{F}_q}\{x^sg^t\}_{0\leq s\leq r-1\atop 0\leq t\leq 2k-1}$ such that the following holds:
\begin{itemize}
    \item [1.] $g$ is a good polynomial, i.e., $\deg(g)=r+1$ and $g$ is constant on each $A_i$, $B_i$ and $C_i$.
    \item [2.] For any $f_1,f_2\in\text{Span}_{\mathbb{F}_q}\{x^sg^t\}_{0\leq s\leq r-1\atop 0\leq t\leq k-1}$, $T(f_1,f_2)(a_{i,j})=\alpha_{i,j}f_1(a_{i,j})$ and $T(f_1,f_2)(b_{i,j})=\beta_{i,j}f_2(a_{i,j})$, where for each $(i,j)\in [k]\times [r+1]$, $\alpha_{i,j}$ and $\beta_{i,j}$ are non-zero constants determined by $A=\bigcup_{i\in [k]}A_i$, $B=\bigcup_{i\in [k]}B_i$ and $C=\bigcup_{i\in [l]}C_i$;
    \item [3.] For any $f_1,f_2\in\text{Span}_{\mathbb{F}_q}\{x^sg^t\}_{0\leq s\leq r-1\atop 0\leq t\leq k-1}$, $T(f_1,f_2)(c_{i,j})=\sum_{(i',j')\in [l]\times [r]}(\gamma_{i,j,i',j'}^{(1)}f_1(c_{i',j'})+\gamma_{i,j,i',j'}^{(2)}f_2(c_{i',j'}))$, where for each $(i,i',j,j')\in [l]\times [l]\times [r]\times [r]$, $\gamma_{i,j,i',j'}^{(1)}$ and $\gamma_{i,j,i',j'}^{(2)}$ are constants determined by $A$, $B$ and $C$.
\end{itemize}
Then, let $\mathcal{C}^{I}$ be the $((k+l)(r+1),kr,r)$-optimal LRC defined by $\{A_1,\ldots,A_k,C_1,\ldots,C_l\}$ and $g$, and define $\mathcal{C}^{F}$ as 
$$\mathcal{C}^F=\{(u_\alpha T(f_1,f_2)(\alpha),\alpha\in A\cup B\cup C):~f_1,f_2\in \text{Span}_{\mathbb{F}_q}\{x^{s}g^t\}_{0\leq s\leq r-1\atop 0\leq t\leq k-1}\},$$
where $u_{a_{i,j}}=\alpha_{i,j}^{-1}$, $u_{b_{i,j}}=\beta_{i,j}^{-1}$ and $u_{c_{i,j}}=1$. Since $T$ is surjective, $\mathcal{C}^F$ is actually the $((2k+l)(r+1),kr,r)$-optimal LRC defined by $\{A_1,\ldots,A_k,B_1,\ldots,B_k,C_1,\ldots,C_l\}$, $g$ and vector $(u_{\alpha},\alpha\in A\cup B\cup C)$.

The map $T$ induces a conversion procedure from $\mathcal{C}^{I}$ to $\mathcal{C}^{F}$, that is, codewords $\mathbf{c}_1=(f_1(\alpha),\alpha\in A\cup C)$ and $\mathbf{c}_2=(f_2(\alpha),\alpha\in A\cup C)$ are mapped to $\mathbf{d}=(u_\alpha T(f_1,f_2)(\alpha),\alpha\in A\cup B\cup C)$, where $f_1,f_2\in\text{Span}_{\mathbb{F}_q}\{x^sg^t\}_{0\leq s\leq r-1\atop 0\leq t\leq k-1}$. Similar to Example \ref{ex2}, one can easily verify that $\mathbf{d}|_{A}=\mathbf{c}_1|_{A}$, $\mathbf{d}|_{B}=\mathbf{c}_2|_{A}$ and symbols in $\mathbf{d}|_{C}$ are linear combinations of symbols in $\mathbf{c}_1|_{\tilde{C}}$ and $\mathbf{c}_2|_{\tilde{C}}$, where $\tilde{C}=\bigcup_{i\in [l]}C_i\setminus \{c_{i,r+1}\}$. This will imply that $(\mathcal{C}^{I},\mathcal{C}^{F})$ is an optimal LRCC with write access cost $l(r+1)$ and read access cost $2lr$.
\end{remark}

\begin{remark}\label{rmk3-2}
In Example \ref{ex2}, the map $T$ is defined as 
$$T:~(f_1,f_2)\longmapsto (h_{G_2}\circ g)\cdot f_1+(h_{G_1}\circ g)\cdot M(f_2),$$
where $G_1=\{g(A_1),\ldots,g(A_k)\}$, $G_2=\{g(B_1),\ldots,g(B_k)\}$ and $M(f_2)\in \text{Span}_{\mathbb{F}_q}\{x^sg^t\}_{0\leq s\leq r-1\atop 0\leq t\leq k-1}$ is the polynomial with coefficient vector $\mathbf{v}_{f_2}\cdot \mathbf{M}$ w.r.t. basis $\{x^sg^t\}_{0\leq s\leq r-1\atop 0\leq t\leq k-1}$. Note that $\mathbf{M}$ is a $kr\times kr$ matrix satisfying: 
\begin{itemize}
    \item [1.] For every $(i,j)\in [k]\times [r+1]$, $\mathbf{M}\cdot \mathbf{b}_{i,j}=\theta_i\mathbf{a}_{i,j}$, where $\mathbf{a}_{i,j}=(a_{i,j}^{s}g^{t}(a_{i,j}),~(s,t)\in [0,r-1]\times [0,k-1])^{T}$, $\mathbf{b}_{i,j}=(b_{i,j}^{s}g^{t}(b_{i,j}),~(s,t)\in [0,r-1]\times [0,k-1])^{T}$ and for each $1\leq i\leq k$, $\theta_i$ is a non-zero constant determined by $A$, $B$ and $C$;
    \item [2.] For every $(i,j)\in [l]\times [r]$, $\mathbf{M}\cdot\mathbf{c}_{i,j}\in \text{Span}_{\mathbb{F}_q}\{\mathbf{c}_{i',j'}:~(i',j')\in [l]\times [r]\}$, where $\mathbf{c}_{i,j}=(c_{i,j}^{s}g^{t}(c_{i,j}),~(s,t)\in [0,r-1]\times [0,l-1])^{T}$.
\end{itemize}
As we shall prove Section \ref{sec_LRCC2}, these properties of $\mathbf{M}$ ensure the map $T$ defined above satisfying the requirements in Remark \ref{rmk3-1}. Thus, the crucial part of the construction is to find proper subsets $A$, $B$ and $C$ such that there exists a $kr\times kr$ matrix $\mathbf{M}$ satisfying the above requirements.
\end{remark}

\subsection{Constructions of LRCCs with optimal access costs}\label{sec_LRCC2}

Throughout the section, we assume that $n^I=(k+l^I)(r+1)$, $n^F=(k+l^F)(r+1)$ and $l^F\leq \min\{k,l^I\}$. Similar to Section \ref{sec_MDSC2}, we first assume that $l^F=l^I=l\leq k$ and present a construction of optimal LRCCs for this case. Then, we modify it for the general case when $l^F\leq l^I\leq k$.

\textbf{Construction III}: Let  $A_{s,i}=\{a_{s,i,1},\ldots,a_{s,i,r+1}\}$, $(s,i)\in [\zeta]\times [k]$ and $C_{i}=\{c_{i,1},\ldots,c_{i,r+1}\}$, $1\leq i\leq l$, be mutually disjoint $(r+1)$-subsets of $\mathbb{F}_q$ such that the following holds:
\begin{description}
    \item [Condition 1] There is a polynomial $g(x)\in \mathbb{F}_q[x]$ of degree $r+1$ such that $g$ is constant on each $(s,i)\in [\zeta]\times [k]$ and $C_{i}$, $1\leq i\leq l$. Denote $G_0=\{g({C_1}),\ldots,g({C_l})\}$ and $G_s=\{g({A_{s,1}}),\ldots,g({A_{s,k}})\}$.
    \item [Condition 2] For every $2\leq s\leq \zeta$, there is a $kr\times kr$ matrix $\mathbf{M}_s$ satisfying:
    \begin{equation}\label{cond_LRCC}
    \begin{cases}
        \mathbf{M}_s\cdot \mathbf{a}_{s,i,j}=\theta_{s,i} \mathbf{a}_{1,i,j},~(i,j)\in [k]\times [r+1];\\
        \mathbf{M}_s\cdot\mathbf{c}_{i,j}\in \text{Span}_{\mathbb{F}_q}\{\mathbf{c}_{i',j'}:~(i',j')\in [l]\times [r]\},~(i,j)\in [l]\times [r],
    \end{cases}
    \end{equation}
    where $\mathbf{a}_{s,i,j}=(a_{s,i,j}^{t_1}g^{t_2}(a_{s,i,j}),~(t_1,t_2)\in [0,r-1]\times [0,k-1])^{T}$, $\mathbf{c}_{i,j}=(c_{i,j}^{t_1}g^{t_2}(c_{i,j}),~(t_1,t_2)\in [0,r-1]\times [0,l-1])^{T}$ and $$\theta_{s,i}=\frac{h_{G_s\cup G_0 \setminus\{g(A_{s,i})\}}(g(A_{s,i}))}{h_{G_1\cup G_0\setminus\{g(A_{1,i})\}}(g(A_{1,i}))}.$$
\end{description}
Denote $A_s=\bigcup_{i=1}^{k}A_{s,i}$, $C=\bigcup_{i=1}^{l}C_i$, $G=\bigcup_{s=1}^{\zeta}G_s$ and $V_{k,r}=\text{Span}_{\mathbb{F}_q}\{x^{t_1}g^{t_2}\}_{0\leq t_1\leq r-1\atop 0\leq t_2\leq k-1}$. Then, define $\mathcal{C}^I$ as
\begin{equation*}
    \mathcal{C}^I=\{(f(\alpha),~\alpha\in A_1\cup C):~f\in V_{k,r}\}.
\end{equation*}
Let $\mathbf{M}_1=\mathbf{I}$ and define the map $T$ as
\begin{align}\label{cons3-1}
   T: ~~~~~~\left(V_{k,r}\right)^{\zeta}&\longrightarrow V_{\zeta k,r} \nonumber\\
   (f_{1},\ldots,f_{\zeta})&\longmapsto \sum_{s=1}^{\zeta}(h_{G\setminus G_s}\circ g)\cdot M_s(f_s),
\end{align}
where $M_s(f_s)\in V_{k,r}$ is the polynomial with coefficient vector $\mathbf{v}_{f_s}\cdot\mathbf{M}_{s}$ w.r.t. basis $\{x^{t_1}g^{t_2}\}_{0\leq t_1\leq r-1\atop 0\leq t_2\leq k-1}$. Then, we define the final code $\mathcal{C}^F$ as
\begin{align}\label{cons3-2}
  \mathcal{C}^F =\{(u_{\alpha}T(f_{1},\ldots,f_{\zeta})(\alpha),~\alpha\in (\bigcup_{s=1}^{\zeta}A_s)\cup C): f_s\in V_{k,r},~1\leq s\leq \zeta\},
\end{align}
where $u_{\alpha}=\theta_{s,i}^{-1}h_{G\setminus G_s}^{-1}(g(A_{s,i}))$, when $\alpha=a_{s,i,j}$ and $u_{\alpha}=1$, when $\alpha=c_{i,j}$.

\begin{theorem}\label{thm4}
For positive integers $\zeta\geq 2$, $r$, $k$ and $l^F=l^I=l\leq k$, the $(\mathcal{C}^I, \mathcal{C}^{F})$ given by Construction III is an $(n^I,kr,r;n^F,\zeta kr,r)$ optimal LRCC with write access cost $l(r+1)$ and read access cost $\zeta lr$.
\end{theorem}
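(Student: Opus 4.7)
The plan is to mirror the argument of Theorem~\ref{thm3-1}, but replace the use of the Vandermonde/RS structure by the Tamo--Barg structure induced by the good polynomial $g$. First I would observe that since $\deg(M_s(f_s))\leq kr-1$ in the basis $\{x^{t_1}g^{t_2}\}_{0\le t_1\le r-1,\, 0\le t_2\le k-1}$ and $(h_{G\setminus G_s}\circ g)$ has degree exactly $(\zeta-1)k(r+1)$ and lies in $\mathbb{F}_q[g]$, the product $(h_{G\setminus G_s}\circ g)\cdot M_s(f_s)$ lies in $V_{\zeta k,r}=\operatorname{Span}_{\mathbb{F}_q}\{x^{t_1}g^{t_2}\}_{0\leq t_1\leq r-1,\,0\leq t_2\leq \zeta k-1}$. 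Hence $T(f_1,\ldots,f_\zeta)\in V_{\zeta k,r}$, and $\mathcal{C}^F$ is a subcode of the $((\zeta k+l)(r+1),\zeta kr,r)$-LRC defined by the partition $\{A_{s,i}\}_{(s,i)\in[\zeta]\times[k]}\cup \{C_i\}_{i\in[l]}$, the good polynomial $g$, and the vector $(u_\alpha)$.

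Next, I would carry out the pointwise evaluation of $T(f_1,\ldots,f_\zeta)$ on the three types of evaluation points, imitating the calculation in the proof of Theorem~\ref{thm3-1}. For $\alpha=a_{s,i,j}$, the key observation is that $g(a_{s,i,j})=g(A_{s,i})\in G_s\subseteq G\setminus G_{s'}$ for every $s'\ne s$, so $(h_{G\setminus G_{s'}}\circ g)(a_{s,i,j})=0$ whenever $s'\ne s$. Only the $s'=s$ term survives, and Condition~2 of Construction~III together with $\mathbf{v}_{f_s}\cdot \mathbf{a}_{1,i,j}=f_s(a_{1,i,j})$ yields
\[
u_{a_{s,i,j}}\,T(f_1,\ldots,f_\zeta)(a_{s,i,j})=\theta_{s,i}^{-1}M_s(f_s)(a_{s,i,j})=f_s(a_{1,i,j}).
\]
This exhibits a bijective identification $\mathbf{d}|_{A_s}=\mathbf{c}_s|_{A_1}$ for every $s$, giving $\zeta k(r+1)$ remaining symbols in total. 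Since $\mathcal{C}^I|_{A_1}$ has dimension $\dim V_{k,r}=kr$, we obtain $\dim\mathcal{C}^F\geq\zeta kr$, and combined with $\mathcal{C}^F\subseteq V_{\zeta k,r}$-encoded LRC of dimension $\zeta kr$, we get $\dim\mathcal{C}^F=\zeta kr$ and $T$ is surjective onto $V_{\zeta k,r}$. Hence $\mathcal{C}^F$ is precisely the optimal $((\zeta k+l)(r+1),\zeta kr,r)$-LRC defined by $\{A_{s,i}\}\cup\{C_i\}$, $g$, and $\mathbf{u}$, proving the write access cost is at most $n^F-\zeta k(r+1)=l(r+1)$.

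For the remaining points $\alpha=c_{i,j}$, I would apply Condition~2 of Construction~III to write $\mathbf{M}_s\cdot\mathbf{c}_{i,j}=\sum_{(i',j')\in[l]\times[r]}\eta^{(s,i,j)}_{i',j'}\mathbf{c}_{i',j'}$, which gives
\[
M_s(f_s)(c_{i,j})=\mathbf{v}_{f_s}\!\cdot\!\mathbf{M}_s\!\cdot\!\mathbf{c}_{i,j}=\sum_{(i',j')\in[l]\times[r]}\eta^{(s,i,j)}_{i',j'}f_s(c_{i',j'}).
\]
Multiplying by $(h_{G\setminus G_s}\circ g)(c_{i,j})$ and summing over $s$ shows that every symbol $u_{c_{i,j}}T(f_1,\ldots,f_\zeta)(c_{i,j})$ is an $\mathbb{F}_q$-linear combination of the $\zeta lr$ values $\{f_s(c_{i',j'}):s\in[\zeta],(i',j')\in[l]\times[r]\}$, so the read access cost is at most $\zeta lr$. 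The converse lower bounds will come from the general bound of Section~\ref{sec_lb} (Corollary \ref{coro1}), so optimality of the access costs is immediate.

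The main obstacle I expect is the dimensional subtlety in Condition~2 of Construction~III: the matrix $\mathbf{M}_s$ acts on vectors $\mathbf{a}_{s,i,j}$ (indexed by $[0,r-1]\times[0,k-1]$) but is required to map the vectors $\mathbf{c}_{i,j}$ into a span parametrized by $(i',j')\in[l]\times[r]$, reflecting the fact that only the first $l$ powers of $g$ appear in the $g$-component indexing for $C$. One must verify that the mapping $f\mapsto \mathbf{v}_f\cdot\mathbf{M}_s$ indeed sends $V_{k,r}\to V_{k,r}$, that evaluation commutes with this base change (i.e.\ $M_s(f)(x)=\mathbf{v}_f\cdot\mathbf{M}_s\cdot(x^{t_1}g^{t_2}(x))_{t_1,t_2}^T$ in the $\{x^{t_1}g^{t_2}\}$-basis), and that the constraint forces the $C$-coordinates of $\mathbf{d}$ to depend only on the $r$ information positions within each $C_i$, so that the remaining position of $C_i$ in $\mathbf{d}$ is recovered via the local parity of $g$ on $C_i$ rather than requiring extra reads. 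Once these basis-book-keeping issues are cleanly set up, the computations above go through verbatim, yielding the claimed $(n^I,kr,r;n^F,\zeta kr,r)$ optimal LRCC with write cost $l(r+1)$ and read cost $\zeta lr$.
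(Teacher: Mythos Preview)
Your proposal is correct and follows essentially the same approach as the paper's own proof: you verify $T(f_1,\dots,f_\zeta)\in V_{\zeta k,r}$, evaluate at $a_{s,i,j}$ using Condition~2 to identify $\mathbf{d}|_{A_s}=\mathbf{c}_s|_{A_1}$ (yielding the write cost and surjectivity of $T$), and evaluate at $c_{i,j}$ with $(i,j)\in[l]\times[r]$ using the span condition on $\mathbf{M}_s\cdot\mathbf{c}_{i,j}$ to bound the read cost, with the $(r{+}1)$-st symbol in each $C_i$ handled by locality. The paper proceeds identically; your concern about the length of $\mathbf{c}_{i,j}$ is well-founded but is only a notational issue---in the proof these vectors are used as elements of $\mathbb{F}_q^{kr}$ so that $\mathbf{v}_{f_s}\cdot(\mathbf{M}_s\mathbf{c}_{i,j})$ makes sense.
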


\begin{proof}
Clearly, $\mathcal{C}^{I}$ is the $(n^I,kr,r)$-optimal LRC defined by $\{A_{1,i}\}_{i=1}^{k}\cup \{C_i\}_{i=1}^{l}$ and $g$. Let $\mathbf{c}_1,\ldots,\mathbf{c}_{\zeta}\in \mathcal{C}^{I}$ be the $\zeta$ initial codewords and denote $f_i(x)\in V_{k,r}$ as the encoding polynomial of $\mathbf{c}_i$. Since for every $1\leq s\leq \zeta$, $M_s(f_s)\in V_{k,r}$ and 
$$(h_{G\setminus G_s}\circ g)(x)=\prod_{\alpha\in G\setminus G_s}(g(x)-\alpha)\in \text{Span}_{\mathbb{F}_q}\{g^{i}\}_{0\leq i\leq (\zeta-1) k},$$ 
we have $T(f_1,\ldots,f_{\zeta})\in V_{\zeta k,r}$. Thus, $\mathcal{C}^{F}$ is a subcode of the following $(n^F,\zeta kr,r)$-optimal LRC:
$$\mathcal{C}_0=\{(u_{\alpha}f(\alpha),~\alpha\in (\bigcup_{s=1}^{\zeta}A_s)\cup C):~f\in V_{\zeta k,r}\}.$$
Moreover, the map $T$ induces a conversion procedure from $\mathcal{C}^{I}$ to $\mathcal{C}^{F}$, that is, $\mathbf{c}_1,\ldots,\mathbf{c}_{\zeta}$ are converted to the following codeword
$$\mathbf{d}=(u_{\alpha}T(f_1,\ldots,f_{\zeta})(\alpha),~\alpha\in (\bigcup_{s=1}^{\zeta}A_s)\cup C).$$ 

Next, we show that the read and write access costs of the conversion procedure induced by $T$ are $\zeta lr$ and $l(r+1)$, respectively. As a byproduct, we shall see that $\mathcal{C}^{F}=\mathcal{C}_0$, which yields the result.

For each $(s,i,j)\in [\zeta]\times [k]\times [r+1]$ and $\alpha=a_{s,i,j}$, by $g(a_{s,i,j})=g(A_{s,i})$ and $h_{G\setminus G_s}(x)=\prod_{w\in G\setminus G_s}(x-w)$, we have 
\begin{align*}
    u_{\alpha}T(f_{1},\ldots,f_{\zeta})(\alpha)
    &=\theta_{s,i}^{-1}h_{G\setminus G_s}^{-1}(g(A_{s,i}))\sum_{s'=1}^{\zeta}h_{G\setminus G_{s'}}(g(A_{s,i}))(M_{s'}(f_{s'}))(a_{s,i,j})\\
    &=\theta_{s,i}^{-1}(M_{s}(f_{s}))(a_{s,i,j})\\
    &=f_{s}(a_{1,i,j}),
\end{align*}
where the last equality follows from $(M_{s}(f_{s}))(a_{s,i,j})=\mathbf{v}_{f_s}\cdot \mathbf{M}_s\cdot \mathbf{a}_{s,i,j}=\theta_{s,i}f_{s}(a_{1,i,j})$. This leads to $$\mathbf{d}|_{A_s}=\mathbf{c}_s|_{A_1}=(f_{s}(\alpha),~\alpha\in A_1).$$ 
Thus, for each $1\leq s\leq \zeta$, there are at least $k(r+1)$ symbols in $\mathbf{c}_i$ that remains in $\mathbf{d}$. This implies that the write access cost is at most $l(r+1)$. Moreover, by $\mathcal{C}^{I}|_{A_1}=\mathbb{F}_q^{kr}$, $\mathbf{d}|_{A_s}=\mathbf{c}_s|_{A_1}$ also implies that $\mathcal{C}^{F}|_{\bigcup_{s=1}^{\zeta}A_s}=\mathbb{F}_q^{\zeta kr}$. In other words,  $\dim(\mathcal{C}^{F})\geq \zeta kr$. By $\mathcal{C}^{F}\subseteq \mathcal{C}_0$, this leads to $\mathcal{C}^{F}=\mathcal{C}_0$.

On the other hand, for each $(i,j)\in [l]\times [r]$ and $\alpha=c_{i,j}$, we have
\begin{align}
    u_{\alpha}T(f_{1},\ldots,f_{\zeta})(\alpha)&=\sum_{s=1}^{\zeta}h_{G\setminus G_{s}}(g(C_i))(M_{s}(f_{s}))(c_{i,j}).\label{eq3-5-1}
\end{align}
Since $\mathbf{M}_s\cdot\mathbf{c}_{i,j}\in \text{Span}_{\mathbb{F}_q}\{\mathbf{c}_{i',j'}:~(i',j')\in [l]\times [r]\}$, we can assume that $\mathbf{M}_{s}\cdot \mathbf{c}_{i,j}=\sum_{(i',j')\in [l]\times [r]}\eta_{s,i',j'}\mathbf{c}_{i',j'}$. Therefore,
\begin{align}
    (M_s(f_s))(c_{i,j})&=\mathbf{v}_{f_s}\cdot\left(\sum_{(i',j')\in [l]\times [r]}\eta_{s,i',j'}\mathbf{c}_{i',j'}\right)\nonumber\\
   &=\sum_{(i',j')\in [l]\times [r]}\eta_{s,i',j'}f_s(c_{i',j'}).\label{eq3-5-2}
\end{align}
Denote $\tilde{C}=\bigcup_{i\in [l]}C_i\setminus \{c_{i,r+1}\}$. Note that $f_{s}(c_{i,j})=\mathbf{c}_{s}|_{\{c_{i,j}\}}$. Thus, (\ref{eq3-5-1}) and (\ref{eq3-5-2})
%$$\mathbf{d}|_{\{a_{0,j}\}}=\sum_{i=1}^{\zeta}\sum_{j'=1\atop j'\not\equiv0\mod{r+1}}^{l(r+1)}h_{G\setminus G_i}(g_{0,j})\eta_{j'}(i,j)\mathbf{c}_{i}|_{\{a_{0,j'}\}}.$$
implies that for every $1\leq j\leq l$, symbols in $\mathbf{d}|_{C_{j}\setminus\{c_{j,r+1}\}}$ are linear combinations of the $\zeta lr$ symbols in $\mathbf{c}_1|_{\tilde{C}},\ldots,\mathbf{c}_{\zeta}|_{\tilde{C}}$. Moreover, by locality, $\mathbf{d}|_{\{c_{j,r+1}\}}$ can be recovered by the $r$ symbols in $\mathbf{d}|_{C_{j}\setminus\{c_{j,r+1}\}}$. Therefore, all the $l(r+1)$ symbols of $\mathbf{d}|_{C}$ can be constructed by the $\zeta lr$ symbols in $\mathbf{c}_1|_{\tilde{C}},\ldots,\mathbf{c}_{\zeta}|_{\tilde{C}}$. Thus, the read access cost is at most $\zeta lr$.
\end{proof}

Now, we present the construction for the case when $l^F\leq l^I$, which follows the same idea as Construction II.

\textbf{Construction IV}: Let $A_{s,i}$, $A_{s}$, $C_i$ and $g(x)$ be identical to those defined in Construction III. For $1\leq i\leq l^I-l^F$, let $B_{i}=\{b_{i,1},\ldots,b_{i,r+1}\}$ be mutually disjoint subsets of size $r+1$ in $\mathbb{F}_q\setminus A_1\cup C$ such that $g$ is constant on each $B_i$. Denote $B=\bigcup_{i=1}^{l^I-l^F}B_i$.\footnote{When $l^I=l^F$, $B=\emptyset$} Define $\mathcal{C}^I$ as the following optimal $(n^I,kr,r)$-LRC:
\begin{equation*}
    \mathcal{C}^I=\{(f(\alpha),~\alpha\in A_1\cup C\cup B): f\in V_{k,r}\}.
\end{equation*}
Let $T$ and the final code $\mathcal{C}^F$ be the same as those defined in (\ref{cons3-1}) and (\ref{cons3-2}). 

Under this construction, symbols indexed by evaluation points $B$ of codewords in $\mathcal{C}^{I}$ don't participant the conversion procedure induced by $T$. Thus, by Theorem \ref{thm4}, we have the following immediate result.

\begin{theorem}\label{thm5}
For positive integers $\zeta\geq 2$, $r$, $k$, $l^I$ and $l^F\leq \min\{k,l^I\}$, the $(\mathcal{C}^I, \mathcal{C}^{F})$ given by Construction IV is an $(n^I,kr,r;n^F,\zeta kr,r)$ optimal LRCC with write access cost $l^{F}(r+1)$ and read access cost $\zeta l^{F}r$.
\end{theorem}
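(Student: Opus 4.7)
The plan is to reduce Theorem \ref{thm5} to Theorem \ref{thm4} by observing that Construction IV differs from Construction III only in the enlargement of the evaluation set of the initial code by the extra blocks $B_1, \ldots, B_{l^I-l^F}$. The final code $\mathcal{C}^F$, the polynomial map $T$, and the weight vector $(u_\alpha)$ are unchanged, and the encoding polynomials $f_1, \ldots, f_\zeta$ of the $\zeta$ initial codewords still lie in $V_{k,r}$, so everything that happens ``inside'' the map $T$ is identical to the setting of Theorem \ref{thm4}.

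First I would verify that $\mathcal{C}^I$ is an optimal $(n^I, kr, r)$-LRC. Since $g$ is constant on each $A_{1,i}$ and each $C_j$ by Condition 1 of Construction III, and each $B_j$ is chosen so that $g$ is constant on it as well, $g$ is a good polynomial with respect to the partition $\{A_{1,1}, \ldots, A_{1,k}, C_1, \ldots, C_{l^F}, B_1, \ldots, B_{l^I - l^F}\}$ of the evaluation set $A_1 \cup C \cup B$. Hence $\mathcal{C}^I$ fits the framework of \cite{TB14} with parameters $(n^I, kr, r)$, $n^I = (k + l^I)(r+1)$, and is optimal with respect to the bound in Theorem \ref{Singleton_bound}. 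That $\mathcal{C}^F$ is an optimal $(n^F, \zeta k r, r)$-LRC is already established in the proof of Theorem \ref{thm4}.

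The access cost analysis then transfers from Theorem \ref{thm4} essentially verbatim. For $\alpha = a_{s,i,j}$ the identity $u_\alpha T(f_1,\ldots,f_\zeta)(\alpha) = f_s(a_{1,i,j})$ still holds (its derivation uses only Condition 2 of Construction III and the definition of $T$), so $\mathbf{d}|_{A_s} = \mathbf{c}_s|_{A_1}$ and at most $l^F(r+1)$ symbols of $\mathbf{d}$ need to be newly written. Similarly, the $l^F(r+1)$ symbols in $\mathbf{d}|_C$ are either linear combinations of the $\zeta l^F r$ symbols in $\bigcup_{s=1}^{\zeta} \mathbf{c}_s|_{\tilde C}$, where $\tilde C = C \setminus \{c_{i,r+1} : i \in [l^F]\}$, or recoverable from them by the locality of $\mathcal{C}^F$, which bounds the read cost by $\zeta l^F r$.

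The key point enabling this transfer, and arguably the only nontrivial observation in the proof, is that the coordinates indexed by $B$ in each initial codeword $\mathbf{c}_s$ never appear on either side of the copy identity $\mathbf{d}|_{A_s} = \mathbf{c}_s|_{A_1}$ nor in the linear combinations producing $\mathbf{d}|_C$; they are simply discarded by the conversion procedure. Consequently, enlarging the initial code's evaluation set from $A_1 \cup C$ to $A_1 \cup C \cup B$ preserves both the write access cost $l^F(r+1)$ and the read access cost $\zeta l^F r$, completing the reduction.
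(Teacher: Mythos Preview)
Your proposal is correct and follows essentially the same approach as the paper: the paper's argument is the single observation that the symbols of the initial codewords indexed by $B$ play no role in the conversion procedure $T$, so Theorem~\ref{thm5} follows immediately from Theorem~\ref{thm4}. Your write-up is simply a more detailed unpacking of that reduction, including the explicit check that $\mathcal{C}^I$ remains an optimal $(n^I,kr,r)$-LRC after adjoining the extra evaluation blocks $B_1,\ldots,B_{l^I-l^F}$.
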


\subsection{A family of LRCCs with optimal access costs}

From Section \ref{sec_LRCC2}, the key to implementing Construction III and IV lies in finding proper subsets $A_{s}$'s and $C$ from $\mathbb{F}_q$, such that there exist matrices $\mathbf{M}_s$'s satisfying conditions 1 and 2 in Construction III. Like Section \ref{sec_MDSC3}, in the following, we first give a sufficient condition for when $A_s$'s and $C$ can ensure the existence of such $\mathbf{M}_s$'s. Then, we give an explicit construction of $A_{s,i}$ and $C_i$ satisfying this sufficient condition. Together with proper subset $B$, we provide an implementation of Construction IV, and obtain a family of LRCCs with optimal access costs.

\begin{lemma}\label{lem3-2}
For positive integers $\zeta\geq 2$, $r$, $k$ and $l\leq k$, let $A_{i}=\{a_{i,1},\ldots,a_{i,r+1}\}$, $B_{i}=\{b_{i,1},\ldots,b_{i,r+1}\}$, $1\leq i\leq k$, and $C_{i}=\{c_{i,1},\ldots,c_{i,r+1}\}$, $1\leq i\leq l$, be mutually disjoint $(r+1)$-subsets of $\mathbb{F}_q$. Let $g$ be a polynomial of degree $r+1$ such that $g$ is constant on each $A_i$, $B_i$ and $C_i$. Denote $G_0=\{g(C_1),\ldots,g(C_l)\}$, $G_1=\{g(A_1),\ldots,g(A_k)\}$ and $G_2=\{g(B_1),\ldots,g(B_k)\}$. Suppose that
\begin{itemize}
    \item [i)] there is an invertible matrix $\mathbf{T}$ of order $r$ such that $\mathbf{B}_i^{(r)}=\mathbf{T}\mathbf{A}_{i}^{(r)}$ for every $1\leq i\leq k$;
    \item [ii)] there is an invertibel matrix $\mathbf{T}'$ of order $l$ such that $\mathbf{G}_2^{(l)}=\mathbf{T}'\mathbf{G}_{1}^{(l)}$.
\end{itemize}
Then, there is a $kr\times kr$ matrix $\mathbf{M}$ satisfying:
    \begin{equation*}
    \begin{cases}
        \mathbf{M}\cdot \mathbf{b}_{i,j}=\theta_{i} \mathbf{a}_{i,j},~(i,j)\in [k]\times [r+1];\\
        \mathbf{M}\cdot\mathbf{c}_{i,j}\in \text{Span}_{\mathbb{F}_q}\{\mathbf{c}_{i',j'}:~(i',j')\in [l]\times [r]\},~(i,j)\in [l]\times [r],
    \end{cases}
    \end{equation*}
    where $\mathbf{a}_{i,j}=(a_{i,j}^{s}g^{t}(a_{i,j}),~(s,t)\in [0,r-1]\times [0,k-1])^{T}$, $\mathbf{b}_{i,j}=(b_{i,j}^{s}g^{t}(b_{i,j}),~(s,t)\in [0,r-1]\times [0,k-1])^{T}$, $\mathbf{c}_{i,j}=(c_{i,j}^{s}g^{t}(c_{i,j}),~(s,t)\in [0,r-1]\times [0,l-1])^{T}$ and $$\theta_i=\frac{h_{G_2\cup G_0 \setminus\{g(B_i)\}}(g(B_i))}{h_{G_1\cup G_0\setminus\{g(A_i)\}}(g(A_i))}.$$
\end{lemma}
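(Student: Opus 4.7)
The plan is to mimic the formula for $\mathbf{M}$ from Lemma~\ref{lem2-1}, exploiting the fact that the good polynomial $g$ makes each vector $\mathbf{a}_{i,j}$ factor as a Kronecker product $\mathbf{v}_r(a_{i,j})\otimes \mathbf{v}_k(g(A_i))$, with analogous factorizations for $\mathbf{b}_{i,j}$ and $\mathbf{c}_{i,j}$, where $\mathbf{v}_m(x):=(1,x,\ldots,x^{m-1})^T$. Let $\tilde{\mathbf{A}}$ (resp.\ $\tilde{\mathbf{B}}$) be the $kr\times kr$ matrix whose columns are the $\mathbf{a}_{i,j}$ (resp.\ $\mathbf{b}_{i,j}$) for $(i,j)\in[k]\times[r]$, let $\mathbf{D}$ be the $kr\times kr$ diagonal matrix in which each $\theta_i$ is repeated $r$ times, and set $\mathbf{M}=\tilde{\mathbf{A}}\,\mathbf{D}\,\tilde{\mathbf{B}}^{-1}$. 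Invertibility of $\tilde{\mathbf{B}}$ amounts to injectivity of the evaluation map $V_{k,r}\to\mathbb{F}_q^{kr}$, $f\mapsto (f(b_{i,j}))_{(i,j)\in[k]\times[r]}$: writing $f=\sum_{s,t}f_{s,t}x^sg^t$, vanishing on $b_{i,1},\ldots,b_{i,r}$ forces $\sum_t f_{s,t}g(B_i)^t=0$ for every $s$ by a first Vandermonde argument, and then varying $i\in[k]$ and using the distinctness of the $g(B_i)$ forces $f_{s,t}=0$ by a second Vandermonde argument.

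With $\mathbf{M}$ so defined the first condition is immediate for $j\in[r]$. For $j=r+1$ I use $\mathbf{B}_i^{(r)}=\mathbf{T}\mathbf{A}_i^{(r)}$: the $r+1$ columns of $\mathbf{A}_i^{(r)}$ in $\mathbb{F}_q^r$ have a unique linear relation $\mathbf{v}_r(a_{i,r+1})=\sum_{v\in[r]}\beta_{i,v}\mathbf{v}_r(a_{i,v})$, and applying $\mathbf{T}$ yields the identical relation among the $\mathbf{v}_r(b_{i,v})$. Since the second tensor factor $\mathbf{v}_k(g(A_i))$ (resp.\ $\mathbf{v}_k(g(B_i))$) does not depend on $j$, the same scalars $\beta_{i,v}$ express $\mathbf{a}_{i,r+1}$ and $\mathbf{b}_{i,r+1}$ in terms of their first $r$ counterparts, so linearity of $\mathbf{M}$ gives $\mathbf{M}\mathbf{b}_{i,r+1}=\theta_i\mathbf{a}_{i,r+1}$.

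For the second condition I pass to annihilators, following the proof of Lemma~\ref{lem2-1}. The subspace $W=\{\mathbf{v}_p:p=(h_{G_0}\circ g)\cdot f,\ f\in V_{k-l,r}\}$ has dimension $(k-l)r$ and annihilates every $\mathbf{c}_{i',j'}$, $(i',j')\in[l]\times[r]$, since $(h_{G_0}\circ g)$ vanishes on each $C_{i'}$. As the $\mathbf{c}_{i',j'}$ themselves are linearly independent (same Vandermonde argument), their annihilator has dimension exactly $(k-l)r$, so it suffices to show $\mathbf{v}_p\cdot\mathbf{M}\mathbf{c}_{i,j}=0$ for all such $p$ and $(i,j)\in[l]\times[r]$. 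Expanding $\mathbf{c}_{i,j}$ in the basis $\{\mathbf{b}_{i',j'}\}_{(i',j')\in[k]\times[r]}$ by Lagrange interpolation in $V_{k,r}$, the coefficient of $\mathbf{b}_{i',j'}$ factors as $P_{i'}(g(C_i))\cdot Q_{i',j'}(c_{i,j})$, where $P_{i'}(y)=\prod_{u\ne i'}(y-g(B_u))/(g(B_{i'})-g(B_u))$ and $Q_{i',j'}(x)$ is the Lagrange polynomial at $b_{i',1},\ldots,b_{i',r}$. Substituting the explicit formula for $\theta_{i'}$ and applying the identity $h_{G_0}(g(B_{i'}))\,h_{G_2\setminus\{g(B_{i'})\}}(g(C_i))=-h_{G_0\setminus\{g(C_i)\}}(g(B_{i'}))\,h_{G_2}(g(C_i))$---the direct analogue of the cancellation used in the proof of Lemma~\ref{lem2-1}---reduces the expression to $-h_{G_2}(g(C_i))$ times a double sum over $i'\in[k]$ and $s\in[0,r-1]$.

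The main obstacle is routing the two hypotheses through this double sum. I recognize the inner $j'$-sum $\sum_{j'\in[r]}Q_{i',j'}(c_{i,j})f(a_{i',j'})$ as $R_{i'}(c_{i,j})$, where $R_{i'}\in\mathbb{F}_q^{<r}[x]$ satisfies $R_{i'}(b_{i',v})=f(a_{i',v})$ for $v\in[r]$. Writing $f_s(y):=\sum_t f_{s,t}y^t\in\mathbb{F}_q^{<k-l}[y]$, the restriction of $f$ to $A_{i'}$ is the polynomial $\sum_s f_s(g(A_{i'}))\,x^s\in\mathbb{F}_q^{<r}[x]$; combined with $\mathbf{B}_{i'}^{(r)}=\mathbf{T}\mathbf{A}_{i'}^{(r)}$ this forces $R_{i'}(c_{i,j})=\sum_s u_{i,j,s}\,f_s(g(A_{i'}))$, where the $u_{i,j,s}$ are the entries of $\mathbf{T}^{-1}\mathbf{v}_r(c_{i,j})$ and are independent of $i'$. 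In parallel, $\mathbf{G}_2^{(l)}=\mathbf{T}'\mathbf{G}_1^{(l)}$ replaces $h_{G_0\setminus\{g(C_i)\}}(g(B_{i'}))$ by $H_i(g(A_{i'}))$ for some $H_i\in\mathbb{F}_q^{<l}[y]$. At each fixed $s$ the outer $i'$-sum thus becomes
\[
\sum_{i'=1}^{k}\frac{H_i(g(A_{i'}))\,f_s(g(A_{i'}))}{h_{G_1\setminus\{g(A_{i'})\}}(g(A_{i'}))},
\]
which vanishes by the Lagrange/Cramer identity since $\deg(H_i f_s)\le(l-1)+(k-l-1)=k-2<k-1$; summing over $s$ finishes the verification. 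Intuitively, the tensor-product structure decouples the two hypotheses onto the two factors---$\mathbf{T}$ handling the local $x$-direction within each group, $\mathbf{T}'$ the global $g$-direction across groups---so that the single Cramer cancellation of Lemma~\ref{lem2-1} carries over after an appropriate change of variables.
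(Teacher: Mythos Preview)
Your proposal is correct and follows essentially the same approach as the paper: the same definition $\mathbf{M}=\tilde{\mathbf{A}}\,\mathbf{D}\,\tilde{\mathbf{B}}^{-1}$, the same use of condition~i) to propagate the linear relation to $j=r+1$, the same annihilator argument via $W=\{\mathbf{v}_{(h_{G_0}\circ g)\cdot f}\}$, the same cancellation identity, and the same final reduction to a Cramer-type vanishing for a polynomial of degree at most $k-2$ in the $g$-variable. The only differences are presentational---you phrase things via Kronecker products and separate the $s$-variable before applying Cramer, whereas the paper carries a block-matrix form through to the end---but the underlying computation is identical.
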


By Lemma \ref{lem3-2}, we have the following immediate corollary, which provides a sufficient condition for the existence of $\mathbf{M}_s$'s satisfying condition 2 in Construction III.

\begin{corollary}\label{coro3-1}
Let $\zeta \geq 2$, $r$, $k$ and $l<k$ be positive integers. Let $A_{s,i}=\{a_{s,i,1},\ldots,a_{s,i,r+1}\}$, $(s,i)\in [\zeta]\times [k]$ and $C_{i}=\{c_{i,1},\ldots,c_{i,r+1}\}$, $1\leq i\leq l$, be mutually disjoint $(r+1)$-subsets of $\mathbb{F}_q$. Let $g$ be a polynomial of degree $r+1$ such that $g$ is constant on each $A_{s,i}$ and $C_i$. Denote $G_0=\{g({C_1}),\ldots,g({C_l})\}$ and $G_s=\{g(A_{s,1}),\ldots,g(A_{s,k})\}$, $1\leq s\leq \zeta$. If for every $2\leq s\leq \zeta$, \begin{itemize}
  \item [i)] there is an invertible matrix $\mathbf{T}_s$ of order $r$ such that $\mathbf{A}_{s,i}^{(r)}=\mathbf{T}_s \mathbf{A}_{1,i}^{(r)}$ for every $1\leq i\leq k$;
  \item [ii)] there is an invertible matrix $\mathbf{T}_s'$ of order $l$ such that $\mathbf{G}_s^{(l)}=\mathbf{T}_s' \mathbf{G}_1^{(l)}$.
\end{itemize}
Then, for every $2\leq s\leq \zeta$, there is a $kr\times kr$ matrix $\mathbf{M}_s$ satisfying condition 2 in Construction III.
\end{corollary}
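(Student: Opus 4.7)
The corollary is an essentially mechanical consequence of Lemma \ref{lem3-2}, and the plan is to apply that lemma once for each index $s \in \{2,\ldots,\zeta\}$. Fix such an $s$, and instantiate the lemma with its data as follows: take the lemma's sets $A_i$ to be the corollary's $A_{1,i}$, take the lemma's $B_i$ to be $A_{s,i}$, and keep the $C_i$'s unchanged. Under this identification, the lemma's sets $G_1 = \{g(A_1),\ldots,g(A_k)\}$ and $G_2 = \{g(B_1),\ldots,g(B_k)\}$ match the corollary's $G_1$ and $G_s$, while the lemma's $G_0$ coincides with the corollary's $G_0$.

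Next I would verify the two hypotheses of the lemma under this identification. Hypothesis (i) of the lemma demands an invertible matrix $\mathbf{T}$ with $\mathbf{B}_i^{(r)} = \mathbf{T}\mathbf{A}_i^{(r)}$ for every $i \in [k]$, which is exactly condition (i) of the corollary with $\mathbf{T} = \mathbf{T}_s$. Hypothesis (ii) asks for an invertible $\mathbf{T}'$ with $\mathbf{G}_2^{(l)} = \mathbf{T}'\mathbf{G}_1^{(l)}$, which matches condition (ii) of the corollary with $\mathbf{T}' = \mathbf{T}_s'$. A small bookkeeping check is that the scalar $\theta_i$ produced by the lemma, namely $\theta_i = h_{G_2 \cup G_0 \setminus \{g(B_i)\}}(g(B_i)) / h_{G_1 \cup G_0 \setminus \{g(A_i)\}}(g(A_i))$, coincides under the identification with the $\theta_{s,i}$ specified in Condition 2 of Construction III.

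Having confirmed the hypotheses, I would then invoke Lemma \ref{lem3-2} to obtain a $kr \times kr$ matrix $\mathbf{M}$ satisfying $\mathbf{M}\cdot\mathbf{b}_{i,j} = \theta_i\mathbf{a}_{i,j}$ for all $(i,j) \in [k]\times[r+1]$ and $\mathbf{M}\cdot\mathbf{c}_{i,j} \in \mathrm{Span}_{\mathbb{F}_q}\{\mathbf{c}_{i',j'} : (i',j') \in [l]\times[r]\}$. Setting $\mathbf{M}_s := \mathbf{M}$ and translating the first identity back through the identification $\mathbf{b}_{i,j} \leftrightarrow \mathbf{a}_{s,i,j}$, $\mathbf{a}_{i,j} \leftrightarrow \mathbf{a}_{1,i,j}$ yields exactly $\mathbf{M}_s \cdot \mathbf{a}_{s,i,j} = \theta_{s,i}\mathbf{a}_{1,i,j}$, and the second identity is preserved verbatim. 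Since this works uniformly for every $s \in \{2,\ldots,\zeta\}$, the family of matrices $\{\mathbf{M}_s\}_{s=2}^{\zeta}$ satisfies Condition 2 of Construction III.

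There is no genuine obstacle here, since the entire substance is absorbed into Lemma \ref{lem3-2}; the only care required is a consistent renaming of variables and a sanity check that the definition of $\theta_{s,i}$ in Construction III is the one the lemma produces. The real technical work lives in the lemma itself, where the Vandermonde-type matrix $\mathbf{M}$ must simultaneously act diagonally on the $A$-versus-$B$ pairs and stabilize the $C$-subspace; this corollary simply packages that lemma for the iterated setting needed by Construction III.
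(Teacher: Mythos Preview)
Your proposal is correct and matches the paper's approach: the paper states Corollary~\ref{coro3-1} as an immediate consequence of Lemma~\ref{lem3-2} without giving a separate proof, and your argument supplies precisely the obvious instantiation---for each $s$, identify the lemma's $A_i$, $B_i$ with $A_{1,i}$, $A_{s,i}$ and check that the $\theta_i$'s and hypotheses line up.
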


Now, we present the proof of Lemma \ref{lem3-2}.

\begin{proof}[Proof of Lemma \ref{lem3-2}]
Denote $V_{k,r}=\text{Span}_{\mathbb{F}_q}\{x^{s}g^{t}\}_{0\leq s\leq r-1\atop 0\leq t\leq k-1}$.
Denote $g_{1,i}=g(A_i)$, $g_{2,i}=g(B_i)$ and $g_{0,i}=g(C_i)$. Let $\tilde{A}_i=A_i\setminus\{a_{i,r+1}\}$ and $\tilde{B}_i=B_i\setminus\{b_{i,r+1}\}$. By $1)$ in Lemma \ref{lem-a1}, $\{\mathbf{a}_{i,j}\}_{1\leq i\leq k \atop 1\leq j\leq r}$ and $\{\mathbf{b}_{i,j}\}_{1\leq i\leq k \atop 1\leq j\leq r}$ form two bases of $\mathbb{F}_q^{kr}$ and $\text{Span}_{\mathbb{F}_q}\{\mathbf{c}_{i,j}:~(i,j)\in [l]\times [r]\}$ has dimension $lr$. Moreover, by $2)$ in Lemma \ref{lem-a1}, we also have
\begin{align}
    \mathbf{a}_{i,r+1}&=\sum_{j=1}^{r}\frac{h_{\tilde{A}_i\setminus\{a_{i,j}\}}(a_{i,r+1})}{h_{\tilde{A}_i\setminus\{a_{i,j}\}}(a_{i,j})}\mathbf{a}_{i,j},\label{eq3-2-3a}\\
    \mathbf{b}_{i,r+1}&=\sum_{j=1}^{r}\frac{h_{\tilde{B}_i\setminus\{b_{i,j}\}}(b_{i,r+1})}{h_{\tilde{B}_i\setminus\{b_{i,j}\}}(b_{i,j})}\mathbf{b}_{i,j}\label{eq3-2-3}
\end{align}
for every $1\leq i\leq k$ and 
\begin{equation}\label{eq3-2-4}
     \mathbf{c}_{s,t}=\sum_{(i,j)\in [k]\times [r]}
     \frac{h_{G_2\setminus\{g_{2,i}\}}(g_{0,s})}{h_{G_2\setminus\{g_{2,i}\}}(g_{2,i})}
     \frac{h_{\tilde{B}_i\setminus\{b_{i,j}\}}(c_{s,t})}{h_{\tilde{B}_i\setminus\{b_{i,j}\}}(b_{i,j})}
     \mathbf{b}_{i,j}
\end{equation}
for every $(s,t)\in [l]\times [r]$.

Denote $\mathbf{A}=(\mathbf{a}_{1,1},\ldots,\mathbf{a}_{1,r},\ldots,\mathbf{a}_{k,1},\ldots,\mathbf{a}_{k,r})$ and $\mathbf{B}=(\mathbf{b}_{1,1},\ldots,\mathbf{b}_{1,r},\ldots,\mathbf{b}_{k,1},\ldots,\mathbf{b}_{k,r})$ as the $kr\times kr$ matrices with $(i-1)r+j$-th column $\mathbf{a}_{i,j}$ and $\mathbf{b}_{i,j}$, respectively. Define vector $$\bm{\theta}=(\underbrace{\theta_1,\ldots,\theta_1}_{r},\underbrace{\theta_2,\ldots,\theta_2}_{r},\ldots, \underbrace{\theta_k,\ldots,\theta_k}_{r}).$$ Then, we define matrix $\mathbf{M}$ as
\begin{align}\label{eq3-2-5}
    \mathbf{M}=\mathbf{A}\cdot \text{diag}(\bm{\theta})\cdot \mathbf{B}^{-1}.
\end{align}
Next, we show that $\mathbf{M}$ satisfies the two properties claimed in the statement of Lemma \ref{lem3-2}. 

Clearly, for every $(i,j)\in [k]\times [r]$, $\mathbf{M}\cdot \mathbf{b}_{i,j}=\theta_i\mathbf{a}_{i,j}$. To confirm the first property, we only need to show that $\mathbf{M}\cdot \mathbf{b}_{i,r+1}=\theta_i\mathbf{a}_{i,r+1}$ holds for every $1\leq i\leq k$. By $\mathbf{B}_i^{(r)}=\mathbf{T}\mathbf{A}_{i}^{(r)}$, we have $(1,\ldots,b_{i,r+1}^{r-1})^T=\mathbf{T}\cdot(1,\ldots,a_{i,r+1}^{r-1})^T$ and $\mathbf{T}=\tilde{\mathbf{B}}_i\tilde{\mathbf{A}}_i^{-1}$, where $\tilde{\mathbf{A}}_i$ and $\tilde{\mathbf{B}}_{i}$ are the $r\times r$ Vandermonde matrix generated by $\tilde{A}_{i}$ and $\tilde{B}_i$, respectively. Thus, $\tilde{\mathbf{B}}_i^{-1}\cdot (1,\ldots,b_{i,r+1}^{r-1})^{T}=\tilde{\mathbf{A}}_i^{-1}\cdot (1,\ldots,a_{i,r+1}^{r-1})^{T}$. Note that for every $1\leq i\leq k$ and $x\in \mathbb{F}_q$,
\begin{align}\label{eq3-2-5a}
   \tilde{\mathbf{A}}_i^{-1}\cdot (1,\ldots,x^{r-1})^{T}&=(\frac{h_{\tilde{A}_i\setminus\{a_{i,1}\}}(x)}{h_{\tilde{A}_i\setminus\{a_{i,1}\}}(a_{i,1})},\ldots,\frac{h_{\tilde{A}_i\setminus\{a_{i,r}\}}(x)}{h_{\tilde{A}_i\setminus\{a_{i,r}\}}(a_{i,r})})^{T},\nonumber\\
   \tilde{\mathbf{B}}_i^{-1}\cdot (1,\ldots,x^{r-1})^{T}&=(\frac{h_{\tilde{B}_i\setminus\{b_{i,1}\}}(x)}{h_{\tilde{B}_i\setminus\{b_{i,1}\}}(b_{i,1})},\ldots,\frac{h_{\tilde{B}_i\setminus\{b_{i,r}\}}(x)}{h_{\tilde{B}_i\setminus\{b_{i,r}\}}(b_{i,r})})^{T}.
\end{align}
Thus, $\frac{h_{\tilde{A}_i\setminus\{a_{i,j}\}}(a_{i,r+1})}{h_{\tilde{A}_i\setminus\{a_{i,j}\}}(a_{i,j})}=\frac{h_{\tilde{B}_i\setminus\{b_{i,j}\}}(b_{i,r+1})}{h_{\tilde{B}_i\setminus\{b_{i,j}\}}(b_{i,j})}$ for every $1\leq j\leq r$. By (\ref{eq3-2-3}) and (\ref{eq3-2-5}), this implies that
\begin{align*}   
\mathbf{M}\cdot \mathbf{b}_{i,r+1}&=\theta_i\sum_{j=1}^{r}\frac{h_{\tilde{B}_i\setminus\{b_{i,j}\}}(b_{i,r+1})}{h_{\tilde{B}_i\setminus\{b_{i,j}\}}(b_{i,j})}\mathbf{a}_{i,j}\\
&=\theta_i\sum_{j=1}^{r}\frac{h_{\tilde{A}_i\setminus\{a_{i,j}\}}(a_{i,r+1})}{h_{\tilde{A}_i\setminus\{a_{i,j}\}}(a_{i,j})}\mathbf{a}_{i,j}.
\end{align*}
Thus, $\mathbf{M}\cdot \mathbf{b}_{i,r+1}=\theta_i\mathbf{a}_{i,r+1}$ follows from (\ref{eq3-2-3a}). This confirms the first property.

Next, we show that for each $(i,j)\in [l]\times [r]$, $\mathbf{M}\cdot\mathbf{c}_{i,j}\in \text{Span}_{\mathbb{F}_q}\{\mathbf{c}_{i',j'}:~(i',j')\in [l]\times [r]\}$.

Define $H_C(x)=(h_{G_0}\circ g)(x)=\prod_{i=1}^{l}\left(g(x)-g_{0,i}\right)$. Then, for every $f\in V_{k-l,r}$, $H_C\cdot f\in V_{k,r}$.
Denote $\mathbf{v}_{H_C\cdot f}\in\mathbb{F}_q^{kr}$ as the coefficient vector of $H_C\cdot f$ w.r.t. basis $\{x^{s}g^t\}_{0\leq s\leq r-1\atop 0\leq t\leq k-1}$. Then, for every $(s,t)\in [l]\times [r]$, by (\ref{eq3-2-4}), we have 
\begin{align}
    \mathbf{v}_{H_C\cdot f}\cdot \mathbf{M}\cdot \mathbf{c}_{s,t}&=\sum_{(i,j)\in [k]\times [r]}
    \frac{h_{G_2\setminus\{g_{2,i}\}}(g_{0,s})}{h_{G_2\setminus\{g_{2,i}\}}(g_{2,i})}
    \frac{h_{\tilde{B}_i\setminus\{b_{i,j}\}}(c_{s,t})}{h_{\tilde{B}_i\setminus\{b_{i,j}\}}(b_{i,j})}
    \theta_i
    (\mathbf{v}_{H_C\cdot f}\cdot \mathbf{a}_{i,j}) \nonumber\\
    &=\sum_{(i,j)\in [k]\times [r]}\frac{h_{G_2\setminus\{g_{2,i}\}}(g_{0,s})}{h_{G_2\setminus\{g_{2,i}\}}(g_{2,i})}
    \frac{h_{\tilde{B}_i\setminus\{b_{i,j}\}}(c_{s,t})}{h_{\tilde{B}_i\setminus\{b_{i,j}\}}(b_{i,j})}
    \theta_i
    (H_C\cdot f)({a}_{i,j}) \label{eq3-2-6}. 
\end{align}
Since $\theta_i=\frac{h_{G_2\cup G_0 \setminus\{g_{2,i}\}}(g_{2,i})}{h_{G_1\cup G_0\setminus\{g_{1,i}\}}(g_{1,i})}$ and $H_C(a_{i,j})=h_{G_0}(g_{1,i})$, (\ref{eq3-2-6}) leads to
\begin{align}
   \mathbf{v}_{H_C\cdot f}\cdot \mathbf{M}\cdot \mathbf{c}_{s,t}&=\sum_{(i,j)\in [k]\times [r]}
    \frac{h_{\tilde{B}_i\setminus\{b_{i,j}\}}(c_{s,t})}{h_{\tilde{B}_i\setminus\{b_{i,j}\}}(b_{i,j})}
    \frac{h_{G_0}(g_{2,i})}{h_{G_1\setminus\{g_{1,i}\}}(g_{1,i})}
    h_{G_2\setminus\{g_{2,i}\}}(g_{0,s})
    f({a}_{i,j})\nonumber \\
    &=-h_{G_2}(g_{0,s})
    \left(\sum_{(i,j)\in [k]\times [r]}
    \frac{h_{\tilde{B}_i\setminus\{b_{i,j}\}}(c_{s,t})}{h_{\tilde{B}_i\setminus\{b_{i,j}\}}(b_{i,j})}
    \frac{h_{G_0\setminus\{g_{0,s}\}}(g_{2,i})}{h_{G_1\setminus\{g_{1,i}\}}(g_{1,i})}
    f({a}_{i,j})\right)\label{eq3-2-7}, 
\end{align}
where the 2nd equality follows from $$h_{G_0}(g_{2,i})h_{G_2\setminus\{g_{2,i}\}}(g_{0,s})=-h_{G_2}(g_{0,s})h_{G_0\setminus\{g_{0,s}\}}(g_{2,i}).$$
For $1\leq s\leq l$, denote $\mathbf{h}_{G_0\setminus\{g_{0,s}\}}$ as the coefficient vector of $h_{G_0\setminus\{g_{0,s}\}}(x)$ w.r.t. basis $\{x^{i}\}_{i=0}^{l-1}$ and define $H_s(x)=\mathbf{h}_{G_0\setminus\{g_{0,s}\}}\cdot\mathbf{T}'\cdot (1,x,\ldots,x^{l-1})^{T}$. 
By $\mathbf{G}_2^{(l)}=\mathbf{T}'\mathbf{G}_{1}^{(l)}$, we have 
$$\mathbf{h}_{G_0\setminus\{g_{0,s}\}}\cdot\mathbf{G}_{2}^{(l)}=\mathbf{h}_{G_0\setminus\{g_{0,s}\}}\cdot\mathbf{T'}\mathbf{G}_{1}^{(l)}.$$ 
Thus, $h_{G_0\setminus\{g_{0,s}\}}(g_{2,i})=H_s(g_{1,i})$ for every $1\leq i\leq k$. For $1\leq i\leq k$, let $$\mathbf{u}_i=\mathbf{T}^{-1}\tilde{\mathbf{A}}_i\cdot(\frac{h_{\tilde{A}_i\setminus\{a_{i,1}\}}(c_{s,t})}{h_{\tilde{A}_i\setminus\{a_{i,1}\}}(a_{i,1})},\ldots,\frac{h_{\tilde{A}_i\setminus\{a_{i,r}\}}(c_{s,t})}{h_{\tilde{A}_i\setminus\{a_{i,r}\}}(a_{i,r})})^{T}.$$
By (\ref{eq3-2-5a}), we have $\mathbf{u}_i=\mathbf{T}^{-1}\cdot(1,\ldots,c_{s,t}^{r-1})^{T}$. Note that, $\mathbf{T}=\tilde{\mathbf{B}}_i\tilde{\mathbf{A}}_i^{-1}$, this leads to
\begin{align*}
\tilde{\mathbf{A}}_i^{-1}\cdot \mathbf{u}_i&=\tilde{\mathbf{A}}_i^{-1}\mathbf{T}^{-1}\tilde{\mathbf{A}}_i\cdot(\frac{h_{\tilde{A}_i\setminus\{a_{i,1}\}}(c_{s,t})}{h_{\tilde{A}_i\setminus\{a_{i,1}\}}(a_{i,1})},\ldots,\frac{h_{\tilde{A}_i\setminus\{a_{i,r}\}}(c_{s,t})}{h_{\tilde{A}_i\setminus\{a_{i,r}\}}(a_{i,r})})^{T}\\
&=\tilde{\mathbf{B}}_i^{-1}\cdot(1,\ldots,c_{s,t}^{r-1})^{T}\\
&=(\frac{h_{\tilde{B}_i\setminus\{b_{i,1}\}}(c_{s,t})}{h_{\tilde{B}_i\setminus\{b_{i,1}\}}(b_{i,1})},\ldots,\frac{h_{\tilde{B}_i\setminus\{b_{i,r}\}}(c_{s,t})}{h_{\tilde{B}_i\setminus\{b_{i,r}\}}(b_{i,r})})^{T}.
\end{align*}
Denote $\tilde{H}_s(x)=(H_s\circ g)(x)$. Then, $(\tilde{H}_s\cdot f)(a_{i,j})=H_s(g_{1,i})f(a_{i,j})=h_{G_0\setminus\{g_{0,s}\}}(g_{2,i})f(a_{i,j})$. Thus, the RHS of (\ref{eq3-2-7}) can be simplified as
\begin{align}
    -h_{G_2}(g_{0,s})
    \sum_{i\in [k]}
    ((\tilde{H}_s\cdot f)(a_{i,1}),\ldots,(\tilde{H}_s\cdot f)(a_{i,r}))\cdot
    \frac{\tilde{\mathbf{A}}_i^{-1}\cdot \mathbf{u}_i}{h_{G_1\setminus\{g_{1,i}\}}(g_{1,i})}\label{eq3-2-8}.
\end{align}

Note that $\deg(H_s)\leq l-1$ and $f\in V_{k-l,r}$. Thus, $\tilde{H}_s\cdot f\in \mathbf{V}_{k-1,r}$. Denote $\tilde{\mathbf{v}}\in \mathbb{F}_{q}^{(k-1)r}$ as the coefficient vector of $\tilde{H}_s\cdot f$ w.r.t. basis $\{x^sg^{t}\}_{0\leq s\leq k-2 \atop 0\leq t\leq r-1}$.
Then, for $1\leq i\leq k$, we have
\begin{equation*}
    \tilde{\mathbf{v}}\cdot \left(\begin{array}{c}
       \tilde{\mathbf{A}}_{i}\\
       g_{1,i}\tilde{\mathbf{A}}_{i}\\
       \vdots\\
       g_{1,i}^{k-2}\tilde{\mathbf{A}}_{i}
    \end{array}\right)=
    ((\tilde{H}_s\cdot f)(a_{i,1}),\ldots,(\tilde{H}_s\cdot f)(a_{i,r})).
\end{equation*}
On the other hand, by $3)$ in Lemma \ref{lem-a1}, we know that
\begin{equation}\label{eq3-2-9}
    \left(\begin{array}{cccc}
       \tilde{\mathbf{A}}_{1} & \tilde{\mathbf{A}}_{2} & \cdots & \tilde{\mathbf{A}}_{k}\\
       g_{1,1}\tilde{\mathbf{A}}_{1} & g_{1,2}\tilde{\mathbf{A}}_{2} & \cdots & g_{1,k}\tilde{\mathbf{A}}_{k}\\
       \vdots & \vdots &  & \vdots\\
       g_{1,1}^{k-2}\tilde{\mathbf{A}}_{1} & g_{1,2}^{k-2}\tilde{\mathbf{A}}_{2} & \cdots & g_{1,k}^{k-2}\tilde{\mathbf{A}}_{k}
    \end{array}\right)\cdot
    \left(\begin{array}{c}
       \frac{\tilde{\mathbf{A}}_{1}^{-1}}{h_{G_1\setminus\{g_{1,1}\}}(g_{1,1})} \\
       \frac{\tilde{\mathbf{A}}_{2}^{-1}}{h_{G_1\setminus\{g_{1,2}\}}(g_{1,2})}\\
       \vdots \\
       \frac{\tilde{\mathbf{A}}_{k}^{-1}}{h_{G_1\setminus\{g_{1,k}\}}(g_{1,k})} 
    \end{array}\right)
\end{equation}
is an all-$0$ matrix of size $kr\times r$. Then, by (\ref{eq3-2-9}), $\sum_{i\in [k]}
    ((\tilde{H}_s\cdot f)(a_{i,1}),\ldots,(\tilde{H}_s\cdot f)(a_{i,r}))\cdot\frac{\tilde{\mathbf{A}}_i^{-1}\cdot \mathbf{u}_i}{h_{G_1\setminus\{g_{1,i}\}}(g_{1,i})}$ equals to
\begin{equation*}
    \tilde{\mathbf{v}}\cdot\left(\begin{array}{ccc}
       \tilde{\mathbf{A}}_{1} & \cdots & \tilde{\mathbf{A}}_{k}\\
       g_{1,1}\tilde{\mathbf{A}}_{1} & \cdots & g_{1,k}\tilde{\mathbf{A}}_{k}\\
       \vdots &  & \vdots\\
       g_{1,1}^{k-2}\tilde{\mathbf{A}}_{1} & \cdots & g_{1,k}^{k-2}\tilde{\mathbf{A}}_{k}
    \end{array}\right)\cdot
    \left(\begin{array}{c}
       \frac{\tilde{\mathbf{A}}_{1}^{-1}}{h_{G_1\setminus\{g_{1,1}\}}(g_{1,1})} \\
       \vdots \\
       \frac{\tilde{\mathbf{A}}_{k}^{-1}}{h_{G_1\setminus\{g_{1,k}\}}(g_{1,k})} 
    \end{array}\right)\cdot 
    \left(\begin{array}{ccc}
       \mathbf{u}_1 &  & \\
       &  \ddots & \\
        &  &  \mathbf{u}_k
    \end{array}\right)=0.
\end{equation*}
By (\ref{eq3-2-7}) and (\ref{eq3-2-8}), this implies $\mathbf{v}_{H_C\cdot f}\cdot \mathbf{M}\cdot \mathbf{c}_{s,t}=0$ for every $f\in V_{k-l,r}$ and $(s,t)\in [l]\times[r]$. Note that 
$V_0=\{\mathbf{v}_{H_C\cdot f}: f\in V_{k-l,r}\}$ is a $(k-l)r$-dim subspace of $\mathbb{F}_q^{kr}$ and for every $(s,t)\in [l]\times[r]$
$$\mathbf{v}_{H_C\cdot f}\cdot \mathbf{c}_{s,t}=(H_C\cdot f)(c_{s,t})=0.$$
Thus, $\text{Span}_{\mathbb{F}_q}\{\mathbf{c}_{i,j}:~(i,j)\in [l]\times [r]\}$ is the dual space of $V_0$ in $\mathbb{F}_q^{kr}$. Then, $\mathbf{v}_{H_C\cdot f}\cdot \mathbf{M}\cdot \mathbf{c}_{s,t}=0$ implies that $\mathbf{M}\cdot \mathbf{c}_{s,t}\in \text{Span}_{\mathbb{F}_q}\{\mathbf{c}_{i,j}:~(i,j)\in [l]\times [r]\}$. This confirms the second property.
\end{proof}

In the following, using the multiplicative group of $\mathbb{F}_q$, we give an explicit construction of $A_{s,i}$, $C_i$ and $g(x)$ that satisfy the requirements in Corollary \ref{coro3-1}. 

Let $\zeta\geq 2$, $r$, $k$, $l^I$ and $l^F\leq \min\{k,l^I\}$ be positive integers. Let $q$ be a prime power such that $k(r+1)|(q-1)$ and
$$q\geq k(r+1)\cdot \max\{\zeta+1,\lceil\frac{l^I}{k}\rceil+2\}+1.$$ 
%\footnote{When $\max\{k^I,h^I\}\nmid(q-1)$, one can let $G$ be the smallest multiplicative subgroup of $\mathbb{F}_q^{*}$ of order at least $\max\{k^I,h^I\}$ and the condition that $q\geq (\zeta+1)\cdot\max\{k^I,h^I\}+1$ become $q\geq (\zeta+1)\cdot|G|+1$.} 
Let $\beta$ be the generator of the multiplicative group $\mathbb{F}_q^{*}$. Denote $\alpha=\beta^{\frac{q-1}{k(r+1)}}$ and let $G$ be the subgroup of $\mathbb{F}_q^{*}$ generated by $\alpha$. Denote $\zeta_0=\max\{\zeta,\lceil\frac{l^I}{k}\rceil+1\}$. Clearly, $|G|=k(r+1)$ and $|\mathbb{F}_q^{*}/G|\geq \zeta_0+1$.  For every $(s,i,j)\in [0,\zeta_0]\times [k]\times [r+1]$, let $a_{s,i,j}=\beta^s\alpha^{i-1+jk}$. Now, we define $g(x)=x^{r+1}$ and
\begin{itemize}
    \item $A_{s,i}=\{a_{s-1,i,1},\ldots,a_{s-1,i,r+1}\}$, $(s,i)\in [\zeta_0]\times [k]$ and $A_{s}=\bigcup_{i=1}^{k}A_{s,i}$;
    \item $C_i=\{a_{\zeta_0,i,1},\ldots, a_{\zeta_0,i,r+1}\}$, $1\leq i\leq l^F$ and $C=\bigcup_{i=1}^{l^F}C_{i}$.
    %\item $B_{i}=A_{s',i'}$, $1\leq i\leq l^I-l^F$ and $B=\bigcup_{i=1}^{l^I-l^F}B_{i}$, where $(s',i')\in [\zeta]\times [k]$ satisfies $s'=\lceil\frac{i}{k}\rceil+1$ and $i'\equiv i\mod{k}$.
\end{itemize}
Since $C_i\subseteq \beta^{\zeta_0}G$ and $A_{s,i}\subseteq \beta^{s-1} G$, $A_{s,i}$'s and $C_i$'s are mutually disjoint. Notice that $\zeta_0-1\geq \lceil\frac{l^I}{k}\rceil\geq \lceil\frac{l^I-l^F}{k}\rceil$. Thus, $(\zeta_0-1)k\geq l^I-l^F$ and we can pick $l^I-l^F$ different $(r+1)$-subsets from $\{A_{s,i}\}_{2\leq s\leq \zeta_0\atop 1\leq i\leq k}$ as $B_1,\ldots,B_{l^I-l^F}$. Denote $B=\bigcup_{i=1}^{l^I-l^F}B_i$. Clearly, $B_1,\ldots,B_{l^I-l^F}$ are mutually disjoint and $B\cap (A_1\cup C)=\emptyset$.

Next, we show that the $A_{s,i}$, $C_i$ and $g$ defined above satisfy the requirements in Corollary \ref{coro3-1}:
\begin{itemize}
    \item For each $(s,i,j)\in [0,\zeta_0]\times [k]\times [r+1]$, by $\alpha^{k(r+1)}=1$, we have
    \begin{align*}
       g(\beta^{s}\alpha^{i-1+jk})&=\left(\beta^{s}\alpha^{i-1+jk}\right)^{r+1}\\
       &=\left(\beta^{s}\alpha^{i-1}\right)^{r+1}.
    \end{align*}
    Thus, $g$ is constant on each $A_{s,i}$ and $C_i$. 
    \item Since $A_{s+1,i}=\{\beta^{s}\alpha^{i-1+k},\ldots, \beta^{s}\alpha^{i-1+(r+1)k}\}$, we have
    \begin{align*}
       \mathbf{A}_{s+1,i}^{(r)}&=\left(\begin{array}{cccc}
       1  & 1 & \cdots & 1 \\
       \beta^{s}\alpha^{i-1+k} & \beta^{s}\alpha^{i-1+2k} & \cdots & \beta^{s}\alpha^{i-1+(r+1)k} \\
       \vdots & \vdots &  & \vdots \\
       (\beta^{s}\alpha^{i-1+k})^{r-1} & (\beta^{s}\alpha^{i-1+2k})^{r-1} & \cdots & (\beta^{s}\alpha^{i-1+(r+1)k})^{r-1}
    \end{array}\right) \nonumber \\
    &=\text{diag}(1,\beta^{s-1},\ldots, \beta^{(s-1)(r-1)})\mathbf{A}_{1,i}^{(r)}.
    \end{align*}
    Take $\mathbf{T}_s=\text{diag}(1,\beta^{s-1},\ldots, \beta^{(s-1)(r-1)})$, clearly, $\mathbf{T}_s$ is invertible. This verifies the condition i).
    \item Let $g_{s,i}$ be the constant $g(A_{s,i})$. Note that for every $1\leq s\leq \zeta$,
    \begin{align*}
       G_s=\{g_{s,1},\ldots, g_{s,k}\}=\{\beta^{(s-1)(r+1)},\ldots,(\beta^{(s-1)}\alpha^{k-1})^{r+1}\}.
    \end{align*}
    Thus, we have
    \begin{align*}
       \mathbf{G}_{s}^{(l^F)}&=\left(\begin{array}{cccc}
       1  & 1 & \cdots & 1 \\
       \beta^{(s-1)(r+1)} & (\beta^{(s-1)}\alpha)^{r+1} & \cdots & (\beta^{(s-1)}\alpha^{k-1})^{r+1} \\
       \vdots & \vdots &  & \vdots \\
       \beta^{(s-1)(r+1)(l^F-1)} & (\beta^{(s-1)}\alpha)^{(r+1)(l^F-1)} & \cdots & (\beta^{(s-1)}\alpha^{k-1})^{(r+1)(l^F-1)}
    \end{array}\right) \nonumber \\
    &=\text{diag}(1,\beta^{(s-1)(r+1)},\ldots, \beta^{(s-1)(r+1)(l^F-1)})\mathbf{G}_{1}^{(l^F)}.
    \end{align*}
    Take $\mathbf{T}_s'=\text{diag}(1,\beta^{(s-1)(r+1)},\ldots, \beta^{(s-1)(r+1)(l^F-1)})$, clearly, $\mathbf{T}_s'$ is invertible. This verifies the condition ii).
\end{itemize}

\begin{remark}
The sets $A_1$, $A_2$, $B_1$, $B_2$, and $C$ in Example \ref{ex2} can be viewed as an example of the above construction for $q=19$, $k=r=2$, $l^I=l^F=1$ and $\zeta=2$.
\end{remark}

Clearly, the sets $A_{s,i}$'s and $C_i$ constructed above satisfy the requirements in Corollary \ref{coro3-1}. Then, $A_{s,i}$, $B_i$ and $C_i$ constructed above satisfy the requirements in Construction IV. Thus, we have the following immediate corollary.

\begin{corollary}\label{coro3}
For positive integers $\zeta\geq 2$, $r$, $k$, $l^I$ and $l^F\leq \min\{k,l^I\}$, let $q$ be a prime power such that $k(r+1)\mid (q-1)$ and $q\geq k(r+1)\cdot \max\{\zeta+1,\lceil\frac{l^I}{k}\rceil+2\}+1$. Then, there is an explicit construction of $(n^I,kr,r;n^F,\zeta kr,r)$ optimal LRCCs with write access cost $l(r+1)$ and read access cost $\zeta lr$, where $n^I=(k+l^I)(r+1)$ and $n^F=(k+l^F)(r+1)$.
\end{corollary}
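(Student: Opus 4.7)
The plan is to verify that the explicit construction presented immediately before the corollary fulfills every hypothesis of Construction IV and then invoke Theorem \ref{thm5}. I would proceed in three natural steps.

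First, I would check that the chosen subsets behave as required. Since $\alpha = \beta^{(q-1)/(k(r+1))}$ has multiplicative order $k(r+1)$ and $\beta$ generates $\mathbb{F}_q^*$, the field-size hypothesis $q \geq k(r+1)\cdot \max\{\zeta+1, \lceil l^I/k\rceil+2\}+1$ guarantees $|\mathbb{F}_q^*/G| \geq \zeta_0+1$ where $\zeta_0 = \max\{\zeta, \lceil l^I/k\rceil+1\}$. Consequently the cosets $\beta^0 G, \beta^1 G, \ldots, \beta^{\zeta_0} G$ are pairwise disjoint, so the $A_{s,i}$'s and $C_i$'s, each contained in a different coset, are mutually disjoint $(r+1)$-subsets. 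The identity $\alpha^{k(r+1)}=1$ yields $g(\beta^s \alpha^{i-1+jk}) = (\beta^s \alpha^{i-1})^{r+1}$, independent of $j$, so $g(x) = x^{r+1}$ is constant on each $A_{s,i}$ and each $C_i$, verifying Condition 1 of Construction III.

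Second, I would verify the two matrix-existence hypotheses of Corollary \ref{coro3-1}. For condition i), pulling the common factor $\beta^{s-1}$ out of each row of the Vandermonde matrix gives $\mathbf{A}_{s,i}^{(r)} = \mathbf{T}_s \mathbf{A}_{1,i}^{(r)}$ with $\mathbf{T}_s = \mathrm{diag}(1, \beta^{s-1}, \ldots, \beta^{(s-1)(r-1)})$, which is invertible since $\beta \neq 0$. For condition ii), writing $g_{s,i} = (\beta^{s-1}\alpha^{i-1})^{r+1}$ and applying the same row-scaling trick to $\mathbf{G}_s^{(l^F)}$ yields $\mathbf{G}_s^{(l^F)} = \mathbf{T}_s' \mathbf{G}_1^{(l^F)}$ with invertible $\mathbf{T}_s' = \mathrm{diag}(1, \beta^{(s-1)(r+1)}, \ldots, \beta^{(s-1)(r+1)(l^F-1)})$. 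Corollary \ref{coro3-1} then supplies the required matrices $\mathbf{M}_s$ of order $kr$ satisfying Condition 2 of Construction III.

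Third, I would assemble the LRCC via Construction IV. The inequality $\zeta_0 - 1 \geq \lceil l^I/k\rceil$ ensures $(\zeta_0 - 1)k \geq l^I - l^F$, so one may select $l^I - l^F$ pairwise disjoint $(r+1)$-blocks $B_1, \ldots, B_{l^I - l^F}$ from $\{A_{s,i} : 2 \leq s \leq \zeta_0,\ 1 \leq i \leq k\}$ with $B \cap (A_1 \cup C) = \emptyset$; since each $A_{s,i}$ lies in a coset of $G$, the polynomial $g$ is automatically constant on each $B_i$. With these data, Construction IV produces $\mathcal{C}^I$, the conversion map $T$, and $\mathcal{C}^F$, and Theorem \ref{thm5} delivers the claimed $(n^I, kr, r; n^F, \zeta kr, r)$ optimal LRCC with write access cost $l^F(r+1)$ and read access cost $\zeta l^F r$.

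The only substantive bookkeeping obstacle is the choice of $\zeta_0$: it must be large enough that the cosets $\beta^0 G, \ldots, \beta^{\zeta-1} G$ accommodate $A_1, \ldots, A_\zeta$, the coset $\beta^{\zeta_0} G$ accommodates $C$, and simultaneously the $(\zeta_0 - 1)$ middle cosets contain the $\lceil (l^I - l^F)/k \rceil$ extra blocks forming $B$. The definition $\zeta_0 = \max\{\zeta, \lceil l^I/k\rceil+1\}$, together with the stated lower bound on $q$, handles both constraints simultaneously, so everything else reduces to routine synthesis of the preceding results.
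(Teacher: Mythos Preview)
Your proposal is correct and follows essentially the same approach as the paper: the text immediately preceding the corollary carries out exactly your three steps (disjointness and constancy of $g$; verification of conditions i) and ii) of Corollary~\ref{coro3-1} via the diagonal matrices $\mathbf{T}_s$ and $\mathbf{T}_s'$; selection of the $B_i$'s using $\zeta_0$), and the corollary is then stated as an immediate consequence of Theorem~\ref{thm5}.
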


\section{Lower bounds on the access cost}\label{sec_lb}

In this section, we prove a general lower bound on the access cost of an $(n^I,k;n^F,\zeta k)$ convertible code with the condition that $\mathcal{C}^F$ is an $r$-LRC. As a consequence, the LRCCs obtained by Constructions III and IV admit optimal access costs.

Throughout the section, we need the following additional notations. Let $\mathcal{C}$ be an $(n,k)$ code over $\mathbb{F}_q$ and $\mathbf{m}\in \mathbb{F}_q^{k}$ be a message vector, denote $\mathcal{C}(\mathbf{m})$ as the codeword in $\mathcal{C}$ that encodes $\mathbf{m}$. Let $(\mathcal{C}^I,\mathcal{C}^{F})$ be an $(n^{I},k;n^{F},\zeta k)$ convertible code over $\mathbb{F}_q$ with conversion procedure $T$.
For a message vector $\mathbf{m}\in\mathbb{F}_q^{\zeta k}$ and $1\leq i\leq \zeta$, we denote $\mathbf{c}_i=\mathcal{C}^I(\mathbf{m}|_{[(i-1)k+1,ik]})$ as the $i$-th initial codeword and denote $\mathbf{d}=\mathcal{C}^{F}(\mathbf{m})$ as the final codeword. For $i\in [\zeta]$, we denote $U_i\subseteq [n^{I}]$ as the set of coordinates corresponding to the remaining symbols in $\mathbf{c}_i$ and $A_i\subseteq [n^{I}]$ as the set of coordinates corresponding to the accessed symbols in $\mathbf{c}_i$. We denote $T(U_i)\subseteq [n^{F}]$ as the set of coordinates of the symbols in $\mathbf{d}$ that remains from $\mathbf{c}_i$. Moreover, for $j\in T(U_i)$, we say the symbol $\mathbf{d}(j)$ is inherited from $\mathbf{c}_i|_{U_i}$. Clearly, there is a one-to-one map between $U_i$ and $T(U_i)$ and for convenience we also use $T$ to denote this map. In the following, by a slight abuse of notation, we also use the coordinate $j$ to referred to the corresponding codeword symbol.

The proof of the general lower bound consists of two parts. First, we prove a lower bound on the write access cost. To do this, we bound from above the number of remaining symbols in each initial codewords. Then, we prove a lower bound on the read access cost based on the locality of $\mathcal{C}^{F}$.

To start with, we prove the following upper bound on $|U_i|$.

\begin{lemma}\label{lem1}
Let $(\mathcal{C}^I,\mathcal{C}^F)$ be an $(n^{I},k;n^{F},\zeta k)$ convertible code over $\mathbb{F}_q$ with conversion procedure $T$. Assume that $\mathcal{C}^F$ has locality $r$ and minimum distance $d$. Then, for every $1\leq i \leq \zeta$, we have
\begin{equation}\label{upb_R}
   |U_i|=|T(U_i)|\leq n^{F}-d-((\zeta-1) k+\lceil\frac{(\zeta-1) k}{r}\rceil)+2.
\end{equation}
%Moreover, when $(r^F+1)|n^F$, $r^F|k^F$ and $d-2\equiv r^F \mod{r^F+1}$, the upper bound can be reduced by $1$.
\end{lemma}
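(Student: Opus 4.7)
The plan is to fix one initial codeword $\mathbf{c}_i$ and analyse how the final codeword $\mathbf{d}$ behaves as the other $(\zeta-1)k$ message coordinates vary. Since the remaining symbols of $\mathbf{d}$ at positions $T(U_i)$ are copied from $\mathbf{c}_i$, which depends only on $\mathbf{m}|_{[(i-1)k+1,ik]}$, these symbols are constant across such a family. Projecting the corresponding $q^{(\zeta-1)k}$ final codewords onto the complement $[n^F]\setminus T(U_i)$ should yield a code $\mathcal{D}$ to which the Singleton-type LRC bound of Theorem \ref{Singleton_bound} can be applied, and the desired inequality will fall out immediately.

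First I would verify that $\mathcal{D}$ has cardinality exactly $q^{(\zeta-1)k}$. The $q^{(\zeta-1)k}$ messages $\mathbf{m}$ with $\mathbf{m}|_{[(i-1)k+1,ik]}$ fixed give pairwise distinct codewords in $\mathcal{C}^F$, and because those codewords all agree on $T(U_i)$, they must already be distinct on the complement, so the projection is injective. Second, I would check that $\mathcal{D}$ has minimum distance at least $d$: any two codewords in $\mathcal{C}^F$ differ in at least $d$ positions, and for this family all those differences lie in $[n^F]\setminus T(U_i)$. Third, I would argue that $\mathcal{D}$ inherits locality at most $r$: for each $j\in[n^F]\setminus T(U_i)$, the recovering set $I_j\subseteq[n^F]\setminus\{j\}$ for $j$ in $\mathcal{C}^F$ has size at most $r$, and the symbols of $I_j\cap T(U_i)$ take constant values on the whole family, so they can be absorbed into the recovery function, leaving a repair set of size at most $r$ entirely inside $[n^F]\setminus T(U_i)$.

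With these three properties in hand, applying Theorem \ref{Singleton_bound} to $\mathcal{D}$, which has length $n^F-|T(U_i)|$, cardinality $q^{(\zeta-1)k}$, locality at most $r$ and minimum distance at least $d$, yields
$$d\leq \bigl(n^F-|T(U_i)|\bigr)-(\zeta-1)k-\Bigl\lceil\tfrac{(\zeta-1)k}{r}\Bigr\rceil+2,$$
which rearranges directly to the claimed inequality (\ref{upb_R}), using the trivial identity $|U_i|=|T(U_i)|$ guaranteed by the one-to-one correspondence induced by $T$. The main subtlety is the locality step: after puncturing at $T(U_i)$ the original repair sets may shrink, and one must carefully observe that fixing the symbols on $T(U_i)$ turns the original repair function into a valid function of strictly fewer variables rather than destroying the repair property. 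Everything else is a direct translation between the conversion setup and a standard LRC parameter calculation.
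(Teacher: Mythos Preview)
Your proposal is correct and follows essentially the same approach as the paper: fix the $i$-th block of the message, observe that the resulting family of final codewords is constant on $T(U_i)$, puncture there, and apply the Singleton-type bound of Theorem~\ref{Singleton_bound} to the resulting $(n^F-|T(U_i)|,(\zeta-1)k,r)$-LRC. The only cosmetic difference is that the paper first enlarges $T(U_i)$ to its ``recovery closure'' $N$ (all coordinates determined by $T(U_i)$) before puncturing, whereas you puncture at $T(U_i)$ directly and absorb the constant symbols into the repair functions; both handle the locality of the punctured code via the same observation and yield the identical bound.
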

\begin{proof}
We only show that (\ref{upb_R}) holds for $i=1$ and the proofs for cases $2\leq i\leq \zeta$ are similar.

Let $\tilde{\mathcal{C}}$ be the subcode of $\mathcal{C}^{F}$ that encodes message vectors in $\{\mathbf{m}\in\mathbb{F}_q^{\zeta k}: \mathbf{m}|_{[k]}=\mathbf{0}\}$, i.e.,
$$\tilde{\mathcal{C}}=\{\mathbf{d}\in \mathcal{C}^F: \mathbf{d} =\mathcal{C}^{F}((\mathbf{0},\tilde{\mathbf{m}})) \text{~for some~} \tilde{\mathbf{m}}\in \mathbb{F}_q^{(\zeta-1)k}\}.$$
Denote $d(\tilde{\mathcal{C}})$ as the minimal distance of $\tilde{\mathcal{C}}$. Clearly, we have $d(\tilde{\mathcal{C}})\geq d$ and $|\tilde{\mathcal{C}}|=q^{(\zeta-1)k}$.

%For every $\mathbf{d}\in \tilde{\mathcal{C}}$ and $i\in [n^F]$,
Define
$$N=\{i\in [n^F]: ~i\in T(U_1),~\text{or}~i~\text{can be recovered by symbols in}~T(U_1)\}.$$
Denote $N^{C}=[n^F]\setminus N$. Then, for every $i\in N^C$ and every recovering set $I_i$ of $i$, we have $I_i\cap N^C\neq \emptyset$. Otherwise, assume that there is an $i_0\in N^{C}$ with a recovering set $I_0\subseteq N$. Then, by the definition of $N$, every symbol of $I_0$ can be recovered by symbols in $T(U_1)$. Thus, $i_0$ can also be recovered by symbols in $T(U_1)$. This leads to $i_0\in N$, a contradiction.

Let $\mathbf{v}_0=\mathcal{C}^I(\mathbf{0})$ and $\mathbf{d}=\mathcal{C}^{F}((\mathbf{0},\tilde{\mathbf{m}}))\in \tilde{\mathcal{C}} \text{~for some~} \tilde{\mathbf{m}}\in \mathbb{F}_q^{(\zeta-1)k}$. Then, $\mathbf{d}$ is converted from $\mathbf{c}_1=\mathbf{v}_0$ and $\mathbf{c}_{i+1}=\mathcal{C}^I(\tilde{\mathbf{m}}|_{[(i-1)k+1,ik]})$, $1\leq i\leq \zeta-1$. Note that each symbol of $\mathbf{d}|_{T(U_1)}$ is from $\mathbf{v}_0|_{U_1}$ and symbols in $\mathbf{d}|_{N}$ can be recovered by symbols in $\mathbf{d}|_{T(U_1)}$. Thus, $\mathbf{d}|_{N}$ is fully determined by $\mathbf{v}_0|_{U_1}$ regardless of $\tilde{\mathbf{m}}$. This implies that $\tilde{\mathcal{C}}|_{N}$ is a constant vector determined by $\mathbf{v}_0|_{U_1}$. Therefore, $|\tilde{\mathcal{C}}|=|\tilde{\mathcal{C}}|_{N^{C}}|=q^{(\zeta-1)k^I}$.

On the other hand, note that each $i\in N^C$ has a recovering set $I_i$ such that $I_i\cap N^{C}\neq\emptyset$. Since $\tilde{\mathcal{C}}|_{N}$ is a constant vector, we have $d(\tilde{\mathcal{C}})=d(\tilde{\mathcal{C}}|_{N^C})$. Moreover, $\tilde{\mathcal{C}}|_{N^C}$ can be viewed as an $(n^F-|N|,(\zeta-1)k,r)$-LRC by taking $I_i\cap N^C$ as the recovering set for every $i\in N^C$. Thus, by Theorem \ref{Singleton_bound}, we have
\begin{equation}\label{eq02}
d(\tilde{\mathcal{C}})=d(\tilde{\mathcal{C}}|_{N^C})\leq (n^{F}-|N|)-((\zeta-1) k+\lceil\frac{(\zeta-1) k}{r}\rceil)+2.
\end{equation}
Then, the result follows directly from $d(\tilde{\mathcal{C}})\geq d$, $|U_1|\leq |N|$ and (\ref{eq02}).
%Moreover, when $(r^F+1)|n^F$, $r^F|k^F$ and $d-2\equiv r^F \mod{r^F+1}$, by (\ref{eq_Singletonbound2}), the upper bound can be reduced by $1$.
\end{proof}

Now, we focus on bounding the read access cost $|A_i|$, i.e., the number of symbols that are read from the $i$-th initial codeword during the conversion. 

\begin{lemma}\label{lem2}
Let $(\mathcal{C}^I,\mathcal{C}^F)$ be the convertible code defined in Lemma \ref{lem1}.
For each $1\leq i\leq \zeta$, denote $\Delta_i= |U_i\setminus A_i|-d+1$. Let $B$ be the largest subset of $T(U_i\setminus A_i)$ with $|B|\leq \lfloor\frac{\Delta_i}{r+1}\rfloor$ such that every $b\in B$ has a recovering set $I_b$ satisfying $I_b\cap B=\emptyset$.
Then, for each $1\leq i\leq \zeta$, we have
\begin{equation}\label{lb_D}
    |A_i|\geq
    \begin{cases}
    k,~\text{if $\Delta_i\leq 0$};\\
    k-\Delta_i+|B|,~\text{otherwise}.
    \end{cases}
\end{equation}
Moreover, if $d> n^I-k+1$, then $\Delta_i\leq 0$ for every $1\leq i\leq \zeta$ and therefore, $|A_i|\geq k$.
\end{lemma}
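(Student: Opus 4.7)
The plan is to mimic the strategy of the write-cost lemma (Lemma 4.1) but now on the read side, by isolating the information about $\mathbf{c}_i$ that is ``killed'' when the conversion procedure reads only $A_i$. Concretely, define the shortened subcode
\[
\mathcal{D}_i \;=\; \{\mathbf{c}\in\mathcal{C}^I : \mathbf{c}|_{A_i}=\mathbf{0}\},\qquad \dim \mathcal{D}_i \;\geq\; k-|A_i|,
\]
and, for each $\mathbf{c}\in\mathcal{D}_i$, consider the message whose $i$-th block encodes $\mathbf{c}$ and all other blocks are $\mathbf{0}$. Since the accessed symbols $(\mathbf{c}_j|_{A_j})_{j}$ are all zero, the new symbols of the resulting final codeword $\mathbf{d}_{\mathbf{c}}$ coincide with those of $\mathbf{d}^0=\mathcal{C}^F(\mathbf{0})$, and the remaining symbols from $\mathbf{c}_j$ for $j\neq i$ also agree. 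Hence $\mathbf{d}_{\mathbf{c}}-\mathbf{d}^0$ is a codeword of $\mathcal{C}^F$ whose support is contained in $T(U_i)$, and in fact in $T(U_i\setminus A_i)$, since $\mathbf{c}$ vanishes on $A_i\supseteq U_i\cap A_i$. Because the conversion must be decodable back to $\mathbf{m}$, the map $\mathbf{c}\mapsto \mathbf{d}_{\mathbf{c}}-\mathbf{d}^0$ is injective, so its image $\mathcal{E}_i\subseteq \mathcal{C}^F$ satisfies $\dim \mathcal{E}_i=\dim\mathcal{D}_i\geq k-|A_i|$.

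For the case $\Delta_i\leq 0$, I would argue by contradiction: if $|A_i|<k$ then $\dim \mathcal{E}_i\geq 1$, so there exists a nonzero codeword of $\mathcal{C}^F$ supported on a set of size $|U_i\setminus A_i|\leq d-1$, which is impossible since $\mathcal{C}^F$ has minimum distance $d$. Hence $|A_i|\geq k$.

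For the case $\Delta_i>0$, the key step is to bound $\dim \mathcal{E}_i$ from above. The restriction of $\mathcal{E}_i$ to the positions $T(U_i\setminus A_i)$ is an injection onto a code of length $|U_i\setminus A_i|$, minimum distance $\geq d$, and locality $\leq r$: indeed, for any $j\in T(U_i\setminus A_i)$, the recovering set $I_j$ from $\mathcal{C}^F$ has size $\leq r$, and since every codeword in $\mathcal{E}_i$ is zero outside $T(U_i\setminus A_i)$, the intersection $I_j\cap T(U_i\setminus A_i)$ (of size $\leq r$) already suffices for local recovery. Applying the Singleton-type bound from Theorem 1.2 to this restricted code yields
\[
\dim \mathcal{E}_i + \Bigl\lceil \tfrac{\dim \mathcal{E}_i}{r}\Bigr\rceil \;\leq\; |U_i\setminus A_i|-d+2 \;=\; \Delta_i+1,
\]
which in integer terms gives $\dim \mathcal{E}_i \leq \Delta_i-\lfloor \Delta_i/(r+1)\rfloor$. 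Combining with $\dim \mathcal{E}_i\geq k-|A_i|$ and with $|B|\leq \lfloor \Delta_i/(r+1)\rfloor$, one concludes $|A_i|\geq k-\Delta_i+|B|$. (The set $B$ in the statement is merely a ``witness'' that one has at most $\lfloor\Delta_i/(r+1)\rfloor$ locally-independent positions available in $T(U_i\setminus A_i)$, so any bound obtained via the LRC Singleton bound automatically dominates $k-\Delta_i+|B|$.)

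For the ``moreover'' assertion, I would run the same construction one more time but apply the classical Singleton bound to $\mathcal{D}_i$ viewed as a code on $[n^I]\setminus A_i$. A nonzero $\mathbf{c}\in\mathcal{D}_i$ produces a nonzero codeword in $\mathcal{C}^F$ supported in $T(U_i\setminus A_i)$, so its weight is $\geq d$, and consequently $\mathbf{c}$ itself has Hamming weight $\geq d$. Hence $\mathcal{D}_i$, as a length-$(n^I-|A_i|)$ code of dimension $\geq k-|A_i|$, has minimum distance $\geq d$, and Singleton gives $k-|A_i|\leq n^I-|A_i|-d+1$, i.e.\ $d\leq n^I-k+1$, contradicting the hypothesis. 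Therefore $|A_i|\geq k$, whence $|U_i\setminus A_i|\leq n^I-|A_i|\leq n^I-k<d-1$, and $\Delta_i\leq 0$ follows. The only delicate point in the whole argument is the locality step for $\mathcal{E}_i|_{T(U_i\setminus A_i)}$, where one must carefully justify that shortening each recovering set to $T(U_i\setminus A_i)$ preserves (and not weakens) the locality parameter $r$; this is where I expect the main technical care to be needed.
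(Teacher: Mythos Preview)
Your argument tacitly assumes linearity of both $\mathcal{C}^I$ and $\mathcal{C}^F$, which the lemma does not. You need $\mathcal{C}^I$ linear so that the fiber $\mathcal{D}_i=\{\mathbf{c}\in\mathcal{C}^I:\mathbf{c}|_{A_i}=\mathbf{0}\}$ has dimension at least $k-|A_i|$ (for a general code only \emph{some} fiber of the projection onto $A_i$ is that large), and you need $\mathcal{C}^F$ linear so that $\mathbf{d}_{\mathbf{c}}-\mathbf{d}^0$ lies in $\mathcal{C}^F$ and so that restricting each recovering set to $T(U_i\setminus A_i)$ still gives a valid local recovery for $\mathcal{E}_i$. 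The paper later invokes this lemma precisely to remove the linearity hypothesis from the MDS access-cost lower bound, so the gap is substantive, not cosmetic.

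The paper's proof is a counting argument that sidesteps linearity entirely. One fixes the message blocks $j\ne i$ to $\mathbf{0}$ and lets block $i$ vary, obtaining a subcode $\tilde{\mathcal{C}}\subseteq\mathcal{C}^F$ with $|\tilde{\mathcal{C}}|=q^k$; for any $S\subseteq[n^F]$ with $|S|\ge n^F-d+1$ one then has $|\tilde{\mathcal{C}}|_S|=q^k$. In each case the paper engineers $S$ so that the non-constant symbols of $\mathbf{d}|_S$ depend only on a controlled portion of $\mathbf{c}_i$: in Case~1, only on $\mathbf{c}_i|_{A_i}$ (take $S=[n^F]\setminus T(U_i\setminus A_i)$); in Case~2, on $\mathbf{c}_i|_{A_i}$ together with $\Delta_i-|B|$ further remaining symbols, where $S$ is built to contain $B\cup\bigcup_{b\in B}I_b$ and the hypothesis $I_b\cap B=\emptyset$ is exactly what makes $\mathbf{d}|_B$ recoverable from $\mathbf{d}|_{S\setminus B}$, so $B$ contributes no extra freedom. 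The ``moreover'' clause uses the same template with $S=[n^F]\setminus T(C)$ for any $(d-1)$-subset $C\subseteq U_i\setminus A_i$, giving $q^k\le q^{|U_i\cup A_i|-(d-1)}\le q^{\,n^I-d+1}$. If one \emph{adds} linearity, your route is valid and in fact tighter in Case~2 --- it delivers $|A_i|\ge k-\Delta_i+\lfloor\Delta_i/(r+1)\rfloor$ in one stroke, whereas the paper reaches that only after a separate lemma showing one can always choose $|B|=\lfloor\Delta_i/(r+1)\rfloor$ when $\mathcal{C}^F$ is linear --- but that is a proof of a weaker statement than the one asked.
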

\begin{proof}
Similar to the proof of Lemma \ref{lem1}, we only show that (\ref{lb_D}) holds for $i=1$.

Let $\tilde{\mathcal{C}}$ be the subcode of $\mathcal{C}^{F}$ that encodes message vectors in $\{\mathbf{m}\in\mathbb{F}_q^{\zeta k}: \mathbf{m}|_{[k+1,\zeta k]}=\mathbf{0}\}$, i.e.,
$$\tilde{\mathcal{C}}=\{\mathbf{d}\in \mathcal{C}^F: \mathbf{d}=\mathcal{C}^{F}((\tilde{\mathbf{m}},\mathbf{0})) \text{~for some~} \tilde{\mathbf{m}}\in \mathbb{F}_q^{k}\}.$$
Denote $d(\tilde{\mathcal{C}})$ as the minimal distance of $\tilde{\mathcal{C}}$. Clearly, $d(\tilde{\mathcal{C}})\geq d$ and $|\tilde{\mathcal{C}}|=q^{k}$.

Now, fix a vector $\tilde{\mathbf{m}}\in\mathbb{F}_q^{k}$, let $\mathbf{d}=\mathcal{C}^{F}((\tilde{\mathbf{m}},\mathbf{0}))\in \tilde{\mathcal{C}}$. Each codeword symbol of $\mathbf{d}$ is either inherited from $\mathbf{c}_1|_{U_1},\ldots,\mathbf{c}_{\zeta}|_{U_{\zeta}}$, or is determined by symbols from $\mathbf{c}_{1}|_{A_1},\ldots,\mathbf{c}_{\zeta}|_{A_{\zeta}}$. Note that for $2\leq i\leq \zeta$,
$$\mathbf{c}_i=\mathcal{C}^{I}((\tilde{\mathbf{m}},\mathbf{0})|_{[(i-1)k+1,ik]})=\mathcal{C}^{I}(\mathbf{0})$$ 
is a constant vector. Therefore, the non-constant symbols of $\mathbf{d}$ are either inherited from $\mathbf{c}_1|_{U_1\setminus A_1}$, or are functions of symbols in $\mathbf{c}_{1}|_{A_1}$.

% Recall that the distance of an $(n,k)$ code $\mathcal{C}$ over $\mathbb{F}_q$ can be defined as follows:
% \begin{equation}\label{min_dis}
%     d(\mathcal{C})=n-\max_{I\subseteq [n]}\{|I|:|\mathcal{C}|_{I}|<q^k\}.
% \end{equation}
%Therefore, for any subset $S\subseteq [n]$ with size $n-d(\mathcal{C})+1$, we have $|\mathcal{C}|_{S}|=q^{k}$.
For any $(n,k)$ code $\mathcal{C}$ with distance $d$, we know that $|\mathcal{C}|_{S}|=q^{k}$ for any subset $S\subseteq [n]$ of size at least $n-d+1$. Based on this observation, we will pick a subset $S\subseteq [n^F]$ of size at least $n^F-d+1$ and prove the lower bound by analyzing the relationships among codeword symbols in $\mathbf{c}_1|_{U_1\setminus A_1}$, $\mathbf{c}_{1}|_{A_1}$ and $\mathbf{d}|_{S}$.
%we divide the proof into the following three cases:

\begin{itemize}
  \item \textbf{Case 1}: $\Delta_1\leq 0$
\end{itemize}

Recall that there is a one-to-one map between $U_1$ and $T(U_1)$, thus, $\Delta_1=|T(U_1\setminus A_1)|-d+1\leq 0$. In other words, $[n^F]\setminus T(U_1\setminus A_1)$ has size at least $n^F-d+1\geq n^F-d(\tilde{\mathcal{C}})+1$. Take $S=[n^F]\setminus T(U_1\setminus A_1)$, we have $|\tilde{\mathcal{C}}|_S|=q^{k}$. Note that symbols in $\mathbf{c}_{1}|_{U_1\setminus A_1}$ are remaining as $\mathbf{d}|_{T(U_1\setminus A_1)}$ in $\mathbf{d}$. Thus, by $S\cap T(U_1\setminus A_1)=\emptyset$, non-constant symbols in $\mathbf{d}|_{S}$ are functions of symbols in $\mathbf{c}_{1}|_{A_1}$. Therefore, $|\tilde{\mathcal{C}}|_S|$ is upper bounded by $q^{|A_1|}$, which implies that $|A_1|\geq k$.

\begin{itemize}
  \item \textbf{Case 2}: $\Delta_1> 0$
\end{itemize}

Let $N=\bigcup_{b\in B}I_b$. Since $\mathcal{C}^{F}$ has locality $r$, we have $|B\cup N|\leq |B|(r+1)\leq \Delta_1$. 
Note that
\begin{align*}
    |T(U_1\setminus A_1)\setminus (B\cup N)|&\geq |T(U_1\setminus A_1)|-|B\cup N|\\
    &\geq |T(U_1\setminus A_1)|-\Delta_1=d-1.
\end{align*}
Therefore, we can obtain an $(n^{F}-d+1)$-subset $S\subseteq[n^F]$ by removing any $d-1$ elements in $T(U_1\setminus A_1)\setminus (B\cup N)$ from $[n^F]$. Clearly, both $B\cup N$ and $[n^F]\setminus T(U_1\setminus A_1)$ are subsets of $S$. By $n^{F}-d+1\geq n^{F}-d(\tilde{\mathcal{C}})+1$, we have $|\tilde{\mathcal{C}}|_S|=q^{k}$. 

Let $S_1=S\cap T(U_1\setminus A_1)$. Then, 
$$|S_1|=|T(U_1\setminus A_1)|-d+1=\Delta_1.$$
Since $B\subseteq T(U_1\setminus A_1)$ and $B\cup N\subseteq S$, we have $B\subseteq S_1$. Moreover, by
$N\cap B=\emptyset$, we have $N\subseteq S\setminus B$. Thus, symbols in $\mathbf{d}|_B$ can be recovered by symbols in $\mathbf{d}|_{S\setminus B}$. On the other hand, note that symbols in $\mathbf{c}_{1}|_{U_1\setminus A_1}$ are remaining as $\mathbf{d}|_{T(U_1\setminus A_1)}$. Thus, by $T(U_1\setminus A_1)\cap S=S_1$, symbols in $\mathbf{d}|_{S}$ that are inherited from $\mathbf{c}_{1}|_{U_1\setminus A_1}$ are from $\mathbf{c}_{1}|_{T^{-1}(S_1)}$. This implies that non-constant symbols in $\mathbf{d}|_{S\setminus B}$ is determined by symbols in $\mathbf{c}_{1}|_{A_1}$ and $\mathbf{c}_{1}|_{T^{-1}(S_1\setminus B)}$. Thus, $\mathbf{d}|_S$ is determined by $\mathbf{c}_{1}|_{A_1}$ and $\mathbf{c}_{1}|_{T^{-1}(S_1\setminus B)}$. Therefore, $|\tilde{\mathcal{C}}|_S|$ is upper bounded by
$$q^{|A_1\cup T^{-1}(S_1\setminus B)|}=q^{|A_1|+|S_1|-|B|},$$
where the equality follows from $T^{-1}(S_1\setminus B)\subseteq U_1\setminus A_1$ and $(U_1\setminus A_1)\cap A_1=\emptyset$.
This leads to $|A_1|+|S_1|-|B|\geq k$, which further implies that $|A_1|\geq k-\Delta_1+|B|$.

\begin{itemize}
  \item \textbf{Case 3}: $d> n^I-k+1$
\end{itemize}

In this case, we claim that one must have $\Delta_1\leq 0$. Otherwise, assume that $\Delta_1>0$ and let $C\subseteq U_1\setminus A_1$ such that $|C|=d-1$. Let $S=[n^F]\setminus T(C)$. By $|S|=n^F-d+1\geq n^F-d(\tilde{\mathcal{C}})+1$, we have $|\tilde{\mathcal{C}}|_S|=q^{k}$. Note that for each $\mathbf{d}\in \tilde{\mathcal{C}}$, the non-constant symbols of $\mathbf{d}|_S$ are either inherited from $\mathbf{c}_{1}|_{U_1\setminus C}$ or determined by $\mathbf{c}_{1}|_{A_1}$. Therefore, we have $|\tilde{\mathcal{C}}|_S|=q^{k}\leq q^{|(U_1\setminus C)\cup A_1|}$. Since $C\subseteq U_1\setminus A_1$, we have $(U_1\setminus C)\cup A_1=(U_1\cup A_1)\setminus C$ and $|(U_1\cup A_1)\setminus C|=|U_1\cup A_1|-|C|$. This leads to
$$|(U_1\setminus C)\cup A_1|=|U_1\cup A_1|-|C|\geq k.$$
On the other hand, since $U_1\cup A_1\subseteq [n^I]$, we have
$$|U_1\cup A_1|-|C|\leq n^I-d+1.$$
This implies that $d\leq n^I-k+1$, a contradiction. Thus, when $d>n^I-k+1$, we always have $\Delta_1\leq 0$ and therefore, $|A_1|\geq k$.

Combining the above cases together, we can conclude the result.
\end{proof}

In Lemma \ref{lem2}, the size of the special subset $B$ affects the lower bound on the read access cost $|A_i|$. To get an exact lower bound on $|A_i|$, in the following, we show that one can always find such $B$ with $|B|=\lfloor\frac{\Delta_i}{r+1}\rfloor$ when $\mathcal{C}^{F}$ is linear.

% \begin{definition}\label{ECF}\cite{CMST20}
% Let $n$, $T$,$s\in \mathbb{N}$ and let $X$ be a set of size $n$. Let $\mathcal{B}=\{B_1,B_2,\cdots,B_T\}\subseteq 2^X$ be a family of subsets of $X$ such that $\bigcup_{i\in [T]} B_i=X$. If for every $i\in [T]$, $|B_i|\leq s$ and $\bigcup_{j\in [T]\setminus \{i\}} B_j\neq X$. Then, $\mathcal{B}$ is called an $(n,T,s)$-\emph{essential covering family} (ECF) of $X$.
% \end{definition} 

Let $\mathcal{C}$ be an $(n,k,r)$-LRC. We call a subset $R_i\subseteq [n]$ an \emph{extended recovering set} of $i$, if $i\in R_i$ and $R_i\setminus\{i\}$ is a recovering set of $i$.

% \begin{lemma}\label{lem_ECF}\cite{CMST20}
% For any linear $(n,k,r)$-LRC, let $\Gamma\subseteq 2^{[n]}$ be the family of all possible extended recovering sets. Then, we can find a sub-family $\mathcal{R}\subseteq \Gamma$ such that $\mathcal{R}$ is an $(n,|\mathcal{R}|,r+1)$-ECF of $[n]$ with $|\mathcal{R}|\geq \frac{n}{r+1}$.
% \end{lemma}

% With the help of Lemma \ref{lem_ECF}, we have the following result.

\begin{lemma}\label{lem3}
Assume that $\mathcal{C}^F$ is a linear $(n^F,\zeta k,r)$-LRC. For any $A\subseteq [n^F]$, there is a subset $A'\subseteq A$ of size $\lceil\frac{|A|}{r+1}\rceil$ such that every $i\in A'$ has a recovering set $I_i$ satisfying $I_i\cap A'=\emptyset$. 
%Moreover, let $B$ be the subset of $T(U_i\setminus A_i)$ defined in Lemma \ref{lem2}, we have $|B|=\lfloor\frac{\Delta_i}{r^F+1}\rfloor$.
\end{lemma}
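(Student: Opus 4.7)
The plan is to construct $A'$ by a greedy procedure that leverages the linear structure of $\mathcal{C}^F$. The key structural input from linearity is that for each $i \in [n^F]$, a minimum-weight dual codeword $p_i \in (\mathcal{C}^F)^\perp$ with $p_i(i) \neq 0$ has weight at most $r+1$; by minimality of the support, every coordinate in $\text{supp}(p_i)$ receives a nonzero entry, so the extended recovering set $R_i = \text{supp}(p_i)$ of size at most $r+1$ simultaneously serves as an extended recovering set for every one of its members. In general a coordinate $i$ may admit several such sets arising from distinct minimum-weight dual codewords containing $i$.

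With this in hand, I would run the following greedy procedure: initialize $A_0 := A$ and $A' := \emptyset$; while $A_0 \neq \emptyset$, select some $i \in A_0$ together with an extended recovering set $R \ni i$ of size at most $r+1$ satisfying $R \cap A' = \emptyset$, add $i$ to $A'$, set $I_i := R \setminus \{i\}$, and delete $R \cap A_0$ from $A_0$. Since $i \in R \cap A_0$, each iteration discards at least $1$ and at most $r+1$ elements of $A_0$, so the loop runs for at least $\lceil |A|/(r+1) \rceil$ iterations and yields the required cardinality. The disjointness property $I_i \cap A' = \emptyset$ then holds by construction: elements of $A'$ added after $i$ lie in $A_0 \setminus R$ and hence miss $R \supseteq I_i$, while elements added before $i$ are excluded by the constraint $R \cap A' = \emptyset$ enforced at the time $i$ was picked.

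The main obstacle is to establish that an admissible pair $(i, R)$ with $R \cap A' = \emptyset$ always exists whenever $A_0 \neq \emptyset$ and $|A'| < \lceil |A|/(r+1)\rceil$, i.e.\ to rule out the degenerate scenario where every extended recovering set of every $i \in A_0$ meets $A'$. I plan to address this via a dimension/covering argument on the subspace of $(\mathcal{C}^F)^\perp$ consisting of codewords supported in $[n^F] \setminus A'$: the accounting that at most $|A'|(r+1)$ elements of $A$ have been removed from $A_0$ so far (hence $|A_0| \geq |A| - |A'|(r+1)$), combined with the covering property of minimum-weight dual codewords on $[n^F] \setminus A'$, should force $|A'| \geq \lceil |A|/(r+1)\rceil$ in the stuck scenario and so contradict the hypothesis. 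This dimension-counting step, exploiting the algebraic structure of $(\mathcal{C}^F)^\perp$ provided by linearity, is the most delicate part of the argument and is where the assumption that $\mathcal{C}^F$ is linear enters essentially.
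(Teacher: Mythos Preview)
Your greedy procedure has a genuine gap: it can get stuck with $|A'| < \lceil |A|/(r+1)\rceil$, and the proposed ``dimension/covering argument'' does not rescue it. The covering property you invoke---that minimum-weight dual codewords supported in $[n^F]\setminus A'$ still cover every coordinate of $A_0$---is exactly what fails in the stuck scenario, so the argument is circular.

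Here is a small explicit counterexample. Take $n^F=5$ and let $\mathcal{C}^F$ be the linear code over $\mathbb{F}_q$ (any $q\geq 3$) whose dual is spanned by $(1,1,1,0,0)$ and $(0,0,1,1,1)$. The nonzero dual codewords of weight $\leq 3$ have supports $\{1,2,3\}$ and $\{3,4,5\}$, so the code has locality $r=2$. Take $A=\{1,2,3,4,5\}$; the target is $|A'|\geq\lceil 5/3\rceil=2$. If your greedy first selects $i=3$ (with either of its two extended recovering sets, say $R=\{1,2,3\}$), then $A'=\{3\}$ and $A_0=\{4,5\}$. Now every extended recovering set of size $\leq 3$ containing $4$ or $5$ is $\{3,4,5\}$, which meets $A'$. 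You are stuck with $|A'|=1<2$. Your accounting inequality $|A_0|\geq |A|-|A'|(r+1)$ gives only $2\geq 2$, and the dual codewords supported in $[5]\setminus\{3\}$ all have weight $4>r+1$, so no ``covering on $[n^F]\setminus A'$'' is available.

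The paper avoids this by a global, non-greedy construction: take a minimal cover of $A$ by sets of the form $R\cap A$ with $R$ an extended recovering set, and let $A'$ consist of one \emph{private} element from each set in the cover (an element belonging to no other set of the cover). Then $|A'|\geq\lceil |A|/(r+1)\rceil$ automatically, and for each $i_S\in A'$ the set $R(S)\setminus\{i_S\}$ is a recovering set with $R(S)\cap A'=\{i_S\}$ since no other private element lies in $S=R(S)\cap A$. Linearity is used only to guarantee that $R\setminus\{i\}$ recovers $i$ for \emph{every} $i\in R$, not just the one $R$ was originally attached to. Your greedy can be repaired along these lines, but it essentially becomes this argument: you must defer the choice of which element of each $R_j\cap A$ to place in $A'$ until after all the $R_j$'s are fixed, and then pick private elements of a minimal subcover.
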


\begin{proof}
Let $\Gamma\subseteq 2^{[n^F]}$ be the family of all possible extended recovering sets of $\mathcal{C}^{F}$. First, we show that there is a sub-family $\mathcal{R}\subseteq \Gamma$ such that $\bigcup_{R\in \mathcal{R}}R=[n^F]$ and for any $R'\in \mathcal{R}$, $\bigcup_{R\in \mathcal{R}\setminus\{R'\}}R\neq [n^F]$. Clearly, $\bigcup_{R\in \Gamma}R=[n^F]$. Assume that there is an $R_0\in \Gamma$ such that $\bigcup_{R\in \Gamma\setminus\{R_0\}}R= [n^F]$. Then, we can remove this $R_0$ from $\Gamma$ and obtain a new family $\Gamma_1$ such that $\bigcup_{R\in \Gamma_1}R=[n^F]$. Since $\Gamma$ is finite, by repeating this procedure, we can obtain a sub-family $\mathcal{R}\subseteq \Gamma$ satisfy the required property at a certain step.

Denote 
$$\mathcal{S}=\{R\cap A: R\in \mathcal{R}\}.$$
Clearly, $\bigcup_{S\in \mathcal{S}}S=A$. W.l.o.g., we assume that for every $S'\in \mathcal{S}$, $\bigcup_{S\in \mathcal{S}\setminus \{S'\}}S\neq A$. Otherwise, by apply the elimination procedure above, we can get a sub-family $\mathcal{S}'\subseteq \mathcal{S}$ satisfying this property and $\bigcup_{S\in \mathcal{S}'}S=A$. 

Under this assumption, each $S\in \mathcal{S}$ contains a unique element $i_S\in A$ which doesn't belong to $\bigcup_{S'\in \mathcal{S}\setminus\{S\}}S'$. By the linearity of $\mathcal{C}$, for every $R\in\mathcal{R}$ and $i\in R$, $R\setminus \{i\}$ is a recovering set of $i$. Therefore, for $S\in \mathcal{S}$, we can take some $R(S)\in \mathcal{R}$ satisfying $R(S)\cap A=S$ as the extended recovering set of $i_{S}$. Clearly, for any $S'\in \mathcal{S}\setminus \{S\}$, $i_{S'}\notin R(S)$. Let $A'=\{i_{S}: S\in \mathcal{S}\}$. Then, we have $|A'|=|\mathcal{S}|$ and every $i_{S}\in A'$ has a recovering set $R(S)\setminus\{i_{S}\}$ such that $R(S)\cap A'=\{i_{S}\}$. Since $|R|\leq r+1$ for each $R\in\mathcal{R}$, $|S|\leq r+1$ for each $S\in \mathcal{S}$. Therefore, $|A'|=|\mathcal{S}|\geq \lceil\frac{|A|}{r+1}\rceil$. 
%Now, take $U_i\setminus A_i\subseteq [n^F]$ in Lemma \ref{lem2} as the set $A$. Then, there is a subset $A'$ of $U_i\setminus A_i$ of size $\lceil\frac{\Delta_i+d-1}{r^F+1}\rceil$ such that each $i\in A'$ has a recovering set $I_i$ satisfying $I_i\cap A'=\emptyset$. Then, we can take $B$ as any $\lfloor\frac{\Delta_i}{r^F+1}\rfloor$ subset of $A'$.
\end{proof}

Now, armed with the results above, we can obtain the following general lower bound on the access cost of the convertible code $(\mathcal{C}^{I},\mathcal{C}^{F})$.

\begin{theorem}\label{lb_accesscost}
Let $(\mathcal{C}^I,\mathcal{C}^F)$ be an $(n^{I},k;n^{F},\zeta k)$ convertible code over $\mathbb{F}_q$ with conversion procedure $T$. Assume that $\mathcal{C}^F$ is a linear code with locality $r$ and minimum distance $d$. Then, the write access cost of $T$ is at least
\begin{equation}\label{eq_main1}
  \zeta \left(d+ (\zeta-1)k+\lceil\frac{(\zeta-1) k}{r}\rceil-2\right)-(\zeta-1)n^F.
\end{equation}
Write $\Delta=n^{F}-2d-((\zeta-1) k+\lceil\frac{(\zeta-1) k}{r}\rceil)+3$. Then, the read access cost of $T$ is at least
\begin{equation}\label{eq_main2}
\begin{cases}
\zeta k, \text{ if $\Delta\leq0$};\\
\zeta(k-\lceil\frac{r\Delta}{r+1}\rceil), \text{ otherwise.}
\end{cases}
\end{equation}
Moreover, when $d> n^I-k+1$, the read access cost is at least $\zeta k$.
%Moreover, when $(r^F+1)|n^F$, $r^F|k^F$ and $d-2\equiv r^F \mod{r^F+1}$, the write access cost is at least $W+\zeta$; when $d> n^I-k^I+1$, the read access cost is at least $\zeta k^I$.
\end{theorem}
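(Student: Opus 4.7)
The plan is to separate the two bounds and feed the preparatory lemmas into them.

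For the write access cost, I will first observe that since each position in $\mathbf{d}$ is inherited from at most one initial codeword, the sets $T(U_1),\ldots,T(U_\zeta)$ are pairwise disjoint subsets of $[n^F]$. Hence the number of new symbols, which by definition equals the write access cost, is $n^F-\sum_{i=1}^{\zeta}|T(U_i)|$. Plugging in the uniform upper bound from Lemma~\ref{lem1}, namely $|T(U_i)|\le n^F-d-((\zeta-1)k+\lceil(\zeta-1)k/r\rceil)+2$ for every $i$, and simplifying yields exactly~\eqref{eq_main1}. This part is essentially bookkeeping once Lemma~\ref{lem1} is granted.

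For the read access cost, the read cost is $\sum_{i=1}^{\zeta}|A_i|$, so I need a lower bound on each $|A_i|$. The key is Lemma~\ref{lem2}, which gives $|A_i|\ge k-\Delta_i+|B_i|$ when $\Delta_i=|U_i\setminus A_i|-d+1>0$, provided $B_i\subseteq T(U_i\setminus A_i)$ has size at most $\lfloor\Delta_i/(r+1)\rfloor$ and each element of $B_i$ admits a recovering set disjoint from $B_i$. To maximise this bound I want $|B_i|=\lfloor\Delta_i/(r+1)\rfloor$, and here is where linearity enters: applying Lemma~\ref{lem3} with $A=T(U_i\setminus A_i)$ produces a subset of $T(U_i\setminus A_i)$ of size $\lceil|T(U_i\setminus A_i)|/(r+1)\rceil\ge\lfloor\Delta_i/(r+1)\rfloor$ whose every element has a recovering set disjoint from the whole subset, and any sub-subset of size $\lfloor\Delta_i/(r+1)\rfloor$ inherits the same disjointness property. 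A short arithmetic check (splitting $\Delta_i=(r+1)q+s$ with $0\le s\le r$) gives $\Delta_i-\lfloor\Delta_i/(r+1)\rfloor=\lceil r\Delta_i/(r+1)\rceil$, so $|A_i|\ge k-\lceil r\Delta_i/(r+1)\rceil$.

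To make the bound uniform in $i$, I would next note that $\Delta_i\le|U_i|-d+1\le\Delta$ directly from Lemma~\ref{lem1}. When $\Delta\le 0$, then $\Delta_i\le 0$ for every $i$, so by the $\Delta_i\le 0$ case of Lemma~\ref{lem2} every $|A_i|\ge k$ and the read cost is at least $\zeta k$. When $\Delta>0$, for any $i$ with $\Delta_i>0$ the bound $|A_i|\ge k-\lceil r\Delta/(r+1)\rceil$ holds by monotonicity of $x\mapsto\lceil rx/(r+1)\rceil$, while any $i$ with $\Delta_i\le 0$ satisfies $|A_i|\ge k\ge k-\lceil r\Delta/(r+1)\rceil$, so summing gives~\eqref{eq_main2}. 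The final assertion, that the read cost is at least $\zeta k$ whenever $d>n^I-k+1$, follows directly from the last clause of Lemma~\ref{lem2} applied to each $i$.

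The step I expect to be the main obstacle is verifying that Lemma~\ref{lem3} indeed supplies a $B_i$ of the precise size $\lfloor\Delta_i/(r+1)\rfloor$ demanded by Lemma~\ref{lem2}, and checking that the sub-selection preserves the disjointness of recovering sets; the rest is routine substitution and the elementary identity relating $\lfloor\Delta_i/(r+1)\rfloor$ to $\lceil r\Delta_i/(r+1)\rceil$.
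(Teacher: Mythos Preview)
Your proposal is correct and follows essentially the same approach as the paper's proof: both derive the write bound by subtracting the total of the $|T(U_i)|$ (bounded via Lemma~\ref{lem1}) from $n^F$, and both derive the read bound by combining Lemma~\ref{lem2} with Lemma~\ref{lem3} to obtain $|B_i|=\lfloor\Delta_i/(r+1)\rfloor$ and then using $\Delta_i\le\Delta$ from Lemma~\ref{lem1}. Your write-up is in fact slightly more careful than the paper's on the point you flagged as the main obstacle: the paper simply asserts $|B|=\lfloor\Delta_i/(r+1)\rfloor$ ``by Lemma~\ref{lem3}'', whereas you correctly spell out that Lemma~\ref{lem3} yields a larger set and that passing to a sub-subset preserves the required disjointness of recovering sets.
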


\begin{proof}%[Proof of Theorem \ref{lb_accesscost}]
Note that the write access cost is the number of new symbols written during the conversion procedure and there are $n^F$ symbols in total. Therefore, (\ref{eq_main1}) follows directly from Lemma \ref{lem1}.

As in Lemma \ref{lem2}, for every $i\in [\zeta]$, assume that $|U_i\setminus A_i|-d+1=\Delta_i$. By Lemma \ref{lem1}, we have 
$$|U_i\setminus A_i|\leq |U_i|\leq n^{F}-d-((\zeta-1) k+\lceil\frac{(\zeta-1) k}{r}\rceil)+2.$$
This leads to,
\begin{align*}
  |U_i\setminus A_i|-d+1 & \leq  n^{F}-2d-((\zeta-1) k+\lceil\frac{(\zeta-1) k}{r}\rceil)+3=\Delta.
\end{align*}
Thus, we have $\Delta_i\leq \Delta$ for every $1\leq i\leq \zeta$. On the other hand, by Lemma \ref{lem3}, we have
\begin{align*}
\Delta_i-|B|&=\Delta_i-\lfloor\frac{\Delta_i}{r+1}\rfloor\\
&=\lceil\Delta_i\cdot\frac{r}{r+1}\rceil\leq \lceil\frac{r\Delta}{r+1}\rceil.
\end{align*} 
Therefore, by Lemma \ref{lem2}, we have
\begin{equation*}
    |A_i|\geq
    \begin{cases}
    k,~\text{if $\Delta\leq 0$};\\
    k-\lceil\frac{r\Delta}{r+1}\rceil,~\text{otherwise}.
    \end{cases}
\end{equation*}
for every $i\in [\zeta]$. This leads to (\ref{eq_main2}).

When $d> n^I-k+1$, the results follow directly from Lemma \ref{lem2}.
\end{proof}

\begin{corollary}\label{coro1}
Let $(\mathcal{C}^I,\mathcal{C}^F)$ be an $(n^{I},kr,r;n^{F},\zeta kr,r)$ optimal LRCCs over $\mathbb{F}_q$ with conversion procedure $T$. Assume that $(r+1)|n^I$ and $(r+1)|n^F$. Then, the write access cost of $T$ is at least $n^F-\zeta k(r+1)$ and the read access cost of $T$ is at least
\begin{equation}\label{eq_main4}
    \begin{cases}
    \zeta kr, \text{ if $(\zeta+1)k\leq \frac{n^F}{r+1}$};\\
    \zeta(\frac{rn^F}{r+1}-\zeta kr), \text{ otherwise.}
    \end{cases}
\end{equation}
Moreover, when $n^F-n^I\geq (\zeta-1)kr+\zeta k$, the read access cost is at least $\zeta kr$.
\end{corollary}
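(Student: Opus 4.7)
The plan is to derive the corollary as a direct consequence of Theorem~\ref{lb_accesscost}, specialized to the setting where $\mathcal{C}^F$ is an optimal $(n^F,\zeta kr,r)$-LRC whose length is a multiple of $r+1$. In Theorem~\ref{lb_accesscost}, the quantity playing the role of ``$k$'' is the dimension of the initial code, which in our situation is $kr$; the $\zeta$ and $r$ keep their meaning. So the task reduces to plugging in the exact minimum distance of $\mathcal{C}^F$ and simplifying.

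First, since $\mathcal{C}^F$ is an optimal LRC, Theorem~\ref{Singleton_bound} pins down its minimum distance exactly:
\begin{equation*}
d = n^F - \zeta kr - \left\lceil \frac{\zeta kr}{r}\right\rceil + 2 = n^F - \zeta k(r+1) + 2.
\end{equation*}
I would substitute this value into the write-cost bound $\zeta\bigl(d+(\zeta-1)kr+\lceil (\zeta-1)kr/r\rceil -2\bigr)-(\zeta-1)n^F$ of Theorem~\ref{lb_accesscost}. After using $\lceil(\zeta-1)kr/r\rceil=(\zeta-1)k$, the expression telescopes directly to $n^F - \zeta k(r+1)$, giving the first claim.

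For the read-cost bound, substituting the same $d$ into the quantity $\Delta$ of Theorem~\ref{lb_accesscost} yields $\Delta = (\zeta+1)k(r+1) - n^F - 1$. The case split of Theorem~\ref{lb_accesscost} then becomes: $\Delta \leq 0$ is equivalent to $(\zeta+1)k(r+1) \leq n^F + 1$; since both $(\zeta+1)k(r+1)$ and $n^F$ are multiples of $r+1$ (using $(r+1)\mid n^F$), this collapses to $(\zeta+1)k \leq n^F/(r+1)$, and delivers the read-cost lower bound $\zeta kr$. In the complementary case, the key observation is that $\Delta+1 = (\zeta+1)k(r+1) - n^F$ is divisible by $r+1$, so the ceiling evaluates exactly as $\lceil r\Delta/(r+1)\rceil = r(\Delta+1)/(r+1)$. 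Substituting this into $\zeta(kr - \lceil r\Delta/(r+1)\rceil)$ and rearranging gives precisely $\zeta\bigl(rn^F/(r+1) - \zeta kr\bigr)$, as claimed.

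For the final assertion, I would invoke the last sentence of Theorem~\ref{lb_accesscost}, which guarantees read-cost $\geq \zeta kr$ whenever $d > n^I - kr + 1$. Plugging in $d = n^F - \zeta k(r+1) + 2$ and rearranging shows this inequality is equivalent to $n^F - n^I \geq (\zeta-1)kr + \zeta k$, which matches the stated hypothesis. The only step that requires any care is the ceiling simplification in the second case: the crucial point is that $(r+1)\mid n^F$ forces $\Delta+1$ to be a multiple of $r+1$, which is what turns a potentially loose ceiling into the clean closed form appearing in the statement; everything else is routine algebra on the two bounds supplied by Theorem~\ref{lb_accesscost}.
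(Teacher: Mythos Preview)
Your proposal is correct and follows essentially the same route as the paper: both substitute the optimal-LRC distance $d=n^F-\zeta k(r+1)+2$ into Theorem~\ref{lb_accesscost} and simplify using $(r+1)\mid n^F$. The only cosmetic difference is that the paper evaluates the ceiling via the identity $\lceil r\Delta/(r+1)\rceil=\Delta-\lfloor\Delta/(r+1)\rfloor$ whereas you use the divisibility of $\Delta+1$ by $r+1$ directly; these amount to the same computation.
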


\begin{remark}\label{rmk1}
The LRCC obtained in Corollary \ref{coro3} attains the lower bound in (\ref{eq_main4}).
\end{remark}

\begin{proof}
The proof is straight forward. Since $\mathcal{C}^F$ is an optimal $r$-LRC, we have $d=d(\mathcal{C}^F)=n^F-\zeta k (r+1)+2$. Thus, by (\ref{eq_main1}), the write access cost is at least $n^F-\zeta k(r+1)$. Meanwhile, by the formula of $\Delta$, we have
\begin{align*}
    \Delta&=n^{F}-2d-((\zeta-1) kr+\lceil\frac{(\zeta-1) kr}{r}\rceil)+3\\
    &=2\zeta k(r+1)-4-n^F-(\zeta-1) k(r+1)+3\\
    &=(\zeta+1)k(r+1)-n^F-1.
\end{align*}
Thus, $\lfloor\frac{\Delta}{r+1}\rfloor=(\zeta+1)k-\frac{n^F}{r+1}-1$ and $\Delta-\lfloor\frac{\Delta}{r+1}\rfloor=(\zeta+1)kr-\frac{rn^F}{r+1}$. This leads to
\begin{align*}
    kr-\lceil\frac{r\Delta}{r+1}\rceil&=kr-\left(\Delta-\lfloor\frac{\Delta}{r+1}\rfloor\right)\\
    &=\frac{rn^F}{r+1}-\zeta kr
\end{align*}
and (\ref{eq_main4}) follows directly from (\ref{eq_main2}).

When $n^F-n^I\geq (\zeta-1)kr+\zeta k$, by $d=n^F-\zeta k(r+1)+2$, we have $d\geq n^I-kr+2$. Then, the result follows directly from Theorem \ref{lb_accesscost}.
\end{proof}

Note that $(n,k)$ MDS codes are $(n,k,k)$-optimal LRCs. Thus, by Lemma \ref{lem1} and Lemma \ref{lem2}, we can also obtain a proof of Theorem \ref{thm1} without assuming the linearity of $\mathcal{C}^I$ and $\mathcal{C}^F$.

\begin{theorem}\label{ac_MDS}
For all $(n^I,k;n^F,\zeta k)$ MDS convertible codes, the read access cost of conversion is at least $\zeta\cdot \min\{k,n^F-\zeta k\}$ and the write access cost is at least $n^F-\zeta k$. Furthermore, if $n^I-k<n^F-\zeta k$, then, the read access cost is at least $\zeta k$.
\end{theorem}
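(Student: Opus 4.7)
The approach is to treat this as a direct specialization of Lemmas~\ref{lem1} and~\ref{lem2}, exploiting the fact that every $(n,k)$ MDS code is trivially an $(n,k,k)$-LRC (each symbol can be recovered from any other $k$ information symbols). For $\mathcal{C}^F$ this gives locality $r=\zeta k$ and minimum distance $d=n^F-\zeta k+1$. Observe that Lemmas~\ref{lem1} and~\ref{lem2} are stated for convertible codes whose final code has locality $r$ but do not require linearity of either code, so instantiating them at $r=\zeta k$ will give the theorem without any linearity hypothesis.

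For the write bound, I would first plug $r=\zeta k$ and $d=n^F-\zeta k+1$ into Lemma~\ref{lem1}. The ceiling $\lceil (\zeta-1)k/r\rceil$ equals $1$, so
\begin{equation*}
|U_i|\leq n^F-(n^F-\zeta k+1)-\bigl((\zeta-1)k+1\bigr)+2=k
\end{equation*}
for every $i\in[\zeta]$. Since the remaining symbols of the unique final codeword $\mathbf{d}$ are inherited from $\bigcup_{i\in[\zeta]} T(U_i)$, whose total size is at most $\zeta k$, the number of new symbols, i.e., the write access cost, is at least $n^F-\zeta k$.

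For the read bound I would apply Lemma~\ref{lem2} with the same parameters. The key reduction is that
\begin{equation*}
\Delta_i\leq |U_i|-d+1\leq k-(n^F-\zeta k)=(\zeta+1)k-n^F,
\end{equation*}
while $r+1=\zeta k+1$, so $\lfloor \Delta_i/(r+1)\rfloor=0$ whenever $\Delta_i>0$. Thus the auxiliary set $B$ in Lemma~\ref{lem2} is necessarily empty, and the bound collapses to $|A_i|\geq k-\max\{\Delta_i,0\}$. A short case split finishes the argument: if $n^F-\zeta k\geq k$ then $\Delta_i\leq 0$ and $|A_i|\geq k$; otherwise $|A_i|\geq k-\bigl((\zeta+1)k-n^F\bigr)=n^F-\zeta k$. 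Either way $|A_i|\geq\min\{k,n^F-\zeta k\}$, and summing over the $\zeta$ initial codewords yields the claimed read cost. The "furthermore" clause is immediate from the last sentence of Lemma~\ref{lem2}: $n^I-k<n^F-\zeta k$ is equivalent to $d>n^I-k+1$, which directly forces $|A_i|\geq k$ for every $i$.

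There is no real obstacle here; the whole proof is a substitution into the two lemmas from the previous subsection. The only subtlety worth pointing out is that in the MDS regime the inequality $r+1>\Delta_i$ makes the auxiliary subset $B$ empty, so one never needs to invoke the linearity-dependent Lemma~\ref{lem3}. This is precisely the reason the linearity assumption present in Theorem~\ref{thm1} can be dropped in Theorem~\ref{ac_MDS}.
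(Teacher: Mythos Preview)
Your proposal is correct and follows essentially the same route as the paper's own proof: both specialize Lemmas~\ref{lem1} and~\ref{lem2} to the MDS case by setting $r=\zeta k$ and $d=n^F-\zeta k+1$, derive $|U_i|\le k$ for the write bound, perform the identical case split on the sign of $(\zeta+1)k-n^F$ for the read bound, and invoke the last clause of Lemma~\ref{lem2} for the ``furthermore'' part. Your explicit remark that $\Delta_i\le k<r+1$ forces $B=\emptyset$, thereby bypassing Lemma~\ref{lem3} and the linearity hypothesis, matches the paper's observation preceding the theorem and is exactly the point of the result.
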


\begin{proof}
The proof is straight forward. Since $\mathcal{C}^F$ has locality $r=\zeta k$, we have $\lceil\frac{(\zeta-1) k}{r}\rceil=1$ and the minimum distance of $\mathcal{C}^F$ satisfies $d=n^F-\zeta k+1$. By substituting these two parameters into (\ref{upb_R}), we have $|U_i|\leq k$ for every $1\leq i\leq \zeta$. Thus, the write access cost is at least $n^F-\zeta k$. Meanwhile, by the definition of $\Delta_i$ in Lemma \ref{lem2}, we have
$$\Delta_i\leq |U_i|-d+1\leq(\zeta+1)k-n^F.$$
When $n^F-\zeta k\geq k$, we have $(\zeta+1)k-n^F\leq 0$. Thus, $\Delta_i\leq 0$ and by (\ref{lb_D}), $|A_i|\geq k$. When $n^F-\zeta k<k$, we have $0<(\zeta+1)k-n^F\leq k$. Thus, if $\Delta_i>0$, the requirement of $|B|$ in Lemma \ref{lem2} becomes $|B|\leq \lfloor\frac{\Delta_i}{r+1}\rfloor=0$. 
Then, we can take $B=\emptyset$ in Lemma \ref{lem2} and by (\ref{lb_D}), we have $|A_i|\geq k-\Delta_i\geq n^F-\zeta k$. Therefore, the read access cost is at least $\zeta\cdot \min\{k,n^F-\zeta k\}$. 

Moreover, if $n^I-k<n^F-\zeta k$, we have $d> n^I-k+1$ and the result also follows directly from Lemma \ref{lem2}.
\end{proof}

\section{Conclusion and further research}\label{sec_con}

In this paper, we study the code conversion problem of locally repairable codes in the merge regime. First, by establishing proper maps among polynomials of different degrees, we provide a general construction of MDS convertible codes with optimal access costs using GRS codes. As a consequence, we obtain a family of $(n^I,k;n^F,\zeta k)$ MDS convertible codes with optimal access costs over finite field of size linear in $n^I$. Then, based on the construction of optimal LRCs in \cite{TB14}, we extend the construction of MDS convertible codes and provide a general construction of LRCCs. Similarly, a family of $(n^I,k,r;n^F,\zeta k,r)$ LRCCs over finite field of size linear in $n^I$ is obtained. Finally, we prove a lower bound on the access cost of convertible codes with the condition that the final code is an LRC. This confirms that the LRCCs obtained through our construction admit optimal access costs.

The study presented in this paper is inspired by interesting open questions raised in \cite{MR22b} regarding constructions of low-field-size convertibles codes optimizing the conversion costs. Hence, our primary focus is on constructing low-field-size convertibles codes. As phrased in \cite{MR22b}, convertible codes have significant potentials for real-world impact by enabling resource-efficient redundancy tuning. Although many works have been done recently in this topic, there are still a lot of questions left. In the following, we mention several open questions that, in our opinion, could improve the study of locally repairable convertible codes.
\begin{enumerate}    
    \item[1.] In this paper, we only focus on LRCCs in the merge regime. However, as another fundamental case, convertible codes in split regime also have strong practical backgrounds. So, for LRCCs in split regime, what is the lower bound on the access cost and are there any corresponding optimal constructions? Very rescently, Maturana and Rashmi  \cite{MR23b} presented a construction of convertible codes with information locality in the split regime, which is a good start for the study of this question.
    \item[2.] For general LRCCs, the initial code and the final code may have different localities. However, our construction in Section \ref{sec_LRCC} requires the initial code and the final code have the same locality. Note that our lower bound in Theorem \ref{lb_accesscost} is for the general case. Is it possible to generalize our construction and obtain convertible codes with optimal access costs when the initial code and the final code have different localities?
    \item[3.] Unlike MDS codes, for optimal locally repairable codes, it is known that the size of the field can be sub-linear in the code length (e.g., see \cite{GXY19,CMST20,KWG21,XC22}). Note that long codes over small fields are preferred in practice. It's also interesting to ask how small the field size can be to ensure the existence of LRCCs with optimal access costs.
    \item[4.] As two variations of LRCs, codes with $(r,\delta)$-locality and LRCs with disjoint recovering sets (availability) are also important storage codes, which have attracted lots of attentions these years. Therefore, the study of code conversion problems for these codes are also interesting both theoretically and practically.
\end{enumerate}

\section*{Acknowledgement}

We thank Prof. Itzhak Tamo for his many illuminating discussions and especially, his helpful suggestions regarding the presentation of this paper. We also thank Prof. Xin Wang for reading and commenting the
earlier version of the manuscript.

\bibliographystyle{IEEEtran}
\bibliography{biblio}

% Generated by IEEEtran.bst, version: 1.14 (2015/08/26)
\begin{thebibliography}{10}
\providecommand{\url}[1]{#1}
\csname url@samestyle\endcsname
\providecommand{\newblock}{\relax}
\providecommand{\bibinfo}[2]{#2}
\providecommand{\BIBentrySTDinterwordspacing}{\spaceskip=0pt\relax}
\providecommand{\BIBentryALTinterwordstretchfactor}{4}
\providecommand{\BIBentryALTinterwordspacing}{\spaceskip=\fontdimen2\font plus
\BIBentryALTinterwordstretchfactor\fontdimen3\font minus
  \fontdimen4\font\relax}
\providecommand{\BIBforeignlanguage}[2]{{%
\expandafter\ifx\csname l@#1\endcsname\relax
\typeout{** WARNING: IEEEtran.bst: No hyphenation pattern has been}%
\typeout{** loaded for the language `#1'. Using the pattern for}%
\typeout{** the default language instead.}%
\else
\language=\csname l@#1\endcsname
\fi
#2}}
\providecommand{\BIBdecl}{\relax}
\BIBdecl

\bibitem{Huang12}
\BIBentryALTinterwordspacing
C.~Huang, H.~Simitci, Y.~Xu, A.~Ogus, B.~Calder, P.~Gopalan, J.~Li, and
  S.~Yekhanin, ``Erasure coding in windows azure storage,'' in \emph{2012
  USENIX Annual Technical Conference (USENIX ATC 12)}.\hskip 1em plus 0.5em
  minus 0.4em\relax Boston, MA: USENIX Association, Jun. 2012, pp. 15--26.
  [Online]. Available:
  \url{https://www.usenix.org/conference/atc12/technical-sessions/presentation/huang}
\BIBentrySTDinterwordspacing

\bibitem{SAPDVCB13}
\BIBentryALTinterwordspacing
M.~Sathiamoorthy, M.~Asteris, D.~Papailiopoulos, A.~G. Dimakis, R.~Vadali,
  S.~Chen, and D.~Borthakur, ``Xoring elephants: Novel erasure codes for big
  data,'' \emph{Proc. VLDB Endow.}, vol.~6, no.~5, pp. 325--336, mar 2013.
  [Online]. Available: \url{https://doi.org/10.14778/2535573.2488339}
\BIBentrySTDinterwordspacing

\bibitem{Apache}
A.~S. Foundation., ``Apache hadoop documentation: {HDFS} {E}rasure
  {C}oding.''\hskip 1em plus 0.5em minus 0.4em\relax Accessed: Jul. 25, 2023.
  Available:
  https://hadoop.apache.org/docs/r3.0.0/hadoop-project-dist/hadoop-hdfs/HDFSErasureCoding.html:
  [Online].

\bibitem{KRG19}
S.~Kadekodi, K.~V. Rashmi, and G.~R. Ganger, ``Cluster storage systems gotta
  have heart: Improving storage efficiency by exploiting disk-reliability
  heterogeneity,'' in \emph{Proceedings of the 17th USENIX Conference on File
  and Storage Technologies}, ser. FAST'19.\hskip 1em plus 0.5em minus
  0.4em\relax USA: USENIX Association, 2019, pp. 345--358.

\bibitem{MR22a}
F.~Maturana and K.~V. Rashmi, ``Convertible codes: Enabling efficient
  conversion of coded data in distributed storage,'' \emph{IEEE Transactions on
  Information Theory}, vol.~68, no.~7, pp. 4392--4407, 2022.

\bibitem{KMSYRG20}
S.~Kadekodi, F.~Maturana, S.~J. Subramanya, J.~Yang, K.~V. Rashmi, and G.~R.
  Ganger, ``Pacemaker: Avoiding heart attacks in storage clusters with
  disk-adaptive redundancy,'' ser. OSDI'20.\hskip 1em plus 0.5em minus
  0.4em\relax USA: USENIX Association, 2020.

\bibitem{MMR20}
F.~Maturana, V.~S.~C. Mukka, and K.~V. Rashmi, ``Access-optimal linear mds
  convertible codes for all parameters,'' in \emph{2020 IEEE International
  Symposium on Information Theory (ISIT)}, 2020, pp. 577--582.

\bibitem{MR23a}
\BIBentryALTinterwordspacing
F.~Maturana and K.~V. Rashmi, ``Bandwidth cost of code conversions in
  distributed storage: Fundamental limits and optimal constructions,''
  \emph{{IEEE} Trans. Inf. Theory}, vol.~69, no.~8, pp. 4993--5008, 2023.
  [Online]. Available: \url{https://doi.org/10.1109/TIT.2023.3265512}
\BIBentrySTDinterwordspacing

\bibitem{MR23b}
------, ``Locally reapirable convertible codes,'' in \emph{{IEEE} International
  Symposium on Information Theory, {ISIT} 2023, Taipei, Taiwan, China, June 25
  - June 30, 2023}, p. to appear.

\bibitem{HCL07}
C.~Huang, M.~Chen, and J.~Li, ``Pyramid codes: Flexible schemes to trade space
  for access efficiency in reliable data storage systems,'' in \emph{Sixth IEEE
  International Symposium on Network Computing and Applications (NCA 2007)},
  2007, pp. 79--86.

\bibitem{TB14}
\BIBentryALTinterwordspacing
I.~Tamo and A.~Barg, ``A family of optimal locally recoverable codes,''
  \emph{{IEEE} Trans. Inf. Theory}, vol.~60, no.~8, pp. 4661--4676, 2014.
  [Online]. Available: \url{https://doi.org/10.1109/TIT.2014.2321280}
\BIBentrySTDinterwordspacing

\bibitem{BT2017}
\BIBentryALTinterwordspacing
A.~Barg, I.~Tamo, and S.~Vl\u{a}da\c{t}, ``Locally recoverable codes on
  algebraic curves,'' \emph{IEEE Trans. Inform. Theory}, vol.~63, no.~8, pp.
  4928--4939, 2017. [Online]. Available:
  \url{https://doi.org/10.1109/TIT.2017.2700859}
\BIBentrySTDinterwordspacing

\bibitem{LMX19}
X.~Li, L.~Ma, and C.~Xing, ``Optimal locally repairable codes via elliptic
  curves,'' \emph{IEEE Transactions on Information Theory}, vol.~65, no.~1, pp.
  108--117, 2019.

\bibitem{JMX20}
L.~Jin, L.~Ma, and C.~Xing, ``Construction of optimal locally repairable codes
  via automorphism groups of rational function fields,'' \emph{IEEE
  Transactions on Information Theory}, vol.~66, no.~1, pp. 210--221, 2020.

\bibitem{CFMsT22}
\BIBentryALTinterwordspacing
H.~Cai, C.~Fan, Y.~Miao, M.~Schwartz, and X.~Tang, ``Optimal locally repairable
  codes: An improved bound and constructions,'' \emph{{IEEE} Trans. Inf.
  Theory}, vol.~68, no.~8, pp. 5060--5074, 2022. [Online]. Available:
  \url{https://doi.org/10.1109/TIT.2022.3161613}
\BIBentrySTDinterwordspacing

\bibitem{GHSY12}
P.~Gopalan, C.~Huang, H.~Simitci, and S.~Yekhanin, ``On the locality of
  codeword symbols,'' \emph{IEEE Transactions on Information Theory}, vol.~58,
  no.~11, pp. 6925--6934, 2012.

\bibitem{TBF16}
I.~Tamo, A.~Barg, and A.~Frolov, ``Bounds on the parameters of locally
  recoverable codes,'' \emph{IEEE Transactions on Information Theory}, vol.~62,
  no.~6, pp. 3070--3083, 2016.

\bibitem{ABHMT18}
A.~Agarwal, A.~Barg, S.~Hu, A.~Mazumdar, and I.~Tamo, ``Combinatorial
  alphabet-dependent bounds for locally recoverable codes,'' \emph{IEEE
  Transactions on Information Theory}, vol.~64, no.~5, pp. 3481--3492, 2018.

\bibitem{MG19}
\BIBentryALTinterwordspacing
J.~Ma and G.~Ge, ``Optimal binary linear locally repairable codes with disjoint
  repair groups,'' \emph{{SIAM} J. Discret. Math.}, vol.~33, no.~4, pp.
  2509--2529, 2019. [Online]. Available:
  \url{https://doi.org/10.1137/19M1248443}
\BIBentrySTDinterwordspacing

\bibitem{WZL19}
A.~Wang, Z.~Zhang, and D.~Lin, ``Bounds for binary linear locally repairable
  codes via a sphere-packing approach,'' \emph{IEEE Transactions on Information
  Theory}, vol.~65, no.~7, pp. 4167--4179, 2019.

\bibitem{Roth22}
R.~M. Roth, ``Asymptotic bounds on the rate of locally repairable codes,''
  \emph{IEEE Transactions on Information Theory}, vol.~68, no.~3, pp.
  1581--1598, 2022.

\bibitem{PKLK12}
N.~Prakash, G.~M. Kamath, V.~Lalitha, and P.~V. Kumar, ``Optimal linear codes
  with a local-error-correction property,'' in \emph{2012 IEEE International
  Symposium on Information Theory Proceedings}, 2012, pp. 2776--2780.

\bibitem{WZ14}
A.~Wang and Z.~Zhang, ``Repair locality with multiple erasure tolerance,''
  \emph{IEEE Transactions on Information Theory}, vol.~60, no.~11, pp.
  6979--6987, 2014.

\bibitem{GXY19}
V.~Guruswami, C.~Xing, and C.~Yuan, ``How long can optimal locally repairable
  codes be?'' \emph{IEEE Transactions on Information Theory}, vol.~65, no.~6,
  pp. 3662--3670, 2019.

\bibitem{CMST20}
H.~Cai, Y.~Miao, M.~Schwartz, and X.~Tang, ``On optimal locally repairable
  codes with super-linear length,'' \emph{IEEE Transactions on Information
  Theory}, vol.~66, no.~8, pp. 4853--4868, 2020.

\bibitem{CMST20b}
------, ``On optimal locally repairable codes with multiple disjoint repair
  sets,'' \emph{IEEE Transactions on Information Theory}, vol.~66, no.~4, pp.
  2402--2416, 2020.

\bibitem{KWG21}
X.~Kong, X.~Wang, and G.~Ge, ``New constructions of optimal locally repairable
  codes with super-linear length,'' \emph{IEEE Transactions on Information
  Theory}, vol.~67, no.~10, pp. 6491--6506, 2021.

\bibitem{XC22}
C.~Xing and C.~Yuan, ``Construction of optimal (r, δ)-locally recoverable
  codes and connection with graph theory,'' \emph{IEEE Transactions on
  Information Theory}, vol.~68, no.~7, pp. 4320--4328, 2022.

\bibitem{PD14}
D.~S. Papailiopoulos and A.~G. Dimakis, ``Locally repairable codes,''
  \emph{IEEE Transactions on Information Theory}, vol.~60, no.~10, pp.
  5843--5855, 2014.

\bibitem{MR22b}
\BIBentryALTinterwordspacing
F.~Maturana and K.~V. Rashmi, ``Bandwidth cost of code conversions in the split
  regime,'' \emph{CoRR}, vol. abs/2205.06793, 2022. [Online]. Available:
  \url{https://arxiv.org/abs/2205.06793}
\BIBentrySTDinterwordspacing

\end{thebibliography}

\appendices

\section{Properties of $\{x^sg^{t}(x)\}_{0\leq s\leq r-1 \atop 0\leq t\leq k-1}$}

Let $k$ and $r$ be positive integers. For $1\leq i\leq k$, let $A_i=\{a_{i,1},\ldots,a_{i,r+1}\}\subset\mathbb{F}_q$ such that $A_i\cap A_i'=\emptyset$ for $i\neq i'$. Denote $A=\bigcup_{i=1}^{k}A_i$ and $\tilde{A}_i=A_i\setminus\{a_{i,r+1}\}$. Let $g(x)\in\mathbb{F}_q[x]$ be a polynomial of degree $r+1$ such that $g$ is constant on each $A_i$. Denote $g_i$ as the constant $g(A_i)$ and denote $G=\{g_1,\ldots,g_k\}$. For every $(i,j)\in [k]\times [r+1]$, define $\mathbf{a}_{i,j}\in\mathbb{F}_q^{kr}$ as  
$$\mathbf{a}_{i,j}=(a_{i,j}^{s}g^{t}(a_{i,j}),~(s,t)\in [0,r-1]\times [0,k-1])^{T}.$$
%$$\mathbf{a}_{i,j}(sr+t)=g^{s}(a_{i,j})a_{i,j}^{t-1},$$ $0\leq s\leq k-1$ and $1\leq t\leq r$.

\begin{lemma}\label{lem-a1}
\begin{itemize}
    \item [1)] $\{\mathbf{a}_{i,j}\}_{1\leq i\leq k \atop 1\leq j\leq r}$ are linearly independent over $\mathbb{F}_q^{kr}$.
    \item [2)] For $c\in \mathbb{F}_q$, denote $\mathbf{c}\in \mathbb{F}_q^{kr}$ as the vector $\mathbf{c}=(c^{s}g^{t}(c),~(s,t)\in [0,r-1]\times [0,k-1])^{T}.$
    %satisfying $\mathbf{c}(sr+t)=g^{s}(c)c^{t-1}$, $0\leq s\leq k-1$ and $1\leq t\leq r$. 
    Then, we have
    \begin{equation}\label{eqa-1-1}
     \mathbf{c}=\sum_{(i,j)\in [k]\times [r]}
     \frac{h_{G\setminus\{g_{i}\}}(g(c))}{h_{G\setminus\{g_{i}\}}(g_{i})}
     \frac{h_{\tilde{A}_i\setminus\{a_{i,j}\}}(c)}{h_{\tilde{A}_i\setminus\{a_{i,j}\}}(a_{i,j})}
     \mathbf{a}_{i,j}.
    \end{equation}
    \item [3)] For every $1\leq i\leq k$, denote $\tilde{\mathbf{A}}_i$ as the $r\times r$ Vandermonde matrix generated by $\tilde{A}_{i}$. Then, we have \begin{equation}\label{eqa-1-2}
    \left(\begin{array}{cccc}
       \tilde{\mathbf{A}}_{1} & \tilde{\mathbf{A}}_{2} & \cdots & \tilde{\mathbf{A}}_{k}\\
       g_1\tilde{\mathbf{A}}_{1} & g_2\tilde{\mathbf{A}}_{2} & \cdots & g_k\tilde{\mathbf{A}}_{k}\\
       \vdots & \vdots &  & \vdots\\
       g_1^{k-2}\tilde{\mathbf{A}}_{1} & g_2^{k-2}\tilde{\mathbf{A}}_{2} & \cdots & g_k^{k-2}\tilde{\mathbf{A}}_{k}
    \end{array}\right)\cdot
    \left(\begin{array}{c}
       \frac{\tilde{\mathbf{A}}_{1}^{-1}}{h_{G\setminus\{g_1\}}(g_1)} \\
       \frac{\tilde{\mathbf{A}}_{2}^{-1}}{h_{G\setminus\{g_2\}}(g_2)}\\
       \vdots \\
       \frac{\tilde{\mathbf{A}}_{k}^{-1}}{h_{G\setminus\{g_k\}}(g_k)} 
    \end{array}\right)=\mathbf{0}_{kr\times r}.
    \end{equation}
\end{itemize}
\end{lemma}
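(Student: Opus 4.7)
All three parts reduce to Lagrange interpolation identities at two independent groups of nodes: inside each block at $\tilde{A}_i=\{a_{i,1},\ldots,a_{i,r}\}$, and across blocks at $G=\{g_1,\ldots,g_k\}$. As a preliminary I record that the $g_i$ are pairwise distinct: if $g_i=g_{i'}$ with $i\ne i'$, then $g-g_i$ would vanish on the disjoint union $A_i\cup A_{i'}$ of size $2(r+1)$, contradicting $\deg g=r+1$. Throughout I write $L_i(y)=h_{G\setminus\{g_i\}}(y)/h_{G\setminus\{g_i\}}(g_i)$ and $M_{i,j}(x)=h_{\tilde{A}_i\setminus\{a_{i,j}\}}(x)/h_{\tilde{A}_i\setminus\{a_{i,j}\}}(a_{i,j})$ for the Lagrange bases at $G$ and $\tilde{A}_i$, so that $L_i(g_{i'})=\delta_{i,i'}$ and $M_{i,j}(a_{i,j'})=\delta_{j,j'}$ for $j,j'\in[r]$.

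For part~(1), I will consider the evaluation map $\Phi$ sending $f=\sum_{s,t}c_{s,t}x^{s}g^{t}$ to $(f(a_{i,j}))_{(i,j)\in[k]\times[r]}$. Setting $P_s(y)=\sum_{t}c_{s,t}y^{t}$, one has $f(a_{i,j})=\sum_{s}a_{i,j}^{s}P_s(g_i)$. If $\Phi(f)=\mathbf{0}$, then fixing $i$ and letting $j$ range over $[r]$ yields the Vandermonde system $\tilde{\mathbf{A}}_i^{T}(P_0(g_i),\ldots,P_{r-1}(g_i))^{T}=\mathbf{0}$; since $\tilde{\mathbf{A}}_i$ is invertible, $P_s(g_i)=0$ for every $s$ and $i$. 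Because $\deg P_s\le k-1$ and the $g_i$ are $k$ distinct points, every $P_s$ vanishes identically, forcing $f=0$. As both the domain and codomain of $\Phi$ have dimension $kr$, $\Phi$ is an isomorphism; the matrix whose $(i,j)$-th column is $\mathbf{a}_{i,j}$ is the transpose of the matrix of $\Phi$ in the stated bases, so its columns are linearly independent.

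For part~(2), the $(s,t)$-coordinate of~(\ref{eqa-1-1}) reads
\begin{equation*}
c^{s}g(c)^{t}=\sum_{i=1}^{k}L_i(g(c))\,g_i^{t}\sum_{j=1}^{r}M_{i,j}(c)\,a_{i,j}^{s}.
\end{equation*}
Since $\deg x^{s}\le r-1$ and $|\tilde{A}_i|=r$, Lagrange interpolation at $\tilde{A}_i$ yields $\sum_{j}M_{i,j}(c)\,a_{i,j}^{s}=c^{s}$, independently of $i$. Pulling this factor out of the outer sum and applying Lagrange interpolation of $y^{t}$ at $G$ (valid because $\deg y^{t}\le k-1$) gives $\sum_{i}L_i(g(c))\,g_i^{t}=g(c)^{t}$, and multiplying the two yields $c^{s}g(c)^{t}$ as required.

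For part~(3), direct block multiplication shows that the $(t,i)$-contribution simplifies via $\tilde{\mathbf{A}}_i\tilde{\mathbf{A}}_i^{-1}=\mathbf{I}_r$, so the $t$-th block row of the product equals $\bigl(\sum_{i=1}^{k}g_i^{t}/h_{G\setminus\{g_i\}}(g_i)\bigr)\mathbf{I}_r$. Thus the whole claim collapses to the scalar identity
\begin{equation*}
\sum_{i=1}^{k}\frac{g_i^{t}}{h_{G\setminus\{g_i\}}(g_i)}=0\quad\text{for }0\le t\le k-2,
\end{equation*}
which follows by comparing the coefficient of $y^{k-1}$ on both sides of the Lagrange expansion $y^{t}=\sum_{i}L_i(y)\,g_i^{t}$: the left side contributes $0$ because $t\le k-2$, while the right side contributes exactly the displayed sum. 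I do not foresee any serious obstacle; the only delicate point is the distinctness of the $g_i$, which I handle at the outset, after which all three statements are routine manipulations with the Lagrange bases at $\tilde{A}_i$ and $G$.
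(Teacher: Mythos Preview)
Your proof is correct and, for parts~(1) and~(3), essentially identical to the paper's argument: both establish~(1) by showing the evaluation map is injective via a two-step Vandermonde reduction (first in the $x$-variable, then in the $g$-variable), and both collapse~(3) to the scalar identity $\sum_i g_i^{t}/h_{G\setminus\{g_i\}}(g_i)=0$ for $0\le t\le k-2$ after cancelling $\tilde{\mathbf{A}}_i\tilde{\mathbf{A}}_i^{-1}$ in each block.

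For part~(2) you take a slightly different and arguably cleaner route. The paper argues dually: knowing from~(1) that $\{\mathbf{a}_{i,j}\}$ is a basis, it writes $\mathbf{c}=\sum\eta_{i,j}\mathbf{a}_{i,j}$ with unknown $\eta_{i,j}$ and then extracts each coefficient by pairing against the test polynomial $f_{i,j}(x)=h_{G\setminus\{g_i\}}(g(x))\,h_{\tilde{A}_i\setminus\{a_{i,j}\}}(x)$, which vanishes at every $a_{i',j'}$ with $(i',j')\ne(i,j)$. You instead verify the identity directly, coordinate by coordinate, factoring the right-hand side as $\bigl(\sum_j M_{i,j}(c)a_{i,j}^s\bigr)\bigl(\sum_i L_i(g(c))g_i^t\bigr)$ and applying Lagrange interpolation at $\tilde{A}_i$ and at $G$ separately. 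Your approach avoids invoking~(1) as a prerequisite for~(2) and makes the two-level interpolation structure completely transparent; the paper's approach has the mild advantage of exhibiting the dual basis explicitly, which can be reused elsewhere. Your preliminary check that the $g_i$ are pairwise distinct is a useful addition that the paper leaves implicit.
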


\begin{proof}
There is a one-to-one correspondence between a polynomial $f$ in $\text{Span}_{\mathbb{F}_q}\{x^sg^{t}(x)\}_{0\leq s\leq r-1 \atop 0\leq t\leq k-1}$ and its coefficient vector $\mathbf{v}_f\in \mathbb{F}_q^{kr}$ w.r.t. basis $\{x^sg^{t}(x)\}_{0\leq s\leq r-1 \atop 0\leq t\leq k-1}$. Clearly, for every $c\in \mathbb{F}_q$, $f(c)=\mathbf{v}_f\cdot \mathbf{c}$. Next, we proceed to prove the three results separately.

\begin{itemize}
    \item [1.] Let $\mathbf{A}$ be the $kr\times kr$ matrix with $\mathbf{a}_{i,j}$ as its $((i-1)r+j)$-th column. Next, we prove the first result by showing that $rank(\mathbf{A})=kr$. 
    
    Assume that there is a vector $\mathbf{v}_f\in \mathbb{F}_q^{kr}$ such that $\mathbf{v}_f\cdot\mathbf{A} =\mathbf{0}$. Thus, $f(a_{i,j})=0$ for every $(i,j)\in [k]\times [r]$. For each $1\leq i\leq k$, denote 
    $$f_{i}(x)=\sum_{j=1}^{r}\left(\sum_{s=1}^{k}f_{s,j}g_i^{s-1}\right)x^{j-1}.$$
    Then, we have $f_{i}(a_{i,j})=0$ for every $(i,j)\in [k]\times [r]$. Note that $f_{i}(x)$ is a polynomial of degree at most $r-1$. Thus, by $f_i(a_{i,j})=0$ for every $1\leq j\leq r$, $f_i$ is the zero polynomial. This implies that $\sum_{s=1}^{k}f_{s,j}g_i^{s-1}=0$ for every $(i,j)\in [k]\times [r]$. Denote $\tilde{f}_j(x)=\sum_{s=1}^{k}f_{s,j}x^{s-1}$. Since $g_{i}$'s are pairwise distinct and $\deg(\tilde{f}_j)\leq k-1$, we can obtain $f_{s,j}=0$ for every $(s,j)\in [k]\times [r]$. Therefore, we can conclude that $\mathbf{v}_f=\mathbf{0}$ and $rank(\mathbf{A})=kr$.
    \item [2.] Since $\{\mathbf{a}_{i,j}\}_{1\leq i\leq k \atop 1\leq j\leq r}$ are linearly independent over $\mathbb{F}_q$, $\mathbf{c}$ can be uniquely expressed as a linear combination of $\{\mathbf{a}_{i,j}\}_{1\leq i\leq k \atop 1\leq j\leq r}$. Assume that $\mathbf{c}=\sum_{(i,j)\in [k]\times[r]}\eta_{i,j}\mathbf{a}_{i,j}$. Next, we show that
    \begin{equation}\label{eqa-1-3}
    \eta_{i,j}=\frac{h_{G\setminus\{g_{i}\}}(g(c))}{h_{G\setminus\{g_{i}\}}(g_{i})}
     \frac{h_{\tilde{A}_i\setminus\{a_{i,j}\}}(c)}{h_{\tilde{A}_i\setminus\{a_{i,j}\}}(a_{i,j})}.
    \end{equation}
     
     For any $f\in \text{Span}_{\mathbb{F}_q}\{x^sg^{t}(x)\}_{0\leq s\leq r-1 \atop 0\leq t\leq k-1}$, $f(c)=\mathbf{v}_{f}\cdot \mathbf{c}^{T}$. Thus, $f(c)=\sum_{(i,j)\in [k]\times[r]}\eta_{i,j} f({a}_{i,j})$. For $(i,j)\in [k]\times[r]$, define 
     $$f_{i,j}(x)=h_{G\setminus\{g_{i}\}}(g(x))h_{\tilde{A}_i\setminus\{a_{i,j}\}}(x).$$
     Then,  $f_{i,j}(c)=h_{G\setminus\{g_{i}\}}(g(c))h_{\tilde{A}_i\setminus\{a_{i,j}\}}(c)$ and for $(s,t)\in [k]\times[r]$,
     $$
     f_{i,j}(a_{s,t})=\left\{
         \begin{array}{cc}
         h_{G\setminus\{g_{i}\}}(g_{i})h_{\tilde{A}_i\setminus\{a_{i,j}\}}(a_{i,j}) & \text{if}~(s,t)=(i,j);\\
         0 & \text{otherwise}.
         \end{array}
         \right.
     $$
     Thus, (\ref{eqa-1-3}) follows directly from $f_{i,j}(c)=\sum_{(s,t)\in [k]\times[r]}\eta_{s,t} f_{i,j}({a}_{s,t})$.
     \item [3.] To start with, we consider the toy case when $r=1$. When $r=1$, we have $\tilde{\mathbf{A}}_{i}=(1)$ and $$\left(\begin{array}{ccc}
       \tilde{\mathbf{A}}_{1} &  \cdots & \tilde{\mathbf{A}}_{k}\\
       g_1\tilde{\mathbf{A}}_{1} & \cdots & g_k\tilde{\mathbf{A}}_{k}\\
       \vdots &  & \vdots\\
       g_1^{k-2}\tilde{\mathbf{A}}_{1} & \cdots & g_k^{k-2}\tilde{\mathbf{A}}_{k}
    \end{array}\right)=\left(\begin{array}{ccc}
       1 & \cdots & 1\\
       g_1 & \cdots & g_k\\
       \vdots &  & \vdots\\
       g_1^{k-2} & \cdots & g_k^{k-2}
    \end{array}\right).$$
    Note that the RHS of the above equality is the $(k-1)\times k$ Vandermonde matrix generated by $G$. Therefore, by Cramer's rule, we have 
    $$\left(\begin{array}{ccc}
       1 & \cdots & 1\\
       g_1 & \cdots & g_k\\
       \vdots &  & \vdots\\
       g_1^{k-2} & \cdots & g_k^{k-2}
    \end{array}\right)\cdot
    \left(\begin{array}{c}
       \frac{1}{h_{G\setminus\{g_1\}}(g_1)} \\
       \vdots \\
       \frac{1}{h_{G\setminus\{g_k\}}(g_k)} 
    \end{array}\right)=\mathbf{0}.$$
    This confirms (\ref{eqa-1-2}) when $r=1$.
    
    For $r\geq 1$, the LHS of (\ref{eqa-1-2}) equals to
    $$
    \left(\begin{array}{cccc}
       \mathbf{I} & \mathbf{I} & \cdots & \mathbf{I}\\
       g_1\mathbf{I} & g_2\mathbf{I} & \cdots & g_k\mathbf{I}\\
       \vdots & \vdots &  & \vdots\\
       g_1^{k-2}\mathbf{I} & g_2^{k-2}\mathbf{I} & \cdots & g_k^{k-2}\mathbf{I}
    \end{array}\right)\cdot
    \left(\begin{array}{c}
       \frac{\mathbf{I}}{h_{G\setminus\{g_1\}}(g_1)} \\
       \frac{\mathbf{I}}{h_{G\setminus\{g_2\}}(g_2)}\\
       \vdots \\
       \frac{\mathbf{I}}{h_{G\setminus\{g_k\}}(g_k)} 
    \end{array}\right).
    $$
    Then, (\ref{eqa-1-2}) follows from the case when $r=1$.
\end{itemize}
\end{proof}

\end{document}